\newcommand\vldbdoi{XX.XX/XXX.XX}
\newcommand\vldbpages{XXX-XXX}
\newcommand\vldbvolume{17}
\newcommand\vldbissue{1}
\newcommand\vldbyear{2023}
\newcommand\vldbauthors{\authors}
\newcommand\vldbtitle{\shorttitle} 
\newcommand\vldbavailabilityurl{https://github.com/DujianDing/AQUAPRO}
\newcommand\vldbpagestyle{plain} 
\let\oldnl\nl
\newcommand{\nonl}{\renewcommand{\nl}{\let\nl\oldnl}}
\let\OLDthebibliography\thebibliography
\renewcommand\thebibliography[1]{
  \OLDthebibliography{#1}
  \setlength{\parskip}{0pt}
  \setlength{\itemsep}{0pt plus 0.3ex}
}
\newcommand{\squishlist}{
 \begin{list}{$\bullet$}
  { \setlength{\itemsep}{0pt}
     \setlength{\parsep}{3pt}
     \setlength{\topsep}{3pt}
     \setlength{\partopsep}{0pt}
     \setlength{\leftmargin}{1.5em}
     \setlength{\labelwidth}{1em}
     \setlength{\labelsep}{0.5em} } }
\newcommand{\squishlisttwo}{
 \begin{list}{$\bullet$}
  { \setlength{\itemsep}{0pt}
    \setlength{\parsep}{0pt}
    \setlength{	opsep}{0pt}
    \setlength{\partopsep}{0pt}
    \setlength{\leftmargin}{2em}
    \setlength{\labelwidth}{1.5em}
    \setlength{\labelsep}{0.5em} } }
\newcommand{\squishend}{
  \end{list}  }
\newcommand{\pmt}{\mathbf{\Pi}} 
\newcommand{\oracle}{O}
\newcommand{\proxy}{P}
\newcommand{\ours}{\textsc{Aquapro}\xspace}
\newcommand{\pqa}{PQA\xspace}
\newcommand{\pqe}{PQE\xspace}
\newcommand{\csa}{CSC\xspace}
\newcommand{\cse}{CSE\xspace}
\newcommand{\supg}{SUPG\xspace}
\newcommand{\bltopk}{Top-K\xspace}
\newcommand{\scantest}{Scan2Test\xspace}
\newcommand{\sampletest}{Sample2Test\xspace}
\newcommand{\reals}{\mathbf{R}}
\newcommand{\aup}{\overline{a}}  
\newcommand{\alow}{\underline{a}} 
\newcommand{\tprlow}[1]{\underline{TPR}(#1)}
\newcommand{\maxitr}{max\_itr}
\newcommand{\topkproxy}[1]{D_{#1}}
\newcommand{\optdepth}{k^*}
\newcommand{\optdepthtop}{\overline{k}}
\newcommand{\sampleoptdepth}[1]{\overline{k}_{#1}}
\newcommand{\topphi}[1]{D_{#1}}
\newcommand{\topproxy}[1]{D_{#1}}
\newcommand{\precis}{M_p}
\newcommand{\recall}{M_r}
\newcommand{\measure}{M}
\newcommand{\measurecomp}{\overline{M}}
\newcommand{\closure}{C}
\newcommand{\mlow}{\underline{m}}
\newcommand{\mlowlow}{\underline{\underline{m}}}
\newcommand{\clow}{\underline{c}}
\newcommand{\mud}{\mu_{D}}
\newcommand{\nearneighbor}{NN^O}
\newcommand{\pos}{PoS}
\newcommand{\poi}{f}
\newcommand{\indexproxy}{I}
\newcommand{\eou}{EOC}
\newcommand{\epsr}{\epsilon_{r}}
\newcommand{\deltar}{\delta_{r}}
\newcommand{\sunion}{\mathcal{S}}
\newcommand{\kunion}[1][\mathcal{S}]{k_{#1}}
\newcommand{\query}{AOS-FRNN }
\newenvironment{proofsketch}{%
  \proof}{\endproof}
\newtheorem{lemma}{Lemma}
\newtheorem{assumption}{Assumption}
\newtheorem{fact}{Fact}
\newtheorem{claim}{Claim}
\newtheorem{problem}{Problem}
\newtheorem{example}{Example}
\DeclareMathOperator*{\argmax}{argmax}
\newcommand{\sihem}[1] {{\color{blue}S: #1}}
\newcommand{\dujian}[1]{\textcolor{black}{#1}}
\newcommand{\eat}[1]{}
\begin{document}
\title{On Efficient Approximate Queries over Machine Learning Models}

\author{Dujian Ding}
\affiliation{%
  \institution{University of British Columbia}
  \city{Vancouver}
  \state{Canada}
}
\email{dujian@cs.ubc.ca}

\author{Sihem Amer-Yahia}
\affiliation{%
  \institution{CNRS, Univ. Grenoble Alpes}
  \city{Grenoble}
  \state{France}
}
\email{sihem.amer-yahia@cnrs.fr}

\author{Laks VS Lakshmanan}
\affiliation{%
  \institution{University of British Columbia}
  \city{Vancouver}
  \state{Canada}
}
\email{laks@cs.ubc.ca}

\renewcommand{\shortauthors}{Dujian, Sihem, Laks}

\begin{abstract}
The question of answering queries over ML predictions has been gaining attention in the database community. This question is challenging because finding high quality answers  by invoking an \textit{oracle} such as a human expert or an expensive deep neural network model on every single item in the DB and then applying the query, can be prohibitive. We develop a novel unified   framework for approximate query answering by leveraging a \textit{proxy} to minimize the oracle usage of finding high quality answers for both Precision-Target (PT) and Recall-Target (RT) queries. 
Our framework uses a judicious combination of invoking the expensive oracle on data samples and applying the cheap proxy on the DB objects. 
It relies on two assumptions.  Under the \textsc{Proxy Quality} assumption, 
we develop two algorithms: \pqa that efficiently finds high quality answers with high probability and no oracle calls, and \pqe, a heuristic extension that achieves empirically good performance with a small number of oracle calls. 
Alternatively, under the \textsc{Core Set Closure} assumption, we develop two algorithms: \csa that efficiently returns high quality answers with high probability and minimal oracle usage, and \cse, which extends it to more general settings. Our extensive experiments on five real-world datasets on both query types, PT and RT, demonstrate that our algorithms outperform the state-of-the-art and achieve high result quality with provable statistical guarantees. 
\end{abstract}

\maketitle

\pagestyle{\vldbpagestyle}
\begingroup\small\noindent\raggedright\textbf{PVLDB Reference Format:}\\
\vldbauthors. \vldbtitle. PVLDB, \vldbvolume(\vldbissue): \vldbpages, \vldbyear.\\
\href{https://doi.org/\vldbdoi}{doi:\vldbdoi}
\endgroup
\begingroup
\renewcommand\thefootnote{}\footnote{\noindent
This work is licensed under the Creative Commons BY-NC-ND 4.0 International License. Visit \url{https://creativecommons.org/licenses/by-nc-nd/4.0/} to view a copy of this license. For any use beyond those covered by this license, obtain permission by emailing \href{mailto:info@vldb.org}{info@vldb.org}. Copyright is held by the owner/author(s). Publication rights licensed to the VLDB Endowment. \\
\raggedright Proceedings of the VLDB Endowment, Vol. \vldbvolume, No. \vldbissue\ %
ISSN 2150-8097. \\
\href{https://doi.org/\vldbdoi}{doi:\vldbdoi} \\
}\addtocounter{footnote}{-1}\endgroup

\ifdefempty{\vldbavailabilityurl}{}{
\vspace{.3cm}
\begingroup\small\noindent\raggedright\textbf{PVLDB Artifact Availability:}\\
The source code, data, and/or other artifacts have been made available at \url{\vldbavailabilityurl}.
\endgroup
}

\vspace*{-1ex} 
\section{Introduction} \label{sec:intro} 
Several applications at the frontier of databases (DBs) and machine learning (ML)  require support for query processing over ML models. In image retrieval for instance, querying a DB corresponds to finding images whose neural representations are close to an input query image, given a distance measure \cite{DBLP:journals/pvldb/KangEABZ17,DBLP:journals/pvldb/KangGBHZ20}. 
Similarly, in the medical domain, a typical query would look for patients  whose predicted clinical condition is similar to an input patient (see Figure \ref{fig:example}) using a Deep Neural Network (DNN)~\cite{DBLP:conf/cikm/RodriguesPGA20,DBLP:journals/isci/RodriguesGSBA21}. 
A straightforward way of answering these queries is to apply  the neural models exhaustively on all objects (e.g., images or patients) in the DB, and then return the objects that satisfy the query. \dujian{This is prohibitive   because applying DNNs and involving human expertise are both expensive.}  
\textit{In this paper, we propose an approximate query processing approach with provable guarantees that leverages a cheap proxy for the neural model} and uses a judicious combination of invoking the expensive oracle model on data samples and applying the cheap proxy on the DB. 
\begin{figure} [!tbp]
    \includegraphics[width=0.4\textwidth]{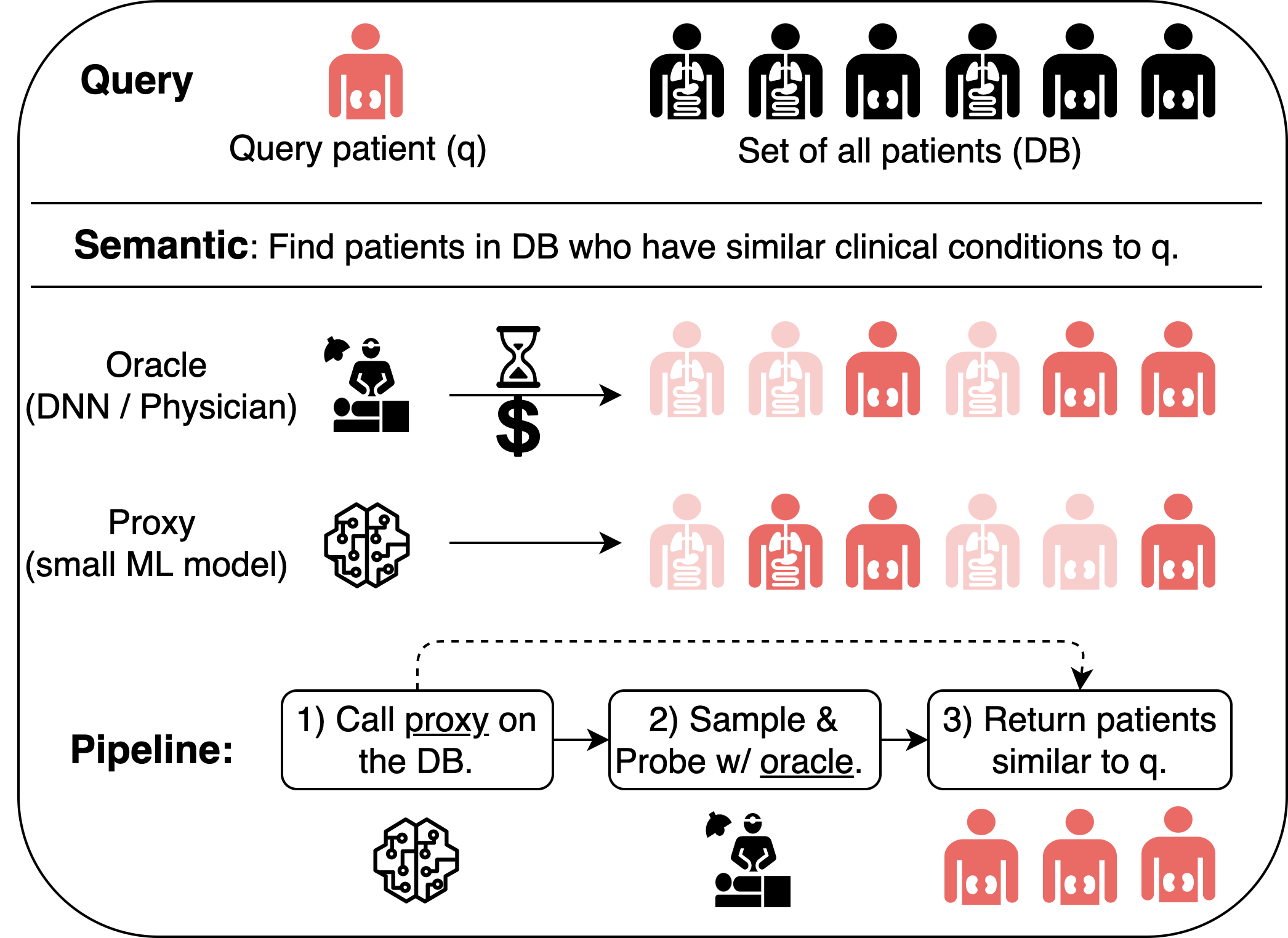}
      \vspace*{-2mm}
  \caption{Query over ML predictions in medical domain.}
        \vspace*{-3mm}
  \label{fig:example}
\end{figure}

The main focus of query processing over ML models has been to ensure efficiency without compromising accuracy~\cite{9094012}.  One line of work, query inference, provides native relational support for ML operators using containerized solutions such as Amazon Aurora,\footnote{\url{https://aws.amazon.com/fr/sagemaker/}} or in-application solutions such as Google's BigQuery ML\footnote{\url{https://cloud.google.com/bigquery-ml/docs}} and Microsoft's Raven~\cite{DBLP:conf/cidr/KaranasosIPSPPX20}. 
Another line develops adaptive predictions for NNs by pruning examples based on their classification  in early layers~\cite{DBLP:conf/icml/BolukbasiWDS17}. \textit{Our aim is to enable queries in a way that is agnostic to the underlying prediction model. Hence, we develop an in-application approach where queries can be invoked on any  ML prediction model. }

\dujian{Recent work ~\cite{DBLP:journals/pvldb/KangEABZ17,DBLP:conf/sigmod/LuCKC18,10.1145/3448016.3452786,DBLP:journals/pvldb/KangGBHZ20}
proposes to use cheap \textit{proxy} models that approximate ground truth oracle labels. 
Proxies are small neural models that either provide a confidence score~\cite{DBLP:journals/pvldb/KangGBHZ20,10.14778/3547305.3547310,DBLP:conf/sigmod/LuCKC18} or distribution~\cite{10.1145/3448016.3452786} for their predicted labels.
Probabilistic predicates (PP)~\cite{DBLP:conf/sigmod/LuCKC18} and CORE~\cite{10.14778/3547305.3547310} employ light-weight proxies to filter out unpromissing objects and empirically improve data reduction rates in query execution plans. Probabilistic Top-K~\cite{10.1145/3448016.3452786} trains proxy models to generate oracle label distribution and delivers approximate Top-K solutions.
}
Recently, in~\cite{DBLP:journals/pvldb/KangGBHZ20}, the authors study queries with a minimum precision target (PT) or recall target (RT), and a fixed user-specified budget on the number of oracle calls. However, 
\dujian{(i) setting an oracle budget is hard to get right. An underestimated budget may lead to trivial answers while overestimation causes unnecessary oracle usage}, and (ii) setting \textit{only} a minimum precision \textit{or} \textit{only} a minimum recall target, runs the risk of returning valid but uninformative answers: for RT, returning all  objects in the DB is valid but has very poor precision; 
for PT, returning the empty set is valid but it has zero recall and is hence not useful in practice. 

In this work, we consider oracle/proxy models with multi dimensional outputs. 
We propose a more useful problem of minimizing  oracle usage for finding answers that meet a precision or recall target with provable statistical guarantees while achieving a maximal complementary rate (CR). The CR for an RT (resp. PT) query is precision (resp. recall). 
More formally, given a PT (resp. RT) query, we seek 
answers that (1) satisfy a target precision (resp. recall) 
with a probability higher than a desired threshold and (2) incur a minimal number of oracle calls, and (3) achieve the maximal CR subject to the oracle usage incurred in (2). We aim to minimize oracle usage since the oracle is significantly more expensive than the proxy.

Our problem raises three challenges: (1) identify high quality answers with statistical guarantees, (2) design strategies that exactly or approximately  minimize oracle usage, and (3) achieve maximal CR subject to (2).
We develop a class of strategies that are agnostic to the prediction model and are applicable both to RT and PT queries. 
The key idea of our approach is to approximate an oracle with a cheaper \textit{proxy} model~\cite{DBLP:journals/pvldb/KangEABZ17,DBLP:journals/pvldb/KangGBHZ20}. 
In practice, the proxy could be a smaller and lower latency neural model. 
We consider a general pipeline for query answering which consists of three stages: (1) apply proxy on the DB, (2) sample \& probe with oracle, (3) compute and return answers (see Figure~\ref{fig:example}). 
We instantiate our pipeline
under two alternative assumptions. Under the \textsc{Proxy Quality} assumption, the proxy quality w.r.t. the oracle is quantified in a probabilistic manner which allows us to return high quality answers right after applying the proxy on the DB. We develop Algorithm \pqa which efficiently finds high probability valid answers of maximal expected CR with zero oracle calls. We additionally design Algorithm \pqe, a heuristic extension to \pqa, to calibrate the correlation between the oracle and the proxy by incurring some oracle calls.  If the proxy quality is hard to quantify, we have the \textsc{Core Set Closure} assumption under which we uniformly sample and probe a subset of objects to estimate valid answers to a given query. We introduce the notion of \textit{core set} to find the optimal sample size and number of samples so as to ensure a minimal expected oracle usage to identify a valid answer with high probability. We use the proxy to improve answer CR heuristically. This leads to Algorithm \csa, which efficiently returns high probability valid answers with a minimal expected number of oracle calls, and an empirically good CR. We also design Algorithm \cse, a generalization of \csa, which calibrates core sets with extra oracle calls and ensures high success probability. 

\dujian{We conduct experiments on five real-world datasets and  compare our algorithms to four baselines from recent work: (1) \supg~\cite{DBLP:journals/pvldb/KangGBHZ20}, (2) \bltopk~\cite{10.1145/3448016.3452786}, a probabilistic Top-K approach that uses oracle score distribution to deliver approximate Top-K answers, (3) \sampletest, a sample-based baseline adapted from the literature~\cite{DBLP:conf/sigmod/LuCKC18}, and (4) \scantest, a simple baseline that returns answers by applying oracle on all objects, which is also compared with in ~\cite{10.1145/3448016.3452786}.
}
Our experiments demonstrate that our algorithms find high quality answers with statistical guarantees \dujian{even  when baselines fail}. 
More specifically, we analyze \pqa and verify the optimality of its CR and success probability guarantee under the {\sc Proxy Quality} assumption.
\dujian{We compare \pqa with \bltopk on a synthetic dataset and demonstrate that \pqa returns high quality answers with zero oracle call while \bltopk incurs a huge oracle cost.}
We analyze \csa to demonstrate its minimal oracle usage and success probability guarantee under the {\sc Core Set Closure} assumption.
We compare \pqe, \cse, and \dujian{baselines} in terms of success probability and CR, under various oracle settings. We find that for RT queries, \cse has the best oracle efficiency and for PT queries, \pqe is the most oracle efficient approach. Finally, we study scalability and find that \cse is the most efficient approach outperforming \dujian{the strongest baseline} by up to $87.5\%$. 

In sum, we make the following contributions. 
\squishlist 
\item We propose the problem of answering PT and RT queries with minimal oracle usage and maximal CR while meeting precision or recall targets with high probability (\S~\ref{sec:problems}). 
\item We propose two assumptions ({\sc Proxy Quality} and {\sc Core Set Closure}), around which we develop four algorithms (\pqa, \pqe, \csa, and \cse) to solve the problem efficiently (\S~\ref{sec:algos}).
\item We run extensive experiments on five real-world datasets (\S~\ref{sec:experiment}) and show  that: (i) our approaches yield valid answers with high probability; (ii) our approaches significantly outperform the state of the art w.r.t.  CR and cost. 
\eat{We conduct extensive experiments on five real-world datasets to examine our solutions in terms of success probability, CR, and time efficiency.
\begin{itemize}
    \item[a)] For success probability, all our approaches deliver valid answers with a high probability, while \supg can fail the target by a margin of $10\%$.
    \item[b)] For CR, \cse outperforms \supg up to $33\%$ for RT queries and \pqe outperforms \supg up to $12\%$ for PT queries. 
    \item[c)] For time efficiency, \cse saves up to $87.5\%$ overall time in comparison to \supg for RT queries, and \pqe saves up to $38.66\%$ in comparison to \supg for PT queries.
\end{itemize}
}

\squishend 

\eat{In summary, we make the following contributions:
\squishlist
    \item Formalize the problem of answering approximate PT and RT queries as minimizing cost under the constraint of achieving a target CR with statistical guarantees (\S~\ref{sec:problems}).
    \item Develop algorithms that follow a class of strategies based on sample and probe. 
    \item Show that our algorithms address the combinatorial nature of our problem with theoretical guarantees under some assumptions (\S~\ref{sec:approach} and \S~\ref{sec:algos}). 
Due to limited space, we only provide proof sketches. The complete details can be found in \cite{fullVersion}. 
    \item Run extensive experiments on five real-world datasets examine our solutions in terms of CR, success probability guarantee, and response time, demonstrating that our solutions  outperform existing baselines and that our assumptions allow us to attain higher performance (\S~\ref{sec:experiment}).   
\squishend 
}
{\sl Complete details of proofs \dujian{as well as additional experiments} can be found in the full version \cite{aquaprofull}. } 

\eat{
Intuitively, our solution applies $m$ rounds of sampling, each of which has a fixed size $s$. Following that, we develop \ours-RT and \ours-PT, two algorithms for solving RT and PT queries respectively.

To ensure good CR, we use a proxy model that approximates the oracle. We sort all objects in the dataset $D$ on their proxy scores, with respect to the query object of interest. For a given query, this sorted sequence contains a \textit{core} subsequence $D_{\mathit{core}} \subseteq D$ with high recall and precision. For a recall target value $rt$, $D_{\mathit{core}}$ is the smallest subsequence of $D$ such that $Recall(D_{\mathit{core}}) \geq rt$. For a precision target value $pt$, $D_{\mathit{core}}$ is the largest subsequence of $D$ such that $Precision(D_{\mathit{core}}) \geq pt$. By \textit{tightly} estimating $D_{\mathit{core}}$ with respect to either precision or recall targets, we ensure that the CR are high.

Given a query, the cost minimization problem decides the smallest number of oracle invocations needed for a tight estimation of its $D_{\mathit{core}}$, with respect to recall or precision targets. For RT queries, since any superset of $D_{\mathit{core}}$ satisfies the target recall by definition, it is of interest to estimate the \textit{smallest superset} of $D_{\mathit{core}}$. For PT queries, a subset of $D_{\mathit{core}}$ is not guaranteed to have higher precision. Given that, we are interested in ensuring a high precision in addition to estimating the \textit{largest subset} of $D_{\mathit{core}}$. 

For RT queries, the probability and the cost of finding a superset of $D_{\mathit{core}}$ depend on the sample size, the number of samples, and query selectivity. When query selectivity is unknown, we need to estimate it, which introduces extra cost and errors. The error tolerance of selectivity estimation should be decided carefully with respect to the superset estimation error, which in turn relies on query selectivity! Such mutual dependence  results in a multi-level, non-convex, and non-continuous optimization problem \cite{DBLP:journals/tec/SinhaMD18}. To unravel this dependence, we use clustering \cite{torn1986clustering} which serves as a general purpose global optimization algorithm that has been shown to be effective in a wide range of applications \cite{DBLP:journals/jgo/EndresSF18,torn1994tgo,torn1986clustering}. 

For PT queries, the challenge is to identify the largest subset of $D_{\mathit{core}}$ ensuring high precision rates with a minimal cost. For a given query, we perform pilot sampling to estimate the underlying distribution of qualified objects, based on which we can approximate the largest subset of $D_{\mathit{core}}$ by dynamic programming. \eat{(see Section~\ref{sec: pt_alg}).} Since the precision of this identified subset could be low due to the variance in our pilot sampling, we improve the precision by invoking the oracle and ensure statistical guarantees through concentration bounds. The application of these bounds requires extra oracle invocations. The optimization problem of how to choose the best allocation of oracle invocations for these two purposes is non-convex, and we employ a clustering approach to solve it.

Our algorithm provides a valid answer to any query in $O(|D|^2)$ time with an expected oracle cost that does not exceed $D - K*(1-prob)^(1/K)$, where $K$ takes on small values when query is selective or recall/precision target is high.  
When the proxy quality is known, i.e., if the difference between proxy and oracle distance is a random variable of a known distribution, we can compute the success probability of any subset exactly, we can compute the proxy distance for every object and then run our algorithm. Our algorithm can return a valid answer to any PT/RT-query with maximal expected CR with $0$ oracle invocation cost, in $O(|D|^3)$ time. For a given dataset, if the query selectivity is not known, we propose a simple sampling based procedure for estimating the selectivity using Hoeffding bounds. Our algorithm can return a valid answer to any query in $O(|D|^2)$ time, with an oracle invocation cost that is the smallest over a class of algorithms following the specific sample and probe approach.  
}
\vspace*{-1ex} 
\section{Problem Studied}
\label{sec:problems}

\subsection{Use Cases}
\label{sec:examples}

\vspace*{-1ex} 
\begin{example}[\textbf{Image Retrieval}] 
The problem is to find images similar to a query image  \cite{chen2021deep,deniziak2016content,zheng2018SIFT}. 
Metadata-based approaches use textual descriptions of images for quickly measuring  similarity, but their quality heavily relies on image annotations \cite{deniziak2016content}. Current approaches for content-based image retrieval are built upon deep neural networks which provide high accuracy but are computationally expensive. 
Our goal is to support efficient high quality approximate image retrieval queries \cite{yue2016deep}. 
\end{example}

\vspace*{-2ex} 
\begin{example}[\textbf{Preventive Medicine}]
One of the greatest obstacles of preventive medicine is the limited time a physician has \cite{love1994cancer,brull1999missed,franklin1959preventive,DBLP:conf/cikm/RodriguesPGA20}. 
Clinical Risk Prediction Models (CRPMs) are being developed to facilitate decision-making. CRPMs serve as prognosis prediction systems and predict the occurrence of specific diseases based on personalized medical records. 
Our goal is to extend queries to include CRPMs while offering statistical guarantees \cite{jose2021lig,choi2016doctor,li2020behrt}.
\end{example}

\vspace*{-2ex} 
\begin{example}[\textbf{Video Analytics}]
While DNNs have become effective for querying videos \cite{redmon2016look}, their inference cost becomes prohibitive as the model size increases. For example, to identify frames with a given class (e.g., ambulance) on a month-long traffic video, an advanced object detector such as YOLOv2 \cite{redmon2016yolo9000} needs about 190 GPU hours and \$380 for a cloud service \cite{hsieh2018focus}. A specialized model can achieve high efficiency, e.g., up to $340 \times$ faster than the full DNN, with \textit{sacrificed} accuracy \cite{DBLP:journals/pvldb/KangEABZ17}. Our goal is to efficiently generate high quality query answers  by balancing the use of expensive high-accuracy models and cheap low-accuracy proxies  \cite{hsieh2018focus,DBLP:journals/pvldb/KangEABZ17,DBLP:journals/pvldb/KangGBHZ20}. 
\end{example}


\vspace*{-1ex} 
\subsection{Query}
Our queries generalize \textit{Fixed-Radius Near Neighbor}  (FRNN) queries \cite{bentley1975frnn}. Given a dataset $D$, a query object $q$, a radius $r$, and a distance function $dist$,  an FRNN query asks for all \textit{near neighbors} of $q$ within radius $r$, i.e., 
    $NN(q,r) = \{x\in D \mid dist(x, q) \leq r\}.$  
In this paper, we are mainly interested in near neighbors of objects w.r.t. latent features, using a  distance function defined on these features. 
\dujian{In preventive medicine \cite{franklin1959preventive,DBLP:conf/cikm/RodriguesPGA20}, a latent feature may be the infection risk of a disease, which can be inferred from patient history, drug usage, and demographics.
}
Latent features  \dujian{can be discovered by human experts~\cite{DBLP:conf/sigmod/LuCKC18}} or powerful neural models, which we refer to as \textit{oracle}, denoted  $\oracle$. The near neighbors of query object $q$ w.r.t. $\oracle$ and radius $r$ are defined as 
    $NN^\oracle(q,r) = \{x\in D \mid dist(\oracle(x), \oracle(q)) \leq r\}$.  
We will use the notation $\nearneighbor$ when the query object $q$ and radius $r$ are clear from the context.
An object $x \in D$ is an \textit{oracle neighbor} of a query object w.r.t. radius $r$ if $x \in \nearneighbor$. 
Retrieving the exact $\nearneighbor$ requires calling the oracle on every single object in the DB, which is prohibitively expensive. Instead, we are interested in finding \textit{high quality} answers   \textit{with high probability} (w.h.p.).

For any subset $S \subseteq D$, we denote by $N_S = |S \cap \nearneighbor|$  the number of oracle neighbors in $S$. Define:
\begin{small}
\begin{equation}
\begin{split}
    \precis(S) &= N_S/|S| \hspace{0.5cm}
    \recall(S) = N_S/|\nearneighbor|
\end{split}
\end{equation}
\end{small}
A query specifies a user-given measure $\measure$, which can be either $\precis$ for precision or $\recall$ for recall, to measure answer quality, and a target $\gamma \in (0,1)$. In the former case, it is called a \textit{Precision-Target} (PT) query and in the latter, \textit{Recall-Target} (RT) query. We call $Ans \subseteq D$ a \textit{valid} answer iff $\measure(Ans) \geq \gamma$. For $M$, we use $\measurecomp$  to denote its \textit{complementary rate} (CR): when $\measure = \precis$ (resp., $\recall$), $\measurecomp$ stands for $\recall$ (resp., $\precis$). 
Given a query, we are interested in returning valid answers w.h.p. For any $S \subseteq D$, the \textit{probability of success} for $M(S)$ is $\pos(S, M, \gamma) := Pr[\measure(S) \geq \gamma]$. We  generalize FRNN queries to \textit{Approximate Oracle-Sensitive FRNN} (\query) \textit{queries}.   
\begin{definition} [\query Query]
Given a dataset $D$, a query object $q$, a radius $r$, a failure rate $\delta$, a main measure $\measure$ and corresponding target $\gamma \in (0,1)$, an \query query asks for a valid answer $Ans \subseteq D$ w.h.p.,  i.e., such that %
$\pos(Ans, M, \gamma) \geq 1 - \delta$. 
  \end{definition}
Effectively processing an \query query requires  determining: (1) \textit{How many oracle calls are required to find a  valid answer w.h.p.?} and (2) \textit{How good is the returned answer under a given CR?} The first question is important since oracle invocations are expensive and must be reduced. The second question is important because a technically valid answer could be uninformative. For instance, if a user specifies $\measure=\precis$ with a high target $\gamma$, the empty set is always a valid answer. Similarly, returning (nearly) the whole dataset is always a valid answer when $\measure=\recall$. 
\begin{problem} [\query Problem]
\label{problem_def}
Given a dataset $D$ and an \query query $Q$, find a valid answer $Ans \subseteq D$ to $Q$ w.h.p. such that (i) the number of oracle calls incurred is \textit{minimal} and (ii) the complementary rate $\measurecomp(Ans)$ achieved is \textit{maximal}  subject to 
(i). 
\end{problem}
The \query Problem is challenging given that we want to optimize two objectives (i.e., oracle usage and CR) under validity and success probability constraints. We will show that \textit{under certain conditions, we can efficiently return high probability valid answers with minimal or zero oracle calls and maximal expected CR}.

\eat{
\subsection{Queries[out-of-update]}
Let $\pmt$ denote a set of objects, which  may correspond to images, video streams,  medical trajectories of patients, etc depending on the application at hand. We use $p,q \in \pmt$ to denote individual objects. We assume we have an oracle $\oracle$, which given an object $p\in \pmt$, maps it to a prediction vector\footnote{In general, output of DNN models tend to be vectors of arbitrary real numbers.} $\oracle(p)\in \reals^d$ in $d$-dimensional space. Invoking the oracle is expensive as it may correspond to a complex deep neural network model or to a human expert.  We also assume we have a proxy $\proxy$ that maps objects $p\in \Pi$ to vectors  $\proxy(p) \in\reals^d$ in the $d$-dimensional space. Invoking a proxy is cheap and is assumed to cost a fraction of the cost of an oracle invocation. To support similarity-based queries, we assume a distance function $\textit{dist}:\reals^d\times\reals^d\rightarrow\reals$ that returns the distance between the prediction vectors of two objects. Table~\ref{tab:notation} summarizes the key notation we use. 

\begin{table}[ht]
  \caption{Notation Summary}
  \label{tab:notation}
  \begin{small}
  \begin{tabular}{ccl}
    \toprule
    Symbol & Domain $\to$ Codomain & Description \\
    \midrule
    $\oracle(p)$ & $\pmt \to \mathbf{y} \in \reals^d$ & Oracle prediction\\
    $\proxy(p)$ & $\pmt \to \mathbf{y} \in \reals^d$ & Proxy prediction\\
    $dist(\mathbf{y_1}, \mathbf{y_2})$ & $\reals^d \times \reals^d \to \reals_{\geq 0}$ & Distance measure \\
    $prob$ & $prob \in [0,1]$ & Success probability threshold\\
    $t$ & $t \in \reals_{\geq 0}$ & Distance threshold\\
    $pt$ & $pt \in [0,1]$ & Precision target\\
    $rt$ & $rt \in [0,1]$ & Recall target\\ 
  \bottomrule
\end{tabular}
\end{small}
\end{table}

We consider the following types of queries. 

\begin{definition}[\textbf{Precision/Recall-Target Queries}] \label{pt-rt-queries}  Given a database $D \subseteq \pmt$, a query object $q \in \pmt$, and a distance threshold $t$, let $A = \{p \mid p \in D \land dist(\oracle(p), \oracle(q)) \leq t\}$ denote the set of all qualified objects subject to a distance constraint from $q$.  Given parameters $rt$ (recall target), $pt$ (precision target), $prob$ (success probability threshold) $\in (0,1)$, a \textit{recall-target query} asks for an answer $Ans \subseteq D$ that includes a fraction $rt$ of the  qualified objects with a high probability, i.e., $Pr[\frac{|Ans \cap A|}{|A|} \geq rt] \geq prob$; a \textit{precision-target query} asks for an answer $Ans \subseteq D$ in which a fraction $pt$ of the objects are  qualified objects, with a high probability, i.e., $Pr[\frac{|Ans \cap A|}{|Ans|} \geq pt] \geq prob$. Call an answer $Ans$ \textit{valid} if it satisfies the  criterion corresponding to the given query. 
\end{definition}

Precision/Recall-Target Queries defined above are selection query by its nature. We are interested in finding the neighborhood of the given query object, with respect to a given distance threshold. 
There exists a spectrum of procedures to this problem with different focus on efficiency or answer quality. One procedure in this context is an algorithm, $Alg$, taking arguments, $args$, and returning answers to a given query, $Q$. We further define a conceptual master procedure, $M$, which can execute any given procedure w.r.t given query and $args$. More formally, we can write the computation into an equation as follows
\begin{equation}
    Ans = M(Alg, args, Q)
\end{equation}
For each run of $M$, we use the number of oracle invocations as the cost, which is denoted as $cost(Alg, args, Q)$. 

For RT query, an example of $Alg$ is given in Algorithm \ref{alg:algrt_instance}. $Alg-RT$ takes a pair of arguments $s, m$, and computes $Ans$ by independently drawing $m$ samples of size $s$. $Alg-RT$ applies oracle on each sampled object and record the maximal proxy distance $dmax$ for those within the neighborhood of query object $q$. At the end, $Alg-RT$ returns all objects of smaller proxy distance than $dmax$ as $Ans$.
\begin{algorithm}
\caption{Alg-RT}
\label{alg:algrt_instance}

\end{algorithm}

It can be observed that, for a procedure like $Alg-RT$, the success probability and cost are dependent on the chosen arguments $s$ and $m$. Intuitively, with a larger $s$ and $m$, we are able to ensure a higher success probability in the price of likely higher cost. With such observation, the problem we studied in this paper can be formulated as follows,

\begin{problem}[\textbf{Precision/recall-target Query Problem}] 
\label{problem_definition}
Given a query $Q$, a procedure $Alg$ requiring numerical arguments $args$, find $args*$ which gives \textbf{valid} answers to $Q$ and has
\begin{equation}
    \mathbb{E}[cost(Alg, args*, Q)] \leq \mathbb{E}[cost(Alg, args, Q)]
\end{equation}
for all possible $args$.
\end{problem}

In this paper, we show that \ref{problem_definition} can be solved for several procedure instances to both RT and PT queries. Futhermore, we also conduct experiments to compare the cost and CR of answers for each procedure instance in question, which serves as an empirical guidance to choose the appropriate procedure in practice.
}


\eat{Additionally, the query should be answered while minimizing the overall cost. Since the oracle is prohibitively expensive, unless otherwise mentioned, the cost in this paper refers to the number of oracle invocations. The formal problem statement is: 

\begin{problem}[\textbf{Precision/recall-target Query Problem}] 
Given a query, return a valid and \textbf{non-degenerate} answer, \textit{Ans}, with a \textbf{high probability} and a \textbf{minimal cost}.
\end{problem}
}


\eat{ 
To answer precision and recall target queries, our general strategy is to provide statistical guarantees by iteratively sampling with concentration bounds, whose parameters are chosen subject to minimizing the overall cost. \sihem{This is a handwavy statement: Good complementary rates are achieved by using $I(p)$ as discussed above. Either drop it or prove it.} We present our solutions in the next section. 
} 

\vspace*{-1ex} 
\section{Approach Overview}
\label{sec:approach}
The key idea of our approach is to approximate an oracle with a cheaper \textit{proxy} model ~\cite{DBLP:journals/pvldb/KangEABZ17,DBLP:journals/pvldb/KangGBHZ20}. 
In practice, compared to an expensive oracle $O$, a proxy $\proxy$ could be a smaller and lower latency neural model. For brevity, when a query object $q$ is clear from the context, we use $dist^P(x)$ (resp. $dist^O(x)$) to denote $dist(\proxy(x), \oracle(q))$ (resp. $dist(\oracle(x), \oracle(q))$), for any $x \in D$. 

Given a dataset $D$, define an index function $\indexproxy: D \to \{i \mid 1\leq i \leq |D|\}$ that enumerates data objects in increasing order of their proxy distance, i.e., $\forall x_i, x_j \in D$, $\indexproxy(x_i) \leq \indexproxy(x_j)$ if $dist^P(x_i) \leq dist^P(x_j)$.
Denote by $\topproxy{k} = \{x \in D \mid 1 \leq \indexproxy(x) \leq k\}$  the $k$ nearest neighbors of the query object w.r.t. the proxy distance. 
$\topkproxy{0}$ is the empty set.
{Given a query object $q$, for $x\in D$, we say that $k$ is the \textit{proxy index} of $x$ if $k = \indexproxy(x)$. In this case, we call  $\topproxy{k}$ the \textit{proxy prefix} of $x$.}

To solve the \query Problem with guarantees, we examine two alternative assumptions:\\

\vspace*{-1ex} 
\noindent 
\textbf{Assumption 1 {\sc (Proxy Quality)}:} 
When the proxy quality w.r.t. the oracle can be quantified in a probabilistic manner, we aim to find high probability valid answers of maximal expected CRs with no oracle calls. We develop Algorithm \pqa to do that.
For $x \in D$, \pqa assumes \dujian{the conditional probability} of $dist^O(x)$, given $dist^P(x)$. \dujian{We can show that this assumption holds as long as data is i.i.d. (see \S~
\ref{subsubsec:proxy_assump}).}
This allows it to compute the success probability $\pos(S,\measure,\gamma)$ and expected CR   $\mathbb{E}[\measurecomp(S)]$ for any answer $S \subseteq D$.
 We prove that the optimal answer to any given query is $D_{k^*}$ for some  $0\leq k^* \leq |D|$. The optimal answer satisfies validity w.h.p. and has maximal expected CR. As $k^*$ is not known a priori, we explore the monotonicity of  $\pos(\topkproxy{k}, \measure, \gamma)$ and $\mathbb{E}[\measurecomp(\topkproxy{k})]$ w.r.t.  
$k$ in order to efficiently identify $\topkproxy{k^*}$. For RT queries, $\pos(\topkproxy{k}, \measure, \gamma)$ monotonically increases as $k$ increases. We use binary search to identify the smallest $k =\underline{k}$ such that  $\pos(\topkproxy{k}, \measure, \gamma) \geq 1-\delta$. Next, we find $\underline{k} \leq k = k^* \leq |D|$ which maximizes $\mathbb{E}[\measurecomp(\topkproxy{k})]$  and return $\topproxy{k^*}$ as the answer. 
For PT queries, $\mathbb{E}[\measurecomp(\topkproxy{k})]$ monotonically increases as $k$ increases. Thus, we incrementally compute $\pos(\topkproxy{k}, \measure, \gamma)$ for $0\leq k \leq |D|$ and set $k^*$ as the largest $k$ s.t.  $\pos(\topkproxy{k}, \measure, \gamma) \geq 1-\delta$. We return $\topproxy{k^*}$ as the answer. It is easy to see that $\topproxy{k^*}$ is the optimal answer and no oracle call is invoked in computing it. 

\begin{figure}[!tbp]
  \begin{subfigure}[b]{0.5\textwidth}
    \includegraphics[width=\textwidth]{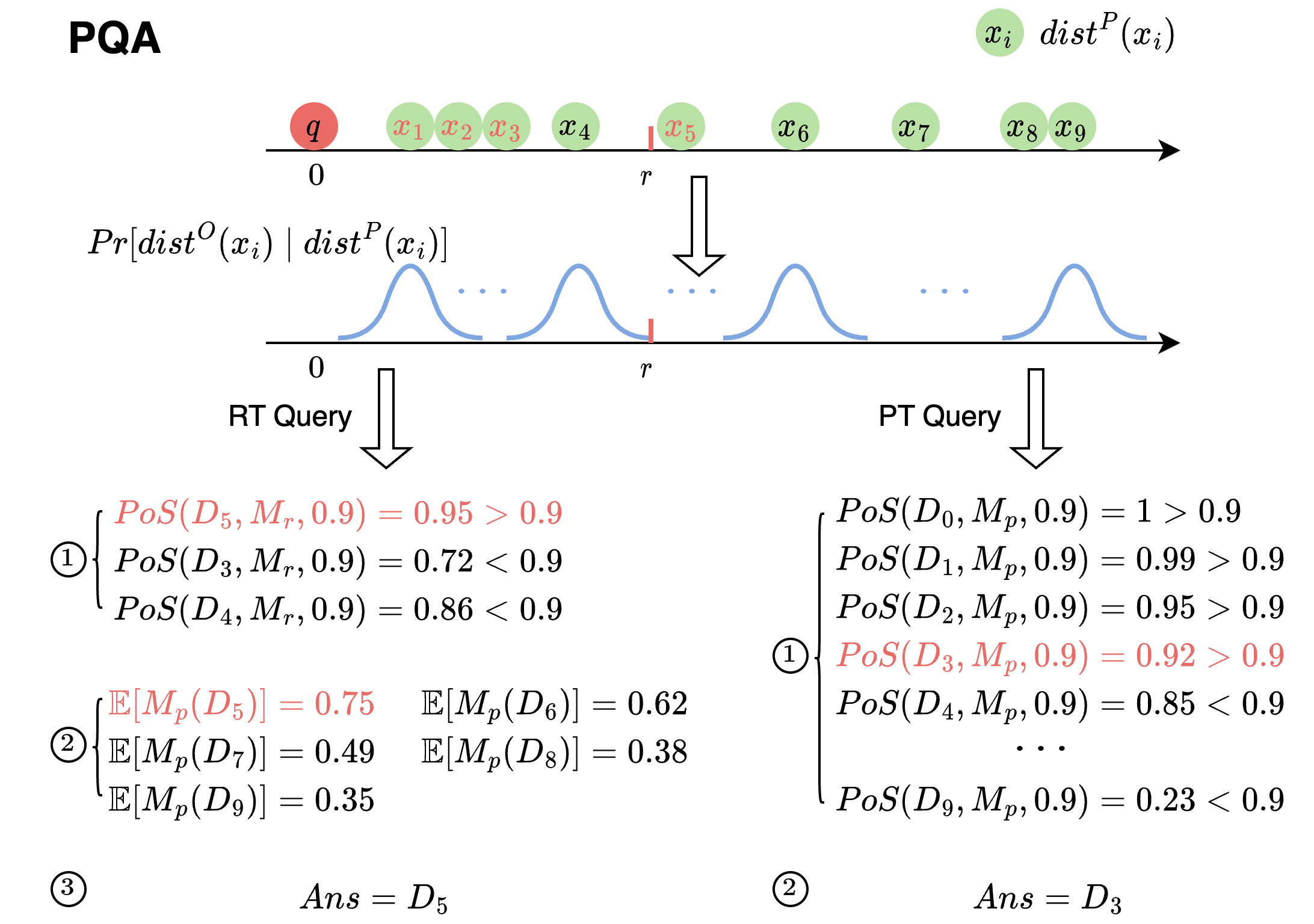}
  \end{subfigure}
    \vspace*{-7mm}
  \caption{\dujian{Example RT and PT query solved by \pqa with $\nearneighbor=\{x_1,x_2,x_3,x_5\}$, $\gamma=0.9$, and $\delta=0.1$. }}
  \label{fig:pqa_overview}

  \begin{subfigure}[b]{0.45\textwidth}
    \includegraphics[width=\textwidth]{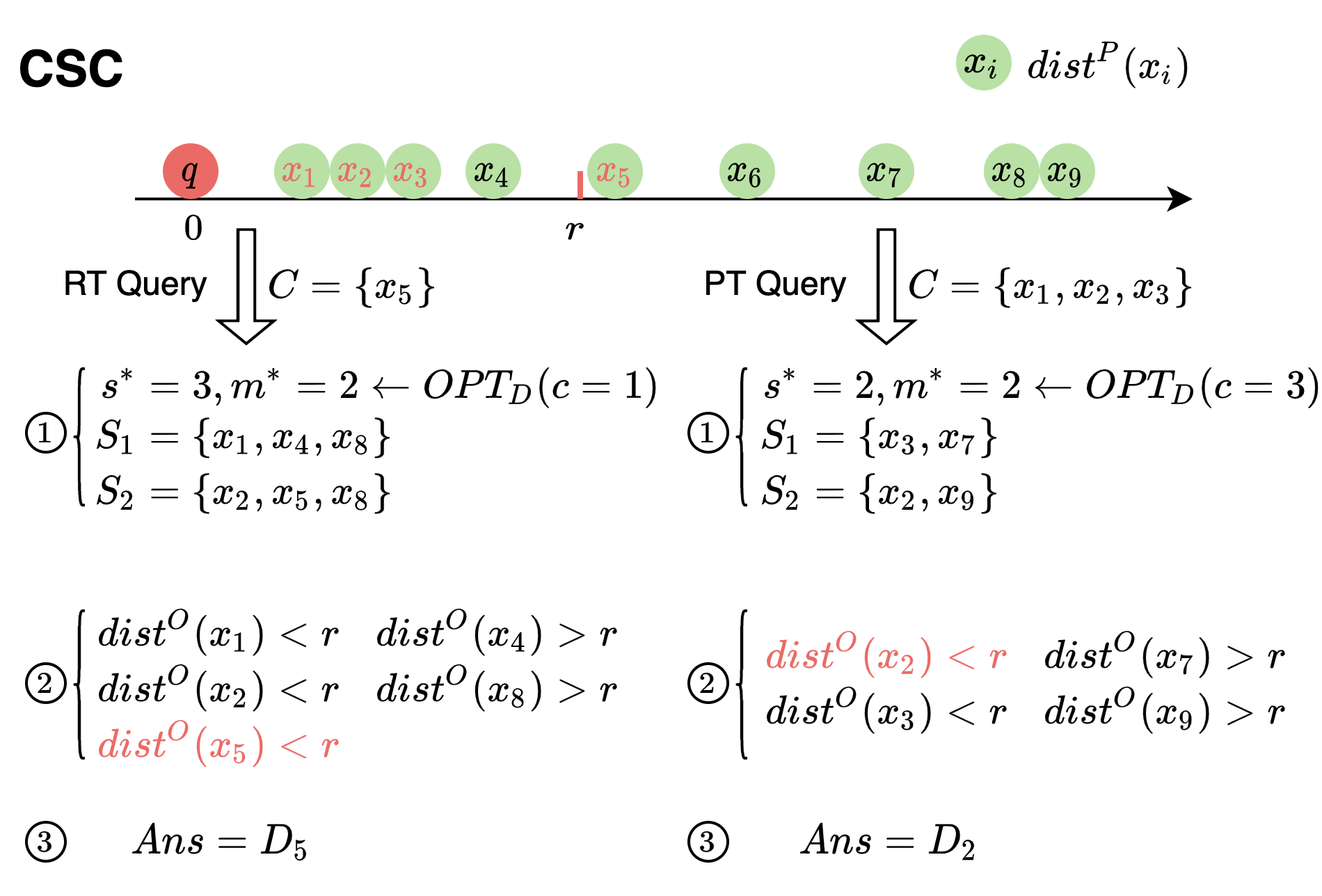}
  \end{subfigure}
    \vspace*{-4mm}
  \caption{Example RT and PT query solved by \csa with $\nearneighbor=\{x_1,x_2,x_3,x_5\}$, $\gamma=0.9$, and $\delta=0.1$. }
  \label{fig:csc_overview}
\end{figure}
\begin{example} 
A (synthetic) illustrative example is shown in Figure \ref{fig:pqa_overview}.\footnote{\dujian{All numbers are synthetic and are used to illustrate the operational workflow of our algorithms. We provide details of each computational step in \S~{\ref{sec:algos}}.}}
Consider a dataset $D=\{x_1,x_2,\cdots, x_9\}$. We show how to use \pqa to solve the example RT and PT queries with $\gamma=0.9$, $\delta=0.1$ and a ground truth $\nearneighbor=\{x_1,x_2,x_3,x_5\}$. We first compute proxy distance $dist^P(x_i)$ for each $x_i \in D$ and derive the oracle distance distribution $Pr[dist^O(x_i)|dist^P(x_i)]$ according to our assumption, which allows us to compute $\pos(S, \measure, \gamma)$ and $\mathbb{E}[\measurecomp(S)]$ for any $S \subseteq D$. We want to efficiently find the optimal answer $\topkproxy{k^*}$. In this example, $\indexproxy(x_i)=i$ and $\topproxy{k} = \{x_1,x_2,\cdots,x_k\}$.
For the RT query, we use binary search to find $\underline{k}=5$, i.e., the smallest $k$ satisfying $\pos(\topkproxy{k}, \recall, \gamma=0.9) \geq 1-\delta=0.9$. Next, we compute expected precision and return $\topkproxy{5}$ as the answer since $\mathbb{E}[\precis(\topkproxy{5})]=0.75 \geq \mathbb{E}[\precis(\topkproxy{k})]$ for any $\underline{k} \leq k \leq |D|$.
For the PT query, we compute $\pos(\topkproxy{k}, \precis, \gamma=0.9)$ for $0 \leq k \leq |D|$ and return $\topkproxy{3}$ as the answer since $k = 3$ is the largest $\topkproxy{k}$ satisfying $\pos(\topkproxy{k}, \precis, \gamma=0.9) \geq 1-\delta=0.9$.
\end{example} 

\vspace*{-1ex} 
\noindent 
\textbf{Assumption 2 {\sc (Core Set Closure)}:} 
When the proxy quality is hard to quantify, we aim to find  $\optdepth$ s.t. $\topkproxy{\optdepth}$ is the optimal answer. 
\eat{For a RT (resp. PT) query, $\optdepth$ should be the smallest (resp. largest) index s.t. $\topkproxy{\optdepth}$ is valid w.h.p. }  
Since computing $\optdepth$ exactly is expensive, we estimate it by sample and probe. Specifically, we uniformly draw $m$ samples of size $s$ each, from $D$ to estimate $k^*$ as  $\kunion$ where $\sunion$ is the union of samples, and return $\topkproxy{\kunion}$ as the answer.  
For RT (resp. PT) queries, we set $\kunion$ as the largest (resp. smallest) $\indexproxy(x)$, where $x$ is an oracle neighbor in $\sunion$. 
We seek the optimal values $s=s^*$ and $m=m^*$ which ensure  $\pos(\topkproxy{\kunion}, \measure, \gamma) \geq 1-\delta$ with a minimal expected number of oracle calls.
For that, we introduce the notion of $\textit{core set}$, denoted $C$. 
Given a query, the core set comprises all oracle neighbors $x \in \nearneighbor$ whose proxy prefix $\topproxy{\indexproxy(x)}$ is a valid answer. We say the core set is \textit{closed} w.r.t. a query $Q$ if one of the following holds: (i) $Q$ is a RT query and for every $x \in C$ any oracle neighbor whose proxy index is larger than that of $x$ is also in $C$; or (ii) $Q$ is a PT query and for every $x \in C$ any oracle neighbor whose proxy index is smaller than that of $x$ is also in $C$. 
\eat{ 
We say the core set $C$ is \textit{upward closed} (resp. \textit{downward closed}) provided for any $x \in C$, any oracle neighbor whose proxy index is larger (resp. smaller) than that of $x$ is also an element of $C$. 
} 
\eat{
Symmetrically,  we have \textit{downward closed}, a core requires closure over low proxy index. 
} 
Let $c$ denote the size of a given core set $C$.
We show that if the core set $C$ is closed w.r.t. a query and $c$ is known, $s^*$ and $m^*$ can be found by solving an optimization problem with $c$ as the input  (\S~\ref{subsec:alg_cs}). 
We develop Algorithm \csa to efficiently solve this problem and return $\topkproxy{\kunion}$.  
\csa returns valid answers w.h.p. with a minimal expected oracle usage and empirically good CR. 

\vspace*{-1ex}
\begin{example} 
The (synthetic) example  in Figure \ref{fig:csc_overview} illustrates  the idea behind Algorithm \csa. Consider the same setting as in Figure \ref{fig:pqa_overview}, where $D=\{x_1,x_2,\cdots, x_9\}$, and RT and PT queries with  $\gamma=0.9$, $\delta=0.1$, ground truth $\nearneighbor=\{x_1,x_2,x_3,x_5\}$, and $\topkproxy{k}=\{x_1, x_2, \cdots, x_k\}$. For the RT query, $x_5$ is the only oracle neighbor whose proxy prefix is a valid answer. Therefore, $C=\{x_5\}$ and $C$ is  closed. We can derive the optimal values $s^*=3$ and $m^*=2$, and uniformly draw samples $S_1$, $S_2$ from $D$. We then apply oracle  on each $x_i \in \mathcal{S} = S_1 \cup S_2$ and compute $dist^O(x_i)$ accordingly. At the end, we set $\kunion=5$ and return $\topkproxy{5}$ as the answer since $x_5$ has the largest proxy index among sampled oracle neighbors $x_1, x_2,x_5$. For the PT query, the core set is $C=\{x_1,x_2,x_3\}$, which is  closed. Similarly, we first derive the optimal values $s^*=2$ and $m^*=2$, and draw $S_1$, $S_2$ accordingly. Next, we apply the oracle on samples and compute the corresponding oracle distance. At the end, we set $\kunion=2$ and return $\topkproxy{2}$ as the answer since $x_2$ has the smallest proxy index among sampled oracle neighbors $x_2,x_3$.  
\end{example} 

\dujian{In case these assumptions do not hold,} we develop \pqe and \cse. \pqe is a heuristic extension to \pqa which calibrates oracle distance distribution by incurring some oracle calls. \cse complements \csa and ensures high success probability in general. The workflow and performance of all four approaches are summarized in Figure \ref{fig:use_case} and Table \ref{tab:guarantees}.

\begin{figure}
  \begin{subfigure}[b]{0.4\textwidth}
    \includegraphics[width=\textwidth]{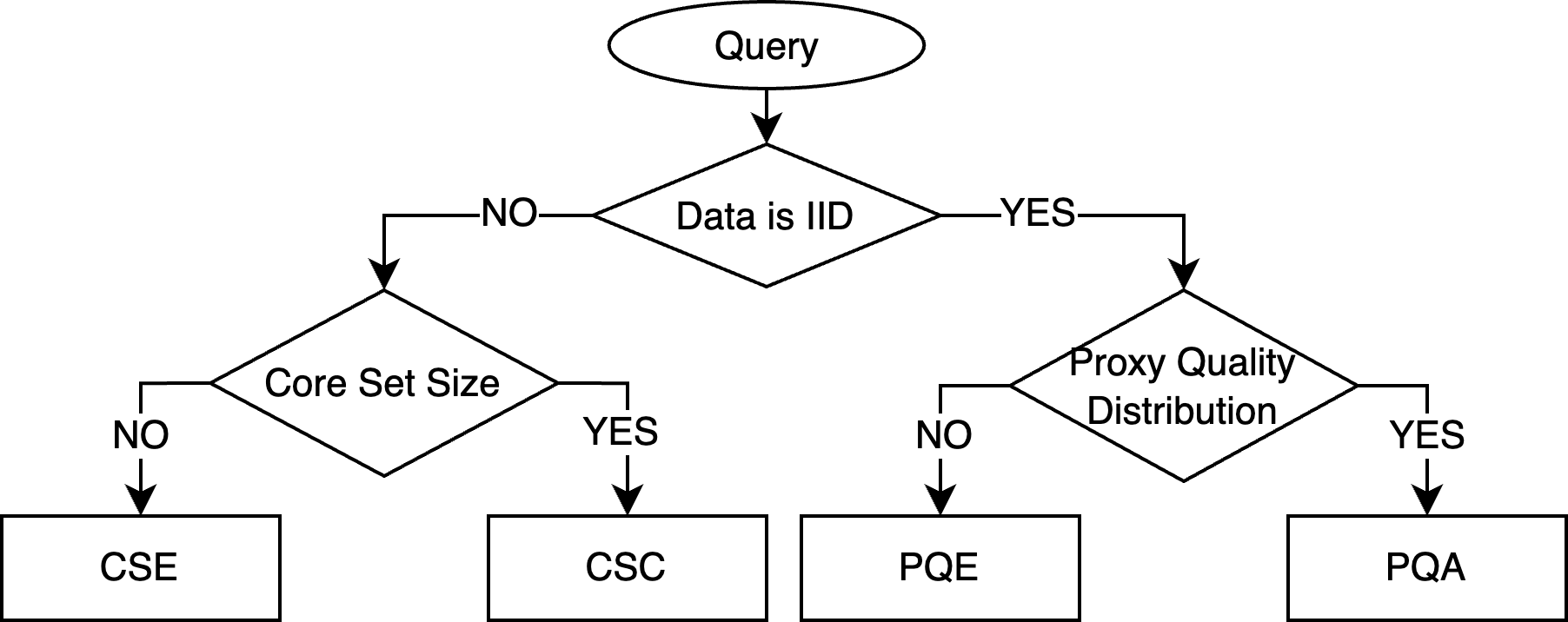}
  \end{subfigure}
    \vspace*{-3mm}
  \caption{\dujian{Workflow of different approaches.} }
  \label{fig:use_case}
  \vspace*{-3mm}
\end{figure}
\begin{table}
    \centering
\begin{small}
\begin{tabular}{|c|c|c|c|c|  }
\hline
 & Success Prob. & Oracle Usage & CR & Assumption \\
\hline
PQA & \cellcolor{blue!25}$\mathbf{\geq 1-\delta}$ & \cellcolor{blue!25}$\mathbf{0}$	& \cellcolor{blue!25}\textbf{MAX} 	&Yes \\
\hline
CSC & \cellcolor{blue!25}$\mathbf{\geq 1-\delta}$ & \cellcolor{blue!25}\textbf{MIN}	& good	& Yes \\
\hline
CSE & \cellcolor{blue!25}$\mathbf{\geq 1-\delta}$ & small	& good	& No\\
\hline
PQE & high	& small	& good	& No\\
\hline
\end{tabular}
\end{small}
\caption{Performance of different approaches for queries with specified $\delta$. Provable guarantees are highlighted. Empirical performance is described by ``high'', ``small'', and ``good''.}
\label{tab:guarantees}
\vspace{-5mm}
\end{table}

\begin{table}[ht]
  \caption{Notation Summary}
    \vspace*{-3mm}
  \label{tab:notation}
  \begin{small}
  \begin{tabular}{cccc}
    \toprule
    Symbol  & Description & Symbol  & Description \\
    \midrule
    $dist^O(x)$ & oracle distance & $r$ & radius threshold\\
    $dist^P(x)$ & proxy distance & $C$, $c$ & core set (size)\\
    $\nearneighbor$ & oracle neighbors in DB & $\delta$ & failure rate\\
   $N_S$  & \# oracle neighbors in $S$ & $I(x)$ & proxy index of $x$  \\
    $\measure$, $\measurecomp$  & main/comp. measure & $\precis$, $\recall$  & precision/recall \\
    $\topkproxy{k}$ & $k$ proxy-nearest neighbors & $\gamma$ & measure target \\
    \midrule
    $k^*$ &  \multicolumn{3}{c}{proxy index $I(x)$ of $x \in D$, s.t.  $\topkproxy{I(x)}$ is the optimal answer.   }\\
    $\kunion$ &  \multicolumn{3}{c}{max (resp. min) $I(x)$ of $x \in \sunion \cap \nearneighbor$ for RT (resp. PT) queries.  }\\
  \bottomrule
\end{tabular}
\end{small}
\end{table}

\eat{When $\measure = \precis$ (resp. $\measure=\recall$), our query is referred to as a \textit{precision target} (PT) (resp. \textit{recall target} (RT)) query. } 
We will use  $\measure$ and $\measurecomp$ when results hold for both PT and RT. We next describe our algorithms and provide a theoretical analysis. 

\eat{
For any given query, \pqa finds a valid answer of maximal expected complementary rates at zero oracle call, under Proxy Quality Assumption. Simply speaking, for any object $x_i \in D$, Proxy Quality Assumption assumes the difference between $dist^O(x_i)$ and $dist^P(x_i)$ is a random variable. Specifically, for $x_i \in D$, by knowing $dist^P(x_i)$, Proxy Quality Assumption allows us to compute the likelihood for $x_i$ being an oracle neighbor of the given query object. In light of this, we first present how to compute the success probability for any subset $S \subseteq D$ being a valid answer to a given query. Next, we prove monotonic relations between success probability and expected complementary rates for any subsets of $D$. At the end, we develop \pqa to efficiently find the optimal answer to any given query. \pqe extends \pqa and calibrates the underlying distributions with a fixed budget of oracle calls. 

For any given query, \csa finds a valid answer through a minimal number of oracle calls, w.r.t. a specific schema. Typically, \csa repeatedly draws $m$ samples of size $s$ from $D$, and estimates an answer with statistical guarantees by applying oracles only on drawn samples. We solve the problem that how to find the optimal values for $s$ and $m$ to ensure high success probability while minimizing the expected oracle usage. \cse calibrates \csa and ensures high success probability in general settings. 
}

\eat{
\subsection{Approach for Recall-Target Queries}
\label{sec: rt_alg}

Let $\topkproxy{k} = \{\text{top-k objects of smallest proxy distance}\}$.
For any given recall target $rt$, there always exists a \textit{minimal} position value $\optdepth$ such that $\frac{|\topkproxy{\optdepth}\cap A|}{|A|} \geq rt$. 
For a position value $\optdepthtop$, due to the monotonic relationship between $k$ and the recall rate of $\topkproxy{k}$, we can safely return $Ans = \topkproxy{\optdepthtop}$ as the answer, if we know that  $Pr[\optdepthtop \geq \optdepth] \geq prob$. We propose to use sampling to determine $\optdepthtop$ and compute $Pr[\optdepthtop \geq \optdepth]$.  

For each sample $S \subseteq D$, we define $\sampleoptdepth{S} = min\{k|A \cap S \subseteq \topkproxy{k}\}$; each $\sampleoptdepth{S}$ corresponds to a success probability $Pr[\sampleoptdepth{S} \geq \optdepth]$. 

{\bf Challenges.} The main idea for answering a recall-target query is to independently draw $m$ samples of size $s$, $S_1, S_2, \cdots, S_m$, such that the probability $Pr[\optdepthtop \geq \optdepth] = 1 - \prod_{j=1}^{m}(1-Pr[\sampleoptdepth{S_j} \geq \optdepth]) \geq prob$, where $\optdepthtop := max\{\sampleoptdepth{S_1}, \sampleoptdepth{S_2}, \cdots, \sampleoptdepth{S_m}\}$.  This raises two challenges: (1) how to compute the probability $Pr[ \sampleoptdepth{S} \geq \bar{i}]$ for a given sample $S$? and  (2) given that the oracle invocation is expensive, how to sample such that the number of oracle invocations is minimized, while ensuring a success probability no less than $prob$? 

{\bf Solution Overview.} Our solution, \ours-RT, is depicted in Figure \ref{fig:our_rt} (details and algorithms   in \S \ref{subsec:rt-algos}). We leverage the observation that for a given sample $S$, the probability $Pr[\sampleoptdepth{S} \geq \optdepth]$ corresponds to a hypergeometric distribution \cite{babara2019stat} parameterized by $|A|$. The best sampling strategy (i.e., sample size $s$ and number of samples $m$) depends on the query selectivity $\frac{|A|}{|D|}$ and can be derived by solving an optimization problem that ensures $prob$ and minimizes the expected sampling cost, $cost_{prob}$. Since query selectivity is  unknown beforehand, we  estimate it through another round of sampling with concentration bounds (i.e., Hoeffding bounds \cite{vershynin_2018}), which is further controlled parameters $\delta$ and $\epsilon$. 

\begin{figure} [ht]
    \centering
    \includegraphics[scale=0.5]{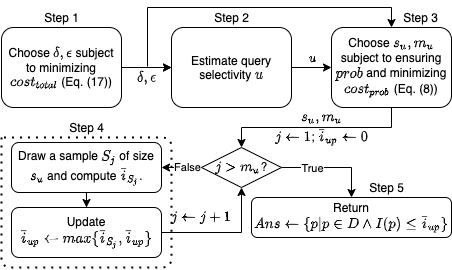}
    \caption{Workflow of \ours-RT.  }
    \label{fig:our_rt}
\end{figure}



\subsection{Approach for Precision-Target Queries}
\label{sec: pt_alg}
For precision-target queries, the precision rate of the answer, $Ans$, equals its true positive rate (TPR), $TPR(Ans)$. 
If we have a subset $D' \subseteq D$, we can compute a likely lower bound, $\tprlow{D'}$, for its TPR by applying concentration bounds. We can safely return $Ans=D'$ if $pt \leq \tprlow{D'}$ and $Pr[\tprlow{D'} \leq TPR(D')] \geq prob$. If $pt \geq \tprlow{D'}$, an obvious strategy is to improve $TPR(D')$ by applying the oracle on a selected subset of $D'$ and filtering out unqualified objects, and estimate $\tprlow{D'}$ again. This procedure can be used iteratively until a valid answer $Ans$ is found. 
\begin{figure} [ht]
    \centering
    \includegraphics[scale=0.5]{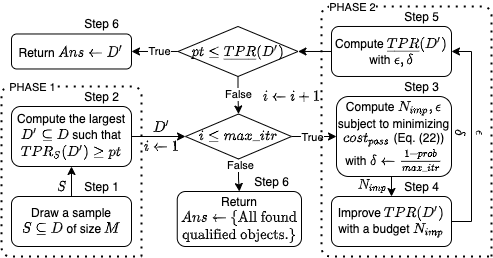}
    \caption{Workflow of \ours-PT. }
    \label{fig:our_pt}
\end{figure}

{\bf Challenges.} 
(1) Similarly to the \textit{Multiple Hypothesis Testing} problem \cite{Higdon2013multiple}, iteratively estimating the lower bound until a desired result is observed (e.g., $pt \leq \tprlow{D'}$) is likely to increase the overall estimation error because the chances of making a Type I error increase~\cite{Higdon2013multiple}. How to provide statistical guarantees for the returned answer $Ans$? 
(2) In each iteration, we perform a number of oracle invocations, $N_{imp}$, to improve $TPR(D')$ in addition to estimating $\tprlow{D'}$. How to choose the best strategy for improving $TPR(D')$ and estimating $\tprlow{D'}$ so as to terminate the iteration early and minimize the expected cost?

The first challenge can be addressed by enforcing a hard limit on iteration number, $\maxitr$, and applying a stringent significance threshold on each application of concentration bounds. If the procedure does not stop after $\maxitr$ iterations, we return all found qualified objects as the answer $Ans$. For the second challenge,  we can formulate it as an optimization problem, and choose the best strategy subject to minimizing the expected cost in each iteration, $cost_{pass}$. The overall cost consists of the total $cost_{pass}$ from all iterations and can be minimized implicitly through our procedure. We will discuss how to achieve this in \S~\ref{subsec:pt-algos}. 

Another question is to find $D'$. If $D'$ contains only a few qualified objects, the recall rate of  $Ans$ will be low, especially when the query is selective. Hence, to improve recall rates, we look for the largest subset $D'\subseteq D$ with a probabilistically high TPR. 
\eat{ 
\note[Laks]{Are we able to provably find the largest subset?}
\note[Dujian]{By first sorting all objects into a sequence, we can provably find the largest subsequence of a high sampled TPR with respect to a given sample $S$.} 
} 

{\bf Solution Overview.} Our solution, \ours-PT, follows a two-phase procedure depicted in Figure~\ref{fig:our_pt}. Phase 1 computes a large subset of interest $D'\subseteq D$ through pilot sampling.
Phase 2  iteratively improves $TPR(D')$ and estimates $\tprlow{D'}$ with concentration bounds until the condition  $\tprlow{D'} \geq pt$ is satisfied. We provide details and algorithms   in \S \ref{subsec:pt-algos}.

}

\section{Formal Analysis and Algorithms}
\label{sec:algos}

\subsection{Proxy Quality}
\label{subsec:alg_pq} 
In \S~\ref{subsubsec:proxy_assump}, we formally state the \textsc{Proxy Quality} assumption and show how the success probability of a set $S \subseteq D$ can be computed. Then, we develop Algorithm \pqa based on this assumption (\S~\ref{subsubsec:alg_pqa}) and analyze answer optimality (\S~\ref{subsubsec:pqa_opt}).
In \S~\ref{subsubsec:alg_pqe}, we develop Algorithm \pqe to extend \pqa to more general settings. 

\subsubsection{\textsc{Proxy Quality} Assumption}
\label{subsubsec:proxy_assump}

\dujian{In many real-world applications, data is collected in i.i.d. manner~\cite{10.1145/3448016.3452786,DBLP:conf/sigmod/LuCKC18}.
In our problem setting, the oracle and proxy are provided as input and serve as deterministic functions mapping a data object $x_i \in D$ to its prediction  $\oracle(x_i)$ or $\proxy(x_i)$. The difference between oracle and proxy distances to a given query object can be seen as i.i.d. random variables, whose i.i.d. property comes from the underlying data collection process.
}
Formally, the assumption states that given a query, the deviations between the proxy and oracle distances of different objects $x_i\in D$ are i.i.d. random variables: \dujian{for  $x_i \in D, \epsilon_i =dist^\oracle(x_i) - dist^\proxy(x_i)$}, where 
$\epsilon_i$ are i.i.d., $\epsilon_i \sim \mathcal{X}$.
\dujian{In Figure \ref{assmp_just:pqa} we report the distribution of $\epsilon_i$ on two real-world datasets, \textit{Mimic-III}~\cite{mimic_iii} and \textit{night-street}~\cite{jackson_dataset}. It is clear that, with high frequency, $\epsilon_i$ takes on values close to $0$, which indicates that the proxy is of good quality and can properly approximate the oracle predictions.}
\begin{figure}
  \begin{subfigure}[b]{0.23\textwidth}
    \includegraphics[width=\textwidth]{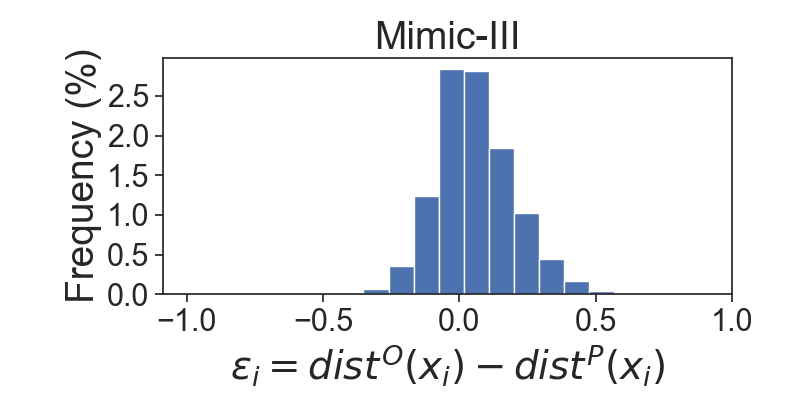}
  \end{subfigure}
 \hfill
    \begin{subfigure}[b]{0.23\textwidth}
    \includegraphics[width=\textwidth]{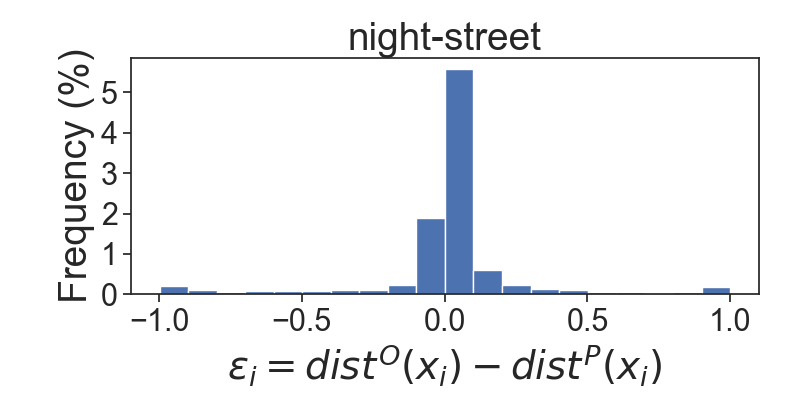}
  \end{subfigure}
  \vspace*{-4.5mm}
  \caption{\dujian{Distribution of $\epsilon_i =dist^\oracle(x_i) - dist^\proxy(x_i)$.}}
  \label{assmp_just:pqa}
\end{figure}

Under this assumption,
we can compute the oracle distance distribution for any $x_i \in D$,  after observing the proxy distance. 
The conditional probability of $x_i \in D$  being an oracle neighbor is:
\begin{small}
\begin{equation}
\label{fact:identity}
\begin{split}
Pr[x_i \in \nearneighbor \mid dist^\proxy(x_i)]&=Pr[dist^\oracle(x_i) \leq r \mid dist^\proxy(x_i)] \\
&=Pr[\epsilon_i \leq r-dist^\proxy(x_i)]
\end{split}
\end{equation}
\end{small}
The RHS of Eq. \ref{fact:identity} is the cdf of $\epsilon_i \sim \mathcal{X}$ evaluated at $r-dist^\proxy(x_i)$, i.e., $CDF_{{\footnotesize \mathcal{X}}}(r-dist^\proxy(x_i))$. For simplicity,  define $\phi(x_i) := CDF_{ \mathcal{X}}(r-dist^\proxy(x_i))$ and $\Phi(D):=\{\phi(x_i) \mid x_i \in D\}$. 
%
Notice, $\phi(x_i)$ provides the probability that $x_i$ is an oracle neighbor. 
\dujian{The overall success probability uses the \textit{possible world semantics}~\cite{moore1984possible}. The success probability of a subset $S \subseteq D$ equals the sum of  probabilities of all possible worlds in which $S$ has a high precision or recall w.r.t. the target $\gamma$.}
To  compute the success probability of $S$,
we seek the likelihood of any $S \subseteq D$ containing a certain number of oracle neighbors.

Recall that for any subset $S \subseteq D$, $N_S = |S \cap \nearneighbor|$  is the number of oracle neighbors in $S$.  
$N_S$ is thus a random variable equal to the sum of $|S|$ independent Bernoulli trials, 
each of which has a success probability $\phi(x_i)$, $x_i \in S$. 
Let $p_{N_S}(k) := Pr[N_S = k]$ be the probability mass function for any $S \subseteq D$ and $0\leq k \leq |S|$. We  next discuss how to compute it efficiently.


An important fact is that, given $S \subseteq D$, $x_i \notin S$, $p_{N_{S\cup \{x_i\}}}$ and $p_{N_{S}}$ satisfy the following recurrence relation:
\begin{small}
\begin{equation}
\label{eq:pndist_update}
\begin{split}
    p_{N_{S\cup \{x_i\}}}(k)=p_{N_{S}}(k-1)\cdot \phi(x_i) + p_{N_{S}}(k)\cdot (1-\phi(x_i))\\
\end{split}
\end{equation}
\end{small}
for $0\leq k \leq |S|+1$.
Eq. \ref{eq:pndist_update} says how to compute the probability mass function $p_{N_{S\cup \{x_i\}}}$ from $p_{N_{S}}$, for any $S \subseteq D$ and $x_i \notin S$. This recurrence relation directly suggests a way to compute $p_{N_{S}}$ for any $S$ with incremental updates, called  \textit{direct convolution} \cite{osti_1548776}.
We start from $S=\varnothing$ and apply Eq. \ref{eq:pndist_update} recursively to compute $p_{N_{S}}$ for any $S \subseteq D$. $p_{N_S}$ is implemented by an array (we abbreviate $\phi(x_i)$ as $\phi_i$). 
We initialize the array $p_{N_S}[0]=1$. 
We then iteratively update $p_{N_S}$ by including $x_i \in S$, $1\leq i \leq |S|$,  in \textit{any} order. The distribution updates are a direct implementation of Eq. \ref{eq:pndist_update}. 

We now discuss how to use $p_{N_S}$ to compute $\pos(S,\measure,\gamma)$, the success probability for $S$ to be a valid answer. 
We have the following fact, where $\overline{S} := D \setminus S$:
\begin{fact}
\label{fact:pos_pqa}
Given $S \subseteq D$ and $\gamma \in (0,1)$, 
\vspace*{-2ex} 
\begin{small}
\begin{equation}
\label{eq:precis_prob}
\begin{split}
    \pos(S, \precis, \gamma) &= Pr[\frac{N_S}{|S|} \geq \gamma]
    =\sum_{k=\lceil |S|\gamma \rceil}^{|S|}p_{N_S}(k)\\
\end{split}
\end{equation}
\end{small}
\vspace*{-1.5ex} 
\begin{small}
\begin{equation}
\label{eq:recall_prob}
\begin{split}
    \pos(S, \recall, \gamma) &= Pr[\frac{N_S}{|\nearneighbor|} \geq \gamma]
    =\sum_{j=0}^{|S|} p_{N_S}(j) \sum_{k=0}^{\lfloor j (1-\gamma)/\gamma \rfloor} p_{N_{\overline{S}}}(k)\\
\end{split}
\end{equation}
\end{small}
\end{fact}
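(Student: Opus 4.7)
The plan is to handle the two equations separately, since the precision case is essentially bookkeeping on a single random variable while the recall case requires an extra independence argument.

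For Eq. \ref{eq:precis_prob}, I would start from the observation that $|S|$ is a constant (not random), so $\precis(S) = N_S/|S| \geq \gamma$ is equivalent to the event $N_S \geq |S|\gamma$. Since $N_S$ takes only integer values in $\{0, 1, \ldots, |S|\}$, this event simplifies to $N_S \geq \lceil |S|\gamma \rceil$. Applying the definition of $p_{N_S}$ as the probability mass function of $N_S$ and summing over the range $k \in \{\lceil |S|\gamma \rceil, \ldots, |S|\}$ yields the claimed identity directly. This step is essentially a one-line calculation.

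For Eq. \ref{eq:recall_prob}, the key complication is that $|\nearneighbor|$ is itself a random variable, since under the {\sc Proxy Quality} assumption every $x_i \in D$ is an oracle neighbor only with probability $\phi(x_i)$. Writing $\overline{S} = D \setminus S$, we have the decomposition $|\nearneighbor| = N_S + N_{\overline{S}}$. I would first algebraically rewrite $N_S/(N_S + N_{\overline{S}}) \geq \gamma$ as $N_{\overline{S}} \leq N_S(1-\gamma)/\gamma$, and then, using integrality of $N_{\overline{S}}$, as $N_{\overline{S}} \leq \lfloor N_S(1-\gamma)/\gamma \rfloor$. The crucial ingredient is that $N_S$ and $N_{\overline{S}}$ are \emph{independent}: they are sums of Bernoulli trials indexed by the disjoint sets $S$ and $\overline{S}$, and the i.i.d.\ assumption on the $\epsilon_i$ (hence on the indicator variables $\mathbf{1}[x_i \in \nearneighbor]$) guarantees this independence. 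Conditioning on $N_S = j$ and applying the law of total probability together with independence gives
\begin{equation*}
\pos(S, \recall, \gamma) = \sum_{j=0}^{|S|} p_{N_S}(j) \cdot Pr\bigl[N_{\overline{S}} \leq \lfloor j(1-\gamma)/\gamma \rfloor \bigr],
\end{equation*}
and expanding the inner probability via $p_{N_{\overline{S}}}$ recovers the stated double sum.

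The main obstacle I anticipate is not the algebra itself but justifying the edge cases cleanly. When $j = 0$, the bound $\lfloor j(1-\gamma)/\gamma \rfloor$ equals $0$, so the inner sum contributes $p_{N_S}(0)\,p_{N_{\overline{S}}}(0)$, which is the probability that $|\nearneighbor| = 0$; interpreting $0/0 \geq \gamma$ as vacuously true is necessary for the formula to be correct, and I would state this convention explicitly. I would also verify that the floor (as opposed to a ceiling) is the right operation by checking the inequality direction carefully: we want the largest integer $k$ with $k \leq j(1-\gamma)/\gamma$, which is $\lfloor j(1-\gamma)/\gamma \rfloor$. Once these conventions are pinned down, both identities follow without further machinery beyond the {\sc Proxy Quality} assumption already in place.
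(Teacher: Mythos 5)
Your proof is correct and follows essentially the same route as the paper's own justification: the precision identity via integrality of $N_S$, and the recall identity via the decomposition $|\nearneighbor| = N_S + N_{\overline{S}}$, conditioning on $N_S = j$, and applying the law of total probability with the independence of $N_S$ and $N_{\overline{S}}$. Your explicit handling of the independence requirement and of the $j=0$ (i.e., $|\nearneighbor|=0$) convention is somewhat more careful than the paper's informal exposition, but the underlying argument is the same.
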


For PT queries, the precision of $S \subseteq D$ increases  as $S$ contains more oracle neighbors.  
The probability of $S$ having a precision no less than $\gamma$ equals the probability of $S$ containing at least $|S|\gamma$ oracle neighbors, i.e., $Pr[N_S \geq |S|\gamma]$.
Eq. \ref{eq:precis_prob} gives this probability.

For RT queries, the recall of $S \subseteq D$ increases as $S$ contains more oracle neighbors \textit{relative to} the complement $\overline{S} = D \setminus S$: 
if $S$ contains $0 \leq j \leq |S|$ oracle neighbors, the conditional probability of $S$ having a recall no less than $\gamma$ equals  the probability that $\overline{S}$ contains no more than $j (1-\gamma)/\gamma$ oracle neighbors, i.e., $Pr[N_{\overline{S}} \leq j (1-\gamma)/\gamma]$. 
By the law of total probability \cite{Grimmett1986-GRIPAI}, the overall success probability $\pos(S, \recall, \gamma)$ equals  the summation of the product between the conditional success probability, $Pr[N_{\overline{S}} \leq j (1-\gamma)/\gamma]$, and the marginal probability, $Pr[N_S=j]$,  $0\leq j \leq |S|$. 
Using Eq. \ref{eq:recall_prob}, we use $p_{N_{S}}$ and $p_{N_{\overline{S}}}$ to compute this probability.


Fact \ref{fact:pos_pqa} 
gives a direct way to compute $\pos(S,M,\gamma)$ for any $S \subseteq D$ for a given query. 
We also leverage 
Eq. \ref{eq:precis_prob} and 
Eq. \ref{eq:recall_prob} iteratively.

  


\subsubsection{Algorithm \pqa}
\label{subsubsec:alg_pqa}
We develop  \pqa (Algorithm~\ref{alg:pqa})  which returns high probability valid answers with zero oracle calls and maximal expected CR, under the \textsc{Proxy Quality} assumption.
For PT queries, \pqa-PT computes the largest $k$ s.t.  $\pos(\topphi{k}, \precis, \gamma) \geq 1 - \delta$,  $0\leq k \leq |D|$, denoted  $k^*$. 
Notice that $\pos(S, \precis, \gamma)$ can be derived from $p_{N_S}$ in linear time, 
and $p_{N_S}$ can be computed from $p_{N_{S \setminus \{x_i\}}}$ in linear time, 
for any $x_i \in S \subseteq D$. \pqa-PT incrementally computes $p_{N_{D_k}}$ for each $0 \leq k \leq |D|$ and $\pos(\topphi{k}, \precis, \gamma)$ accordingly.
At the end, \pqa-PT returns $\topphi{k^*}$ where $k^* = \max\{0\leq k \leq |D| \mid \pos(\topphi{k}, M_p, \gamma) \geq 1 - \delta \}$. 
For RT queries, 
\pqa-RT uses binary search to identify the smallest $k =\underline{k}$ such that  $\pos(\topkproxy{k}, \measure, \gamma) \geq 1-\delta$. Next, \pqa-RT computes the expected CR of $\topphi{k}$ for each $\underline{k} \leq k \leq |D|$, and returns $\topphi{k^*}$ where $k^*=\argmax_{ \underline{k} \leq k \leq |D|} \mathbb{E}[\precis(\topphi{k})]$.

The algorithm is presented in Algorithm \ref{alg:pqa}. \pqa-PT is given in lines \ref{pqapt:s}-\ref{pqapt:e}. In lines \ref{pqapt:comp_pns_s}-\ref{pqapt:comp_pns_e}, we incrementally compute  $p_{N_{\topphi{k}}}$ for $0\leq k \leq |D|$. In lines \ref{pqapt:find_kmax_s}-\ref{pqapt:e}, we keep tracking the largest $k=k^*$,  $0\leq k\leq |D|$, such that  $\pos(\topphi{k}, \precis, \gamma) \geq 1 - \delta$, and return $\topphi{k^*}$ as the answer. The overall time complexity is $O(|D|^2)$.
\pqa-RT is given in lines \ref{pqart:s}-\ref{pqart:e}. In lines \ref{pqart:bs_s}-\ref{pqart:bs_e}, we use binary search to find the smallest $k=\underline{k}$ such that  $\pos(\topphi{k}, \recall, \gamma) \geq 1 - \delta$. Next, in lines \ref{pqart:maxexp_s}-\ref{pqart:e}, we compute $\mathbb{E}[\precis(\topphi{k})]$ for each $\underline{k} \leq k \leq |D|$ and return $\topphi{k^*}$ with the maximal expected CR. 
Binary search invokes $O(log(|D|))$ times $p_{N_S}$ computation, each of which is of $O(|D|^2)$. 
The overall time complexity is therefore $O(log(|D|)|D|^2)$. 
\begin{small}
\normalem
\begin{algorithm} 
\caption{PQA}
\label{alg:pqa}

\DontPrintSemicolon
  \SetKwFunction{pqapt}{PQA-PT}
  \SetKwProg{Fn}{Function}{:}{}
  \SetKwFunction{Sum}{Sum}

  \SetKwFunction{pqart}{PQA-RT}
  \SetKwFunction{fpnsupdate}{IncrementalUpdate}
\SetKwFunction{fprecis}{PoS-Mp}
  \SetKwFunction{frecall}{PoS-Mr}
  \SetKwFunction{fpns}{pNs}
  
\Fn{\pqapt{$\Phi_D=\Phi(D)$, $\gamma$, $\delta$} \label{pqapt:s}}
{$p_{N_S}[0] \gets 1$; $k^* \gets 0$\; \label{pqapt:comp_pns_s}
\For{$i \gets 1, 2, \cdots, |D|$}{
    $p_{N_S} \gets $ \fpnsupdate{$p_{N_S}$, $\Phi_D[i]$, $i$}\tcc*{Eq.3} \label{pqapt:comp_pns_e}
    \If{\fprecis{$p_{N_S}$, $\gamma$} $\geq 1-\delta$ \tcc*{Eq.4}\ \label{pqapt:find_kmax_s}}
    {$k^* \gets i$}}
\KwRet $\topphi{k^*}$
\label{pqapt:e}}
  
\Fn{\pqart{$\Phi_D=\Phi(D)$, $\gamma$, $\delta$} \label{pqart:s} }
{
$\underline{k} \gets 1$; $\overline{k} \gets |D|$\;\label{pqart:bs_s}
\While {$\underline{k} < \overline{k}$}{
$mid \gets \lfloor (\underline{k} + \overline{k}) / 2 \rfloor$\;
$p_{N_S} \gets$ \fpns{$\Phi(S)$};\ $p_{N_{\overline{S}}} \gets$ \fpns{$\Phi(\overline{S})$} \tcc*{Eq.3}
\eIf{\frecall{$p_{N_S}$, $p_{N_{\overline{S}}}$, $\gamma$} $< 1-\delta$ \tcc*{Eq.5}} 
{$\underline{k} \gets mid+1$}
{$\overline{k} \gets mid$}
} \label{pqart:bs_e}
$p_{N_S} \gets $ \fpns{$\Phi(\topphi{\underline{k}})$}\;
$E\measurecomp \gets \label{pqart:maxexp_s}$ \Sum{$\{p_{N_S}[i]\cdot i / {\underline{k}} \mid 1 \leq i \leq \underline{k}\}$} \tcc*{ $\mathbb{E}[\precis(S)]$}
\For{$i \gets \underline{k}+1, \underline{k}+2, \cdots, |D|$}{
$p_{N_S} \gets $ \fpnsupdate{$p_{N_S}$, $\Phi_D[i]$, $i$}\tcc*{Eq.3}
    $E\measurecomp' \gets $ \Sum{$\{p_{N_S}[j]\cdot j / {i} \mid 1 \leq j \leq i\}$}\;
    \If{$E\measurecomp' > E\measurecomp$}{
    $E\measurecomp \gets E\measurecomp'$;
    $k^* \gets i$\;}
}
\KwRet $\topphi{k^*}$
\label{pqart:e}}
\end{algorithm}
\end{small}
\ULforem

\subsubsection{\pqa Optimality}
\label{subsubsec:pqa_opt}
We first show that \eat{the optimal answer to a given query must be} there exists some $\topkproxy{k^*}$, \eat{for some $0 \leq k^* \leq |D|$} s.t. it is an optimal answer. We then explore the monotonicity relation between $\topkproxy{k}$ and $\topkproxy{k+1}$ w.r.t. success probability and expected CR, to efficiently find $\topkproxy{k^*}$. Finally, we show that  answers returned by \pqa are optimal for any query (proofs in the full version \cite{aquaprofull}). 

For $S \subseteq D$, we are interested in two operations to generate new answers: (i) \textit{replace} $x_i \in S$ with $x_j \notin S$, and (ii) \textit{append} $S$ with a new object $x \notin S$.
We first show that for any $S \subseteq D$, both success probability $\pos(S,M,\gamma)$ and expected CR $\mathbb{E}(\measurecomp(S))$  are monotone under the  replacement operation.

\begin{lemma} [Monotonicity of Replacement]
\label{lemma:monoto_replace}
Let $S \subseteq D$, $x_i \in S$, and $x_j \notin S$.  Denote $S'=S \cup \{x_j\} \setminus \{x_i\}$. For all $\gamma \in (0,1)$, if $\phi(x_i) \leq \phi(x_j)$, then
\begin{small}
\begin{equation}
\begin{split}
     \pos(S, M, \gamma) &\leq \pos(S', M, \gamma)\hspace{0.5cm} \mbox{and} \hspace{0.5cm} 
\mathbb{E}[\measurecomp(S)] \leq \mathbb{E}[\measurecomp(S')]\\
\end{split}
\end{equation}
\end{small}
\begin{proofsketch}
The proof leverages the notion of \textit{usual stochastic order} \cite{Nair2013}. 
\end{proofsketch}
\end{lemma}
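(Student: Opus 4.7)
\begin{proofsketch}
The plan is to prove both monotonicity statements via a single monotone coupling. Since $\phi(x_i) \leq \phi(x_j)$, I will couple the Bernoulli indicators $Y_i := \mathbf{1}[x_i \in \nearneighbor] \sim \mathrm{Bern}(\phi(x_i))$ and $Y_j := \mathbf{1}[x_j \in \nearneighbor] \sim \mathrm{Bern}(\phi(x_j))$ through a single uniform draw $U \sim \mathrm{Unif}(0,1)$, setting $Y_i = \mathbf{1}[U \leq \phi(x_i)]$ and $Y_j = \mathbf{1}[U \leq \phi(x_j)]$, so that $Y_j \geq Y_i$ almost surely while preserving the marginals. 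Under the {\sc Proxy Quality} assumption, the indicators of all other objects are independent of this pair, which is what makes the coupling go through.

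Next, I would introduce the common parts $T := S \cap S' = S \setminus \{x_i\}$ and $U_{\text{out}} := \overline{S} \cap \overline{S'} = \overline{S} \setminus \{x_j\}$, and write $N_S = N_T + Y_i$, $N_{S'} = N_T + Y_j$, $N_{\overline{S}} = N_{U_{\text{out}}} + Y_j$, $N_{\overline{S'}} = N_{U_{\text{out}}} + Y_i$. Under the coupling, $N_{S'} \geq N_S$ and $N_{\overline{S'}} \leq N_{\overline{S}}$ a.s.; crucially, $|S|=|S'|$ and $N_S + N_{\overline{S}} = N_{S'} + N_{\overline{S'}} = N_D = |\nearneighbor|$ (the total is invariant).

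For the success-probability claim with $\measure = \precis$, the event $\{N_S / |S| \geq \gamma\}$ is contained in $\{N_{S'}/|S'| \geq \gamma\}$ under the coupling, giving $\pos(S,\precis,\gamma) \leq \pos(S',\precis,\gamma)$. For $\measure = \recall$, I rewrite $\{N_S \geq \gamma |\nearneighbor|\}$ as $\{(1-\gamma)N_S - \gamma N_{\overline{S}} \geq 0\}$; then the key algebraic identity
\begin{equation*}
\bigl[(1-\gamma)N_{S'} - \gamma N_{\overline{S'}}\bigr] - \bigl[(1-\gamma)N_S - \gamma N_{\overline{S}}\bigr] = Y_j - Y_i \geq 0
\end{equation*}
shows containment of the corresponding events and yields the recall case.

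For the expected-CR claim, the RT case ($\measurecomp = \precis$) is immediate from $N_{S'} \geq N_S$ a.s.\ and $|S|=|S'|$. The PT case ($\measurecomp = \recall$) is where the coupling pays off most directly: since $N_S + N_{\overline{S}} = N_{S'} + N_{\overline{S'}} = |\nearneighbor|$ pathwise, the ratios $N_S/|\nearneighbor|$ and $N_{S'}/|\nearneighbor|$ share the same (random) denominator, so $\mathbb{E}[\recall(S')] - \mathbb{E}[\recall(S)] = \mathbb{E}[(Y_j - Y_i)/|\nearneighbor|] \geq 0$, handling the usual convention (e.g.\ $\recall = 1$) on the measure-zero event $\{|\nearneighbor|=0\}$. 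The main obstacle to watch out for is the recall success-probability direction, where one must avoid the trap of treating $|\nearneighbor|$ as fixed: the algebraic reformulation above is what dissolves this difficulty and lets the single coupling handle all four sub-claims uniformly.
\end{proofsketch}
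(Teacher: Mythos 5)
Your coupling correctly handles the two precision-side sub-claims, but it does not establish the two recall-side ones, and the failure is structural rather than cosmetic. In the model the indicators $Y_i=\mathbf{1}[x_i\in \nearneighbor]$ and $Y_j=\mathbf{1}[x_j\in \nearneighbor]$ are \emph{independent}; your construction $Y_i=\mathbf{1}[U\leq\phi(x_i)]$, $Y_j=\mathbf{1}[U\leq\phi(x_j)]$ replaces this independent pair by a comonotone pair. That preserves each marginal, which is all the precision claims need: $\pos(S,\precis,\gamma)=Pr[N_T+Y_i\geq\lceil\gamma|S|\rceil]$ with $T=S\setminus\{x_i\}$ depends only on the joint law of $(N_T,Y_i)$, which your coupling leaves intact (likewise for $S'$), and $\mathbb{E}[\precis(\cdot)]$ is linear in the indicators. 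But $\pos(S,\recall,\gamma)=Pr[(1-\gamma)N_S-\gamma N_{\overline S}\geq 0]$ and $\mathbb{E}[\recall(S)]=\mathbb{E}[N_S/N_D]$ both depend on the \emph{joint} law of $(Y_i,Y_j)$, because $N_S$ contains $Y_i$ while $N_{\overline S}$ and $N_D$ contain $Y_j$, and neither quantity is affine in the pair. Under your coupling $Pr[Y_i=1,Y_j=1]=\phi(x_i)$ rather than $\phi(x_i)\phi(x_j)$, so the probabilities and expectations you manipulate are taken under the wrong law. Concretely, take $D=\{x_i,x_j\}$, $S=\{x_i\}$, $\phi(x_i)=\phi(x_j)=1/2$, with the convention $\recall=0$ when $\nearneighbor=\varnothing$: the true $\mathbb{E}[\recall(S)]$ is $3/8$, while on your coupled space it is $1/4$. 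The pathwise identity $[(1-\gamma)N_{S'}-\gamma N_{\overline{S'}}]-[(1-\gamma)N_S-\gamma N_{\overline S}]=Y_j-Y_i\geq 0$ and the resulting event containment are true \emph{on your space}, but they compare quantities that are not $\pos(S,\recall,\gamma)$ and $\pos(S',\recall,\gamma)$. (A smaller point: $\{|\nearneighbor|=0\}$ has probability $\prod_{x}(1-\phi(x))$, which need not be zero.)

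The lemma is nevertheless true, and the paper closes exactly this gap by never comparing a joint law across the swap. It conditions on $N_S=k$ using the genuine independence of $N_S$ and $N_{\overline S}$ (law of total probability, Eq.~\ref{eq:recall_prob}), shows the two \emph{marginal} dominations $N_S\leq_{st}N_{S'}$ and $N_{\overline{S'}}\leq_{st}N_{\overline S}$ by the one-step expansion in $\phi_i$ versus $\phi_j$, and then combines them through the increasing function $\psi(x)=Pr[N_{\overline{S'}}\leq\lfloor x(1-\gamma)/\gamma\rfloor]$ and the property that $X\leq_{st}Y$ implies $\mathbb{E}[\psi(X)]\leq\mathbb{E}[\psi(Y)]$; the expected-CR claims then follow from $\recall(S)\leq_{st}\recall(S')$ with $\psi(x)=x$. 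An alternative repair in your spirit is an exchange argument on the true (independent) space: only the configurations $(Y_i,Y_j)\in\{(0,1),(1,0)\}$ distinguish $S$ from $S'$, and the difference of the two probabilities (or expectations) factors as $\bigl(Pr[(0,1)]-Pr[(1,0)]\bigr)\cdot\bigl(q(1,0)-q(0,1)\bigr)$ with both factors nonnegative since $\phi(x_j)\geq\phi(x_i)$. Either route works; the single-uniform comonotone coupling as written does not.
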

 
\eat{Proof of Lemma \ref{lemma:monoto_replace} relies on the notion \textit{usual stochastic order} \cite{Nair2013}. Due to space limitation, we leave all the proofs to a complete version of this paper.} 

Lemma \ref{lemma:monoto_replace} says that, given $S \subseteq D$, if we replace $x_i \in S$ with $x_j \notin S$, where $x_j$ is more likely to be an oracle neighbor, both the success probability and the expected CR of $S$ will monotonically increase, for a given query.
Lemma \ref{lemma:monoto_replace} can be used to prune out a majority of unpromising solutions in the early stage of query processing. Specifically, given a query, we show that for any $0\leq k \leq |D|$, $\topphi{k}$ is optimal among all answers of size $k$. Recall $\topproxy{k}$ is the set of $k$ nearest neighbors of the query object w.r.t. the proxy distance. 
Formally,
\begin{theorem}
\label{theorem:topk}
For all $\gamma \in (0,1)$, $\forall 0\leq k \leq |D|$, $\topkproxy{k}$ has the highest success probability and expected CR among all $S \subseteq D$ with $|S|=k$.
\end{theorem}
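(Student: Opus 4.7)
\begin{proofsketch}
The plan is to reduce Theorem~\ref{theorem:topk} directly to Lemma~\ref{lemma:monoto_replace} via a straightforward exchange argument. The key preliminary observation is that $\phi(x) = CDF_{\mathcal{X}}(r - dist^P(x))$ is a non-increasing function of the proxy distance $dist^P(x)$, since any CDF is non-decreasing in its argument. Consequently, the $k$ objects with the smallest proxy distances, i.e.\ the set $\topkproxy{k}$, are exactly the $k$ objects in $D$ with the largest $\phi$-values (breaking ties in any fixed way consistent with the index function $I$).

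Given an arbitrary $S \subseteq D$ with $|S| = k$, I would transform $S$ into $\topkproxy{k}$ by a finite sequence of single-element swaps. As long as $S \neq \topkproxy{k}$, pick any $x_i \in S \setminus \topkproxy{k}$ and any $x_j \in \topkproxy{k} \setminus S$ (both sets are non-empty and have the same cardinality since $|S| = |\topkproxy{k}| = k$). By the observation above, $\phi(x_i) \leq \phi(x_j)$, because $x_j$ lies in the top-$k$ by $\phi$ and $x_i$ does not. Define $S' = S \cup \{x_j\} \setminus \{x_i\}$. Then Lemma~\ref{lemma:monoto_replace} applies and gives both
\begin{equation*}
\pos(S, M, \gamma) \leq \pos(S', M, \gamma) \quad \text{and} \quad \mathbb{E}[\measurecomp(S)] \leq \mathbb{E}[\measurecomp(S')].
\end{equation*}

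Each such swap strictly increases $|S \cap \topkproxy{k}|$ by one while preserving $|S| = k$, so after at most $k$ swaps the set $S$ coincides with $\topkproxy{k}$. Chaining the two inequalities above across the sequence of swaps yields
\begin{equation*}
\pos(S, M, \gamma) \leq \pos(\topkproxy{k}, M, \gamma) \quad \text{and} \quad \mathbb{E}[\measurecomp(S)] \leq \mathbb{E}[\measurecomp(\topkproxy{k})],
\end{equation*}
which establishes the theorem for every $\gamma \in (0,1)$ and $0 \leq k \leq |D|$. There is essentially no hard step here: all the work has been done in Lemma~\ref{lemma:monoto_replace}, and the only point worth flagging is the monotonicity of $\phi$ with respect to $dist^P$, which is what allows ``top-$k$ by proxy distance'' to coincide with ``top-$k$ by $\phi$-value'' and makes the exchange argument immediate.
\end{proofsketch}
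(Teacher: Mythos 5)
Your proposal is correct and follows essentially the same route as the paper's own proof: an exchange argument that transforms an arbitrary $S$ with $|S|=k$ into $\topkproxy{k}$ via single-element replacements, invoking Lemma~\ref{lemma:monoto_replace} at each step together with the observation that $\phi$ is non-increasing in the proxy distance. The only (cosmetic) difference is that you swap arbitrary elements of the symmetric difference rather than pairing the $i$-th smallest proxy-distance elements of $S$ and $\topkproxy{k}$ as the paper does, which if anything handles the overlap between $S$ and $\topkproxy{k}$ a bit more cleanly.
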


\eat{Theorem \ref{theorem:topk} entails that, given a query, the optimal answer must be $\topphi{k^*}$ for some $1\leq k^* \leq |D|$. The next question is how to efficiently find the $\topphi{k^*}$.} 
Theorem \ref{theorem:topk} entails that, given a query, there exists  some  $0\leq k^* \leq |D|$  such that  $\topphi{k^*}$ is guaranteed to be an optimal answer. We study the append operation and have the following result. 
\begin{lemma} [Monotonicity of Append]
\label{lemma:monoto_append}
For all $\gamma \in (0,1)$ and  $0\leq k \leq |D|-1$,
\begin{small}
\begin{equation}
\begin{split}
\pos(\topphi{k}, M_r, \gamma) &\leq \pos(\topphi{k+1}, M_r, \gamma)\hspace{0.2cm}
\mathbb{E}[\recall(\topphi{k})] \leq \mathbb{E}[\recall(\topphi{k+1})]
\end{split}
\end{equation}
\end{small}
\end{lemma}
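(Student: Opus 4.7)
The plan is to exploit a coupling argument built around the fact that recall, viewed as a function of the answer set with the oracle realization fixed, is pointwise monotone under set inclusion. Let $S = \topphi{k}$ and $S' = \topphi{k+1} = S \cup \{x\}$, where $x$ is the object with proxy index $k+1$. Under the Proxy Quality assumption, all randomness comes from the i.i.d.\ noise variables $\epsilon_1, \ldots, \epsilon_{|D|}$, and these jointly determine $\nearneighbor$, $N_S$, $N_{S'}$, and both recalls on a single common probability space. This shared probability space is the key structural feature that makes the argument go through without any of the stochastic-order machinery required for Lemma~\ref{lemma:monoto_replace}: whereas replacement swaps one object for another and reweights the joint law, appending merely enlarges the set on top of the same noise realization.

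Concretely, I would fix an arbitrary realization of $\epsilon_1, \ldots, \epsilon_{|D|}$. Since $\nearneighbor$ depends only on these noises, $|\nearneighbor|$ is the same whether we evaluate $\recall(S)$ or $\recall(S')$, and $N_{S'} = N_S + \mathbf{1}[x \in \nearneighbor] \geq N_S$. Hence, whenever $|\nearneighbor| > 0$,
\[
\recall(S') \;=\; \frac{N_{S'}}{|\nearneighbor|} \;\geq\; \frac{N_S}{|\nearneighbor|} \;=\; \recall(S).
\]
Taking expectation on the common probability space gives $\mathbb{E}[\recall(S')] \geq \mathbb{E}[\recall(S)]$. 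Moreover, pointwise domination implies event inclusion $\{\recall(S) \geq \gamma\} \subseteq \{\recall(S') \geq \gamma\}$, so monotonicity of $\Pr$ yields $\pos(S, M_r, \gamma) \leq \pos(S', M_r, \gamma)$, which is exactly the claim.

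The only remaining subtlety is the edge case $|\nearneighbor| = 0$, which I would handle by adopting the natural convention $\recall(\cdot) = 1$ on that null event (so pointwise monotonicity is preserved trivially), or equivalently by noting that this event contributes equally to both success-probability expressions via the $j = 0$ term in Eq.~\ref{eq:recall_prob}. I do not anticipate any real obstacle: the inequality is essentially the statement that adding more data can only help recall, and Proxy Quality lets us realize it on a single coupled probability space. As a sanity check one could rederive the inequality purely algebraically by combining Eq.~\ref{eq:pndist_update} (to relate $p_{N_{S'}}$ to $p_{N_{S}}$) with Eq.~\ref{eq:recall_prob}, but the coupling proof avoids that calculation entirely.
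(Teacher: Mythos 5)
Your proof is correct, but it takes a genuinely different route from the paper's. The paper proves $\recall(\topphi{k}) \leq_{st} \recall(\topphi{k+1})$ algebraically: it expands $\pos(\topphi{k}, \recall, \gamma)$ via Eq.~\ref{eq:recall_prob} in terms of the Poisson--binomial pmfs of $N_{\topphi{k}}$ and $N_{D\setminus\topphi{k}}$, uses the same probability-expansion trick as in the proof of Lemma~\ref{lemma:monoto_replace} to compare the conditional tail probabilities, and then invokes the usual-stochastic-order proposition with an increasing $\psi$ to transfer the comparison through the outer sum and to get the expectation inequality. You instead observe that $\topphi{k}\subseteq\topphi{k+1}$ and that, on the single probability space generated by the noise variables $\epsilon_1,\dots,\epsilon_{|D|}$, the quantities $N_S$, $N_{\overline S}$, and $|\nearneighbor|$ are all determined simultaneously, so recall is \emph{pointwise} nondecreasing under set inclusion; event inclusion then gives the $\pos$ inequality and monotonicity of expectation gives the CR inequality. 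Your argument is more elementary and in fact stronger: almost-sure domination implies the $\leq_{st}$ relation the paper derives, and it does not even use independence of the $\epsilon_i$, whereas the paper's pmf recurrence does. Your handling of the $|\nearneighbor|=0$ corner case is consistent with the paper's convention, since the $j=0$ term of Eq.~\ref{eq:recall_prob} counts that event as a success for both sets. What the paper's heavier machinery buys is uniformity with Lemma~\ref{lemma:monoto_replace}, where your coupling is unavailable because replacement does not preserve pointwise domination --- a limitation you correctly flag.
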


Lemma \ref{lemma:monoto_append} states that increasing $k$ leads to an increase both in the probability for $\topkproxy{k}$ to have a high recall and its expected recall.
In other words, the success probability of $\topphi{k}$ monotonically increases for RT queries, and the expected CR of $\topphi{k}$ monotonically increases for PT queries, as $k$ increases. 

By Theorem \ref{theorem:topk} and Lemma \ref{lemma:monoto_append}, for any given query, {\sl the answer  $\topkproxy{k^*}$ returned by Algorithm \pqa clearly has high success probability and the maximal expected CR, implying it   is an optimal answer. } 

\subsubsection{Algorithm \pqe}
\label{subsubsec:alg_pqe}


Recall that Algorithm \pqa requires $\Phi(D)$ as an input. In a general setting, when $\Phi(D)$ is unknown or \textsc{Proxy Quality} Assumption does not hold, we heuristically fit a normal distribution by sampling and probing on a limited number of objects, where the limit is controlled by a budget parameter. The resulting algorithm is \pqe (Algorithm~\ref{alg:pqe}).  
\eat{ 
\pqe heuristically estimates $\Phi(D)$, a required input of \pqa, by incurring more oracle calls and support queries in more general settings. } 

That is, in  \pqe, we employ $\epsilon_i \sim \mathcal{N}(\mu, \sigma)$ for all $x_i \in D$. 
\eat{$\mathcal{N}(\mu, \sigma)$ is a normal distribution of mean $\mu$ and standard deviation $\sigma$. } 
Specifically, we choose $\mu=0$, \eat{given the belief} which amounts to assuming that the proxy is an unbiased estimator of the  oracle. For $\sigma$, given a budget $b$, we sample and probe $b$ objects to estimate $\sigma$, denoted  $\hat{\sigma}$. 
We further introduce a hyper-parameter $\sigma_0$ to represent the deviation from the \textsc{Proxy Quality} assumption. In the ideal case where \textsc{Proxy Quality} holds, $\sigma_0=0$.  We heuristically choose $\sigma = \hat{\sigma} + \sigma_0$. We use $\mathcal{N}(\mu, \sigma)$ to compute $\Phi(D)$ and pass it to \pqa to find the  answers.
\begin{small}
\normalem
\begin{algorithm}
\caption{PQE}
\label{alg:pqe}

\DontPrintSemicolon
  \SetKwFunction{fpqe}{PQE}
  \SetKwProg{Fn}{Function}{:}{}
  \SetKwFunction{sample}{Sample}
  \SetKwFunction{std}{std}
  \SetKwFunction{cdf}{CDF$_{N(0, \sigma)}$}
  
\Fn{\fpqe{$D$, $\gamma$, $\delta$, $r$, $b$, $\sigma_0$}}
{
$S \gets$ \sample{$D$, $b$}\; \label{pqe:est_s}
$\sigma \gets \sigma_0 + $ \std{$\{ dist^O(x) - dist^P(x)\mid x \in S\}$}\; 
$\Phi_D \gets \{$ \cdf{$r - dist^P(x)$} $\mid x \in D\}$\; \label{pqe:est_e}
\eIf{RT query \label{pqe:pqa_s}}
{\KwRet \pqart{$\Phi_D$, $\gamma$, $\delta$}}
{\KwRet \pqapt{$\Phi_D$, $\gamma$, $\delta$}\label{pqe:pqa_e}}
}
\end{algorithm}
\ULforem
\end{small}

Algorithm \ref{alg:pqe} details the steps.
In lines \ref{pqe:est_s}-\ref{pqe:est_e}, we draw a sample $S \subseteq D$ of $|S|=b$ objects to estimate $\sigma$ and compute $\Phi(D)$. In lines \ref{pqe:pqa_s}-\ref{pqe:pqa_e}, we invoke Algorithm \pqa  with $\Phi(D)$ for PT or RT queries. 
The overall time complexity is dominated by Algorithm \pqa:
the additional time complexity on top of \pqa is $O(|D|)$. 


We now introduce \textsc{Core Set Closure} assumption, our second alternative assumption, and develop two algorithms $\csa$ and $\cse$. 

\vspace*{-1ex} 
\subsection{Core Set Closure}
\label{subsec:alg_cs}

In \S~\ref{subsubsec:csa_assumption}, we formally introduce the {\sc Core Set Closure} assumption and show how to find the optimal sample and probe strategy when core set size is known. We also analyze the case when core set size is unknown and show how to ensure high success probability. 
In \S~\ref{subsubsec:alg_csc}, we develop Algorithms \csa and \cse based on this. \dujian{In \S~\ref{subsubsec:progressive_qp}, we discuss how to support progressive query processing.}

\subsubsection{\textsc{Core Set Closure} Assumption}
\label{subsubsec:csa_assumption}

For a query, we define the \textit{core set} as the set of all oracle neighbors whose proxy prefix is a valid answer. 
We use $c$ to denote the size of a given core set $C$.
A core set $C$ is \textit{closed} w.r.t. an RT (resp. PT) query if for any $x \in C$, any oracle neighbor whose proxy index is larger (resp. smaller) than that of $x$ is also an element of $C$. {\sc Core Set Closure} assumption says that, \textit{for any given query, the core set is closed w.r.t. that query}. 

For RT queries, the core set is always closed, because as the proxy index of oracle neighbors increases, the recall of corresponding proxy prefix monotonically increases.
For PT queries, with a properly tuned proxy, the core set is likely to be closed in practice. 
\dujian{In Figure \ref{assmp_just:csc}, we report the average $Precision(\topkproxy{k})$ over $100$ random queries on two real datasets,  \textit{Mimic-III}~\cite{mimic_iii} and \textit{night-street}~\cite{jackson_dataset}. It is clear that the precision of proxy prefix $\topkproxy{k}$ monotonically decrease as $k$ increases on both datasets, which shows the core set closure property for PT queries.}
\begin{figure}
 \begin{subfigure}[b]{0.23\textwidth}
    \includegraphics[width=\textwidth]{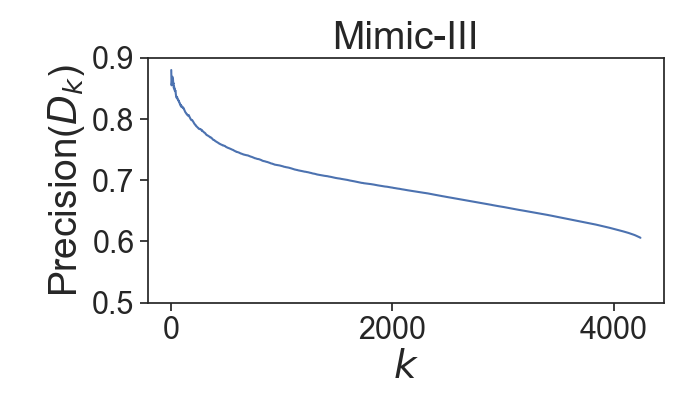}
 \end{subfigure}
\hfill
    \begin{subfigure}[b]{0.23\textwidth}
    \includegraphics[width=\textwidth]{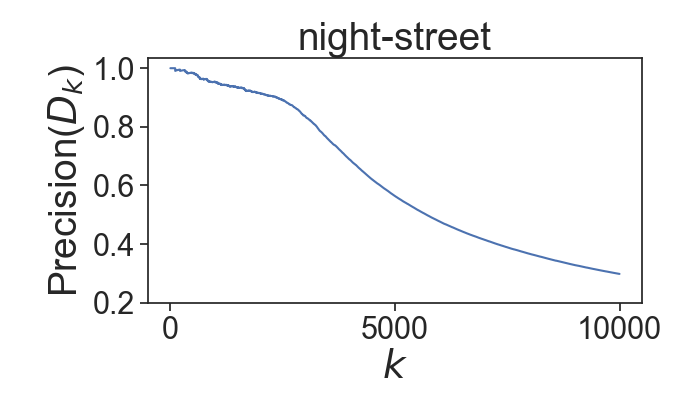}
 \end{subfigure}
 \vspace{-5mm}
 \caption{\dujian{Precision of proxy prefixes $\topkproxy{k}$.}}
 \label{assmp_just:csc}
\end{figure} 

We uniformly draw $m$ samples of size $s$ from $D$ to derive $\kunion$ where $\sunion$ is the union of samples, and return $\topkproxy{\kunion}$ as the answer.
Recall that $\kunion$ is the largest (resp. smallest)  $\indexproxy(x)$ for RT (resp. PT) queries, where $x$ is determined to be an oracle neighbor by probing $\sunion$ (see \S~\ref{sec:approach}, Assumption 2). 
If the core set $C$ is closed w.r.t. a given query, the success probability of $\topproxy{\kunion}$ is the likelihood of $\sunion$ intersecting with $C$, i.e., $\pos(\topproxy{\kunion}, \measure, \gamma) = Pr[\sunion \cap C \neq \varnothing]$.
Since samples are drawn uniformly, we have $Pr[\sunion \cap \closure \neq \varnothing] = 1 - (\binom{|D|-c}{s} / \binom{|D|}{s})^m = 1 - (\prod_{i=0}^{c-1} \frac{|D|-s-i}{|D|-i})^{m}$, where $s$ is the sample size and $m$ is the number of samples. We denote $\poi(|D|,s,m,c) := 1 - (\prod_{i=0}^{c-1} \frac{|D|-s-i}{|D|-i})^{m}$.


\paragraph{\textbf{When} $\mathbf{c}$ \textbf{is known}}

Given $s$ and $m$, the expected number of oracle calls made by the sample and probe strategy is  $\eou(s,m)=\mathbb{E}[|\sunion|]= |D|(1 - (1-\frac{s}{|D|})^{m})$. When $c$ is known, we can determine $s=s^*$ and $m=m^*$, which minimizes $\eou(s,m)$ while ensuring  $\poi(|D|,s,m,c) \geq 1 - \delta$, by solving the following equation:  
\begin{small}
\begin{equation}
\label{problem:find_opt_s_m}
\begin{split} 
     \min_{s,m} \quad & \eou(s,m)=|D|(1 - (1-\frac{s}{|D|})^{m})\\
    \textrm{s.t.} \quad &\poi(|D|,s,m,c) \geq 1 - \delta   \\
\end{split}
\end{equation}
\end{small}

By plugging in the expression for  $\poi(|D|,s,m,c)$, the constraint can be simplified to  $m \geq \lceil \frac{log(\delta)}{log(\prod_{i=0}^{c-1} \frac{|D|-s-i}{|D|-i})} \rceil$. By denoting the RHS as $\mlow(s)$, we can rewrite the constraint as $m \geq \mlow(s)$ for simplicity. 
Note, for a given $s$, $\eou(s,m)$ monotonically increases as $m$ increases.
For a fixed $s$, the optimal $m$ which ensures a high success probability (i.e., $\geq 1-\delta$) and minimizes $\eou(s,m)$ is clearly, $m=\mlow(s)$. 
As a special case, we have $m^*=\mlow(s^*)$.
Thus, a naive approach for finding  $s^*$ and $m^*$ is to compute $\eou(s,m)$ for each $1\leq s \leq |D|$ and $m=\mlow(s)$ and picking the best. 

Such exhaustive search for the exact value of $s^*$ and $m^*$, however, can be expensive in a large DB. Instead, we are interested in approximation solutions with good guarantees, which we develop next. Given a query, let $(s,m)$ denote the sample size and number of samples used by a strategy. Then $|D| - \eou(s,m)$ denotes the expected number of saved oracle calls compared with the exhaustive approach of probing every object in the DB. Define the \textit{savings ratio}  as $\xi(s,m) = \frac{|D|-\eou(s,m)}{|D|-\eou(s^*, m^*)}$. It denotes the fraction of oracle calls saved by  strategy $(s,m)$ compared to the optimal strategy $(s^*, m^*)$. 
\eat{Given a query, 
we define $\xi = \frac{|D|-\eou(s,m)}{|D|-\eou(s^*, m^*)}$ as the approximation ratio for any $s$ and $m$, which accounts for saved oracle usage in comparison to calling oracle on the whole DB.} 
A larger $\xi$ indicates a better approximation, and the optimal strategy $(s^*, m^*)$ yields $\xi(s^*, m^*) = 1$.  
 
Let us examine the special cases where either $s=1$ or $m=1$. 
For $s=1$, we let $m=\mlow(1)$, and 
\begin{small}
\begin{equation}
\label{eq:csa_s1}
     \xi_{s=1} := \xi(1, \mlow(1)) \geq \delta^{\frac{-1}{c}(\frac{1}{|D|} - \frac{|D|}{|D|-1)}} \cdot (1-1/|D|)
\end{equation}
\end{small}
For $m=1$, we set $s=s_1:=\lceil \frac{-log(\delta)}{\sum_{i=0}^{c-1} \frac{1}{|D|-i}} \rceil$ to ensure high success probability, and 
\begin{small}
\begin{equation}
\label{eq:csa_m1}
    \xi_{m=1} := \xi(s_1, 1)  \geq  \delta ^{\frac{-1}{|D|c}} \cdot (1 - 1/|D| + log(\delta)/c)
\end{equation}
\end{small}

In practice where e.g., $\delta=0.1$, $|D|=10,000$, and  $c=100$, we have both $\xi_{s=1}$ and $\xi_{m=1}$ being no less than $97.7\%$, that is, if we fix either $s=1$ or $m=1$ as above, the saved oracle usage is at least $97.7\%$ of what the optimal strategy $(s^*, m^*)$ achieves. Thus, either of them can be used as an approximation to the optimal strategy.  

\paragraph{\textbf{When} $c$ \textbf{is unknown} } 


We incur extra oracle calls and apply Hoeffding Bounds \cite{vershynin_2018} to ensure high success probability. 
\begin{proposition} [Hoeffding Bounds] 
Let $\{X_i\}_{i=1}^{n}$ be independent random variables, with  $X_i \in \{0,1\}$ and let $\mathbb{E}[X_i] = \mu$. Let $\hat{\mu} = \frac{1}{n} \sum_{i=1}^n X_i$. Then for $\forall \epsilon \geq 0$, we have the concentration bound
\begin{small}
\begin{equation}
    Pr[\hat{\mu}-\epsilon \leq \mu] \geq 1-exp(-2n\epsilon^2)
\end{equation}
\end{small}
\end{proposition}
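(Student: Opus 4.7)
The plan is to follow the classical Chernoff--Hoeffding argument, which turns the tail bound into an exponential-moment computation. First I would rewrite the event of interest in terms of a deviation: observe that $\Pr[\hat{\mu} - \epsilon \le \mu] = 1 - \Pr[\hat{\mu} - \mu > \epsilon]$, so it suffices to prove the upper tail bound $\Pr[\hat{\mu} - \mu > \epsilon] \le \exp(-2n\epsilon^2)$. Equivalently, writing $S_n = \sum_{i=1}^n (X_i - \mu)$, I need $\Pr[S_n > n\epsilon] \le \exp(-2n\epsilon^2)$.

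Next I would apply the standard exponential Markov inequality: for any $t > 0$,
\begin{equation*}
\Pr[S_n > n\epsilon] = \Pr[e^{t S_n} > e^{t n \epsilon}] \le e^{-t n \epsilon}\, \mathbb{E}[e^{t S_n}].
\end{equation*}
Since the $X_i$ (hence the centered variables $X_i - \mu$) are independent, the moment generating function factorizes:
\begin{equation*}
\mathbb{E}[e^{t S_n}] = \prod_{i=1}^n \mathbb{E}\bigl[e^{t (X_i - \mu)}\bigr].
\end{equation*}

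The key technical step, and the main obstacle, is Hoeffding's lemma: for any random variable $Y$ supported in $[a,b]$ with $\mathbb{E}[Y]=0$, one has $\mathbb{E}[e^{tY}] \le \exp\!\bigl(t^2 (b-a)^2 / 8\bigr)$. I would establish this by exploiting convexity of $y \mapsto e^{ty}$ to bound $e^{tY}$ by the chord through $(a, e^{ta})$ and $(b, e^{tb})$, taking expectations, and then showing via a second-derivative argument on the resulting log-moment-generating function $\psi(t) = \log \mathbb{E}[e^{tY}]$ that $\psi(0) = \psi'(0) = 0$ and $\psi''(t) \le (b-a)^2/4$, so Taylor's theorem yields $\psi(t) \le t^2(b-a)^2/8$. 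In our setting $X_i - \mu \in [-\mu, 1-\mu] \subseteq [a,b]$ with $b-a = 1$, so each factor is at most $\exp(t^2/8)$.

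Finally I would combine the pieces: $\Pr[S_n > n\epsilon] \le \exp(-t n \epsilon + n t^2/8)$, and optimize the free parameter by choosing $t = 4\epsilon$, which minimizes the exponent and yields the bound $\exp(-2n\epsilon^2)$. Rearranging gives $\Pr[\hat{\mu} - \epsilon \le \mu] \ge 1 - \exp(-2n\epsilon^2)$, completing the argument. The routine checks are monotonicity in $t$ for Markov's step and verifying the extremes of the optimization; the only substantive ingredient is Hoeffding's lemma, which is where the constant $2$ in the exponent ultimately comes from.
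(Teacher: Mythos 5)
Your argument is the standard Chernoff--Hoeffding derivation and it is correct: the reduction to the upper tail, the exponential Markov step, the factorization by independence, Hoeffding's lemma with $b-a=1$, and the optimization at $t=4\epsilon$ all check out and yield exactly the exponent $-2n\epsilon^2$. The paper does not prove this proposition at all---it is imported as a known result from the literature---so there is no in-paper argument to compare against; your proof is the canonical one that the cited reference gives.
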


For an RT query with target $\gamma$, we can derive a probabilistic lower bound for $c$ as follows. For an RT query, the core set $C$ consists of the top $(1-\gamma)\times 100$ \% oracle neighbors of largest proxy indices. That is, we can write $c=\lfloor |\nearneighbor|(1-\gamma)\rfloor+1$. When $s$ and $m$ are fixed, $\poi(|D|,s,m,c)$ monotonically increases as $c$ increases. Given $\deltar \in (0,1)$, let $\clow$ denote a probabilistic lower bound of $c$, i.e., $Pr[c \geq \clow] \geq 1-\deltar$. 
We can solve Eq. \ref{problem:find_opt_s_m}, either exactly or approximately as needed, subject to a more stringent constraint $\poi(|D|,s,m,\clow) \geq \frac{1-\delta}{1-\deltar}$ to find $s$ and $m$, which ensures an overall success probability no less than $1 - \delta$. 
We show how to derive such probabilistic lower bound $\clow$ using Hoeffding Bounds. 

Randomly draw $x_i \in D$. Define $X_i = 1$ iff $dist^O(x_i) \leq r$. We have $\mud := \mathbb{E}[X_i] = \frac{|\nearneighbor|}{|D|}$. Randomly draw $\{x_i\}_{i=1}^{n}$ with replacement. Denote $\hat{\mud} = \frac{1}{n} \sum_{i=1}^n X_i$.
For any $\epsr, \deltar \in (0,1)$, by Hoeffding Bounds, we have $Pr[\hat{\mud}-\epsr \leq \mud] \geq 1-\deltar $ if $n \geq \frac{log(\deltar)}{-2\epsr^2}$. 
For an RT query with target $\gamma$, 
since $c=\lfloor |\nearneighbor|(1-\gamma)\rfloor+1$,
and we have $Pr[c\geq \lfloor |D|(\hat{\mud}-\epsr)(1-\gamma)\rfloor+1]\geq 1-\deltar$, if $n \geq \frac{log(\deltar)}{-2\epsr^2}$. We denote the probabilistic lower bound $\clow := \lfloor |D|(\hat{\mud}-\epsr)(1-\gamma)\rfloor+1$.

For PT queries, such $\clow$ is hard to obtain. Given $\delta$ and $k$, we apply Hoeffding Bounds in a similar way to derive a probabilistic lower bound for the precision $\precis(\topkproxy{k})$, denoted as $\underline{\mu_{\topkproxy{k}}}$. That is, $Pr[\precis(\topkproxy{k}) \geq \underline{\mu_{\topkproxy{k}}}] \geq 1- \delta$. For a PT query with target $\gamma$,  $\topkproxy{k}$ is a high probability valid answer if $\underline{\mu_{\topkproxy{k}}} \geq \gamma$. We use heuristics to identify $\topkproxy{k}$ of high $\underline{\mu_{\topkproxy{k}}}$ and good CR (details in \S~\ref{subsubsec:alg_csc}). 

\subsubsection{Algorithms with \textsc{Core Set Closure}}
\label{subsubsec:alg_csc}
\paragraph{Algorithm \textbf{\csa}} Algorithm \csa returns high probability valid answers with a minimal expected number of oracle calls and empirically good CR, under \textsc{Core Set Closure} and taking $c$ as input.
\csa is presented in Algorithm \ref{alg:csa}. We compute $s^*$ and $m^*$ in line \ref{csa:getsm} either exactly or approximately, and draw samples in line \ref{csa:sample}. In lines \ref{csa:comp_k} to \ref{csa:e}, we compute $\kunion$ according to the query type and return $\topproxy{\kunion}$ as the answer. The exact solution to Eq.\ref{problem:find_opt_s_m} requires $O(c|D|)$ operations while approximate solutions take $O(c)$ operations. The  time complexity is dominated by accessing proxy prefixes, which requires sorting all objects w.r.t. proxy distance taking $O(|D|log(|D|))$. 
\normalem
\begin{algorithm}
\caption{\csa}
\label{alg:csa}

\DontPrintSemicolon
  \SetKwFunction{fcsa}{CSC}
  \SetKwProg{Fn}{Function}{:}{}
  \SetKwFunction{sample}{UniformSample}
  \SetKwFunction{fsm}{getsm}
  
\Fn{\fcsa{$D$, $c$, $\delta$}}{
$s^*, m^* \gets$ \fsm{$|D|$, $c$, $\delta$} \tcc*{Solve Eq.\ref{problem:find_opt_s_m}} \label{csa:getsm}
$\sunion \gets$ \sample{$D$, $s^*$, $m^*$}\; \label{csa:sample}
\eIf{RT query \label{csa:comp_k}}
{$\kunion \gets \max \{ I(x) \mid x \in \sunion \land dist^O(x) \leq r  \}$}
{$\kunion \gets \min \{ I(x) \mid x \in \sunion \land dist^O(x) \leq r  \}$}
\KwRet $\topproxy{\kunion}$
\label{csa:e}}
\end{algorithm}
\ULforem

\paragraph{Algorithm \textbf{\cse}}
Algorithm \cse incurs more oracle calls and returns high probability valid answers in general settings where $c$ is unknown or \textsc{Core Set Closure} assumption does not hold.

\cse is presented in Algorithm \ref{alg:cse}.
\cse-RT is given in lines \ref{csert:s} to \ref{csert:e}. Given $\epsr$ and $\deltar$, we sample and probe $n=\lceil \frac{log(\deltar)}{-2\epsr^2} \rceil$ objects to derive $\clow$. Then, we invoke Algorithm \csa to process the query subject to $\poi(|D|,s,m,\clow) \geq \frac{1-\delta}{1-\deltar}$. 
\cse-PT is given in lines \ref{csept:s} to \ref{csept:e}.
We use \csa to find good answer candidates, and apply Hoeffding Bounds to return high probability valid answers.
Specifically, given a query and budget $b'$, we sample and probe $b'$ objects to estimate $c$, then invoke \csa with the estimation to compute $\topkproxy{k_1}$ (lines \ref{csept:sample} to \ref{csept:est_topk}). 
We also use the same sample to estimate the largest $k=k_2$ such that $\topkproxy{k}$ has a sampled precision no less than $\gamma$ (line \ref{csept:khat}).
To improve CR (i.e., recall for PT queries), we set  $\hat{k}=\max\{k_1, k_2\}$ and consider $\topkproxy{\hat{k}}$ as the answer candidate (line \ref{csept:kstar}).
In lines \ref{csept:hb} to \ref{csept:e}, given $\epsilon_p$, we draw samples and estimate a probabilistic lower bound for $\precis(\topkproxy{\hat{k}})$ by applying Hoeffding Bounds.
For a PT query with target $\gamma$, we return $\topproxy{\hat{k}}$ if the probabilistic lower bound is no less than $\gamma$. O/w, we return all oracle neighbors identified from samples. The overall time complexity is dominated by \csa and is also $O(|D|log(|D|))$.

\vspace*{-1ex} 
\dujian{
\subsubsection{Progressive Query Processing}
\label{subsubsec:progressive_qp}
We observe that though we minimize the oracle usage, for some challenging queries  the \textit{bare minimum} of oracle calls can still be too high. We propose \textit{progressive query processing} for that. Recall that, our \csa and \cse approaches draw $m$ samples of size $s$ to compute $\kunion$ and return $\topkproxy{\kunion}$ as the answer. Instead of computing $\kunion$ after seeing all the samples, we can derive $\kunion'$ after seeing each sample and use $\kunion'$ to select answers with adaptive success probability bounds that are progressively better and better. We can keep refining $\kunion'$ when we see more samples, eventually approaching $\kunion$, but the user can terminate the evaluation at any time based on the oracle cost incurred thus far.
}

\normalem
\begin{algorithm}
\caption{CSE}
\label{alg:cse}
\DontPrintSemicolon
  \SetKwFunction{fcsert}{CSE-RT}
\SetKwFunction{fcsept}{CSE-PT}
  \SetKwProg{Fn}{Function}{:}{}
  \SetKwFunction{sample}{UniformSample}
  \SetKwFunction{fhb}{HoeffdingEst}

\Fn{\fcsert{$D$, $\delta$} \label{csert:s}}
{
$\hat{\mud} \gets$ \fhb{$D$, $\deltar$, $\epsilon_r$} \;
$\clow \gets \lfloor |D|(\hat{\mud}-\epsilon_r)(1-\gamma)\rfloor+1$\;

\KwRet \fcsa{$D$, $\clow$, $\frac{1-\delta}{1-\deltar}$}
\label{csert:e}}
  
\Fn{\fcsept{$D$, $\delta$}\label{csept:s}}
{
$S \gets$ \sample{$D$, $b'$}\; \label{csept:sample}
$\hat{c} \gets \frac{|D|}{|S|} \cdot \text{the size of core set w.r.t. } S$\;
$\topproxy{k_1} \gets$ \fcsa{$D$, $\hat{c}$, $1-\delta$}\;\label{csept:est_topk}
$k_2 \gets \max\{ I(x) \mid x \in S \land M_p(\topkproxy{I(x)} \cap S) \geq \gamma \}$\; \label{csept:khat}
$\hat{k} \gets \max\{ k_1, k_2\}$\;\label{csept:kstar}
$\underline{\mu_{\topproxy{\hat{k}}}} \gets$ \fhb{$\topproxy{\hat{k}}$, $\delta$, $\epsilon_p$} $- \epsilon_p$ \label{csept:hb}

\eIf{$\underline{\mu_{\topproxy{\hat{k}}}} \geq \gamma$}
{\KwRet $\topproxy{\hat{k}}$}
{\KwRet $\{x \in S \mid dist^O(x) \leq r\}$}
\label{csept:e}}

\Fn{\fhb{$D$, $\delta$, $\epsilon$}}{
$S \gets$ \sample{$D$, $\lceil \frac{log(\delta)}{-2\epsilon^2} \rceil$}\;
\KwRet $\hat{\mu} \gets \frac{|\{x \in S \mid dist^O(x) \leq r\}|}{|S|}$\;
}

\end{algorithm}
\ULforem



\eat{
\section{Algorithms}
\label{sec:algos}
We provide the details of our theoretical analysis and algorithms. 

\subsection{Algorithms with Proxy Quality Assumption}
\label{subsec:algos-assumption}

We first present our assumption on proxy model quality, under which we show how to compute the optimal valid answer which maximizes the expected CR with no oracle cost. We propose algorithms for both PT and RT queries, whose complexity are $O(|D|^3)$.

The assumption we will use is as follows. Consider a dataset $D = \{p_1, p_2, \cdots, p_{|D|}\}$ and a given query object $q$, we assume,

\begin{assumption} [Proxy Quality Assumption]
\label{proxy_quality_assumption}
\begin{equation}
dist(O(q), O(p_i)) = dist(O(q), T(p_i)) + \epsilon_i, \,  i=1,2,\cdots, |D|
\end{equation}
where $\epsilon_i$ are i.i.d., $\epsilon_i \sim \mathcal{X}$. 
\end{assumption}
For simplicity of language, we use \textit{proxy distance} to denote $dist(O(q), T(p_i))$ and \textit{oracle distance} to denote $dist(O(q), O(p_i))$ whenever the context is clear.
Assumption \ref{proxy_quality_assumption} assumes, for any $p_i \in D$, the difference between its oracle distance and proxy distance are i.i.d. random variables drawn from a known distribution $\mathcal{X}$. Let $\Phi(x)$ be the cdf of distribution $\mathcal{X}$.

The problem we are going to solve is as follows,
\begin{problem} [PT/RT-query With Quality Assumption]
\label{problem:ptrt_w_assump}
Given a PT/RT-query $Q$, if Assumption \ref{proxy_quality_assumption} holds, find the \textbf{valid} answer which \textbf{maximizes $\mathbb{E}[CR]$} with \textbf{zero oracle invocation}.
\end{problem}

We define an indicator function $\mathbf{1}(p_i)=1$ iff $dist(O(q), O(p_i)) \leq t$ for a given $t$ and $p_i \in D$. Under Assumption \ref{proxy_quality_assumption}, we have the following facts,
\begin{fact}
\label{fact:identity}
\begin{equation}
\begin{split}
Pr[\mathbf{1}(p_i)=1]&=Pr[dist(O(q), O(p_i)) \leq t]\\
&=Pr[dist(O(q), T(p_i)) + \epsilon_i \leq t]\\
&=Pr[\epsilon_i \leq t-dist(O(q), T(p_i))]\\
&=\Phi(t-dist(O(q), T(p_i))), \, i=1,2,\cdots, |D|
\end{split}
\end{equation}
\end{fact}

\begin{fact}
\label{fact:precision_recall}
For any real number $\gamma \in (0,1)$ and subset $S \subseteq D$ of size $s$,
\begin{equation}
\label{eq:precis_prob}
\begin{split}
    Pr[Precision(S) \geq \gamma] &= Pr[\frac{\sum_{p_i \in S} \mathbf{1}(p_i)}{|S|} \geq \gamma]\\
    &=Pr[\sum_{p_i \in S} \mathbf{1}(p_i) \geq \lceil s\gamma \rceil] \\
\end{split}
\end{equation}
\begin{equation}
\label{eq:recall_prob}
\begin{split}
    Pr[Recall(S) \geq \gamma] &= 
    \sum_{j=0}^{s} Pr[\sum_{p_i \in S} \mathbf{1}(p_i)=j]\cdot Pr[\sum_{p_i \in D \setminus S} \mathbf{1}(p_i) \leq \lfloor \frac{(1-\gamma)j}{\gamma} \rfloor]\\
\end{split}
\end{equation}
\end{fact}

Fact \ref{fact:identity} gives the likelihood for a given object $p_i$ having small oracle distance. We use $\phi(p_i) := \Phi(t-dist(O(q), T(p_i)))$ to denote this likelihood, or $\phi_i$ whenever the context is clear.
Fact \ref{fact:precision_recall} presents how to compute the success probability for both query types w.r.t. a specific subset $S$ and target rate $\gamma$. 
The computation of $Pr[Precision(S) \geq \gamma]$ and $Pr[Recall(S) \geq \gamma]$ involves the probability distribution of the summation $\sum_{p_i \in S} \mathbf{1}(p_i)$ for a given $S\subseteq D$, which turns out to be a poisson binomial random variable \cite{fernandez2010poisson}.

Given that, our algorithm first computes the probability distribution for $\sum_{p_i \in S} \mathbf{1}(p_i)$, with which we can derive $Pr[Precision(S) \geq \gamma]$ and $Pr[Recall(S) \geq \gamma]$ in $O(s)$ and $O(s|D|)$ separately for any given $S$ and $\gamma$. We use \textit{direct convolution} method proposed in \cite{osti_1548776} to compute the probability distribution. 

Suppose we have $S = \{p_1, p_2, \cdots, p_s\}$ and $p_{s+1} \in D \setminus S$. For $0 \leq k \leq s+1$, the following relation holds,
\begin{equation}
\label{eq:pndist_update}
\begin{split}
    Pr[\sum_{p_i \in S \cup \{p_{s+1}\}} \mathbf{1}(p_i) = k] = &Pr[\sum_{p_i \in S} \mathbf{1}(p_i) = k-1]\cdot \phi_{s+1} + \\
    &Pr[\sum_{p_i \in S} \mathbf{1}(p_i) = k]\cdot (1-\phi_{s+1})\\
\end{split}
\end{equation}
which can be used to compute the distribution of $\sum_{p_i \in S \cup \{p_{s+1}\}} \mathbf{1}(p_i)$ from $\sum_{p_i \in S} \mathbf{1}(p_i)$ in $O(s)$ time. We start from $S'=\{p_1\}$ where the distribution is $Pr[\sum_{p_i \in S'} \mathbf{1}(p_i) = 0] = 1-\phi_1$ and $Pr[\sum_{p_i \in S'} \mathbf{1}(p_i) = 1] = \phi_1$. Next, we continue adding elements $p_2, p_3, \cdots, p_s$ into $S'$ and update the distribution through equation \ref{eq:pndist_update}. It is clear that we need to compute $s$ distributions and the overall time complexity is $O(s^2)$. More details can be found in Algorithm \ref{alg:compute_pndist}.

\begin{algorithm}
\caption{Compute Possion Binomial Distribution }
\label{alg:compute_pndist}
\end{algorithm}
In line \ref{pbdist:init}, we initializes the probability distribution vector $PB$ with the probability distribution of $\mathbf{1}(p_1)$. In line \ref{pbdist:convl_s} to \ref{pbdist:convl_e}, we iteratively include $p_2, p_3, \cdots, p_s$ and update $PB$ accordingly.

After obtaining the probability distribution for $\sum_{p_i \in S} \mathbf{1}(p_i)$, we can compute the cdf $Pr[\sum_{p_i \in S} \mathbf{1}(p_i) \leq k]$ or the survival function $Pr[\sum_{p_i \in S} \mathbf{1}(p_i) \geq k]$ for any given $0 \leq k \leq s$ by summing over the probability distribution vector in $O(s)$ time. In light of this, we can compute $Pr[Precision(S) \geq \gamma]$ and $Pr[Recall(S) \geq \gamma]$ by utilizing the Fact \ref{fact:precision_recall}. The algorithms are given as Algorithm \ref{alg:compute_precision} and Algorithm \ref{alg:compute_recall}.

\begin{algorithm}
\caption{Compute $Pr[Precision(S) \geq \gamma]$}
\label{alg:compute_precision}
\end{algorithm}

\begin{algorithm}
\caption{Compute $Pr[Recall(S) \geq \gamma]$}
\label{alg:compute_recall}
\end{algorithm}

The implementation of Algorithm \ref{alg:compute_precision} and \ref{alg:compute_recall} is a direct application of Fact \ref{fact:precision_recall}.
It can be easily examined that Algorithm \ref{alg:compute_precision} is dominated by the computation of $PB$, which is of $O(s^2)$ time complexity; Algorithm \ref{alg:compute_recall} is dominated by the computation of $PB_{S}$ and $PB_{SC}$, which is of $O(|D|^2)$ time complexity.

Now we have seen how to compute $Pr[Precision(S) \geq \gamma]$ and $Pr[Recall(S) \geq \gamma]$ for a given subset $S\subseteq D$. Next, we are going to solve the problem \textit{how to find $S \subseteq D$ which maximizes $\mathbb{E}[CR]$ while $Pr[Precision(S) \geq pt] \geq prob$ for PT-query, or $Pr[Recall(S) \geq rt] \geq prob$ for RT-query, respectively}.

Note that a naive algorithm to this problem is enumerating all possible $S \subseteq D$ and computing both success probability and expectation to return the optimal answer,
which takes at least $2^{|D|}$ operations. Such exponential algorithm is prohibitively expensive even for moderate dataset size. We propose a $O(|D|^3)$ algorithm for both PT-query and RT-query by showing several monotonic properties of success probability and expected CR for a given answer. At first, we want to prove the following lemma,

\begin{lemma} [Monotonicity of Replacement]
\label{lemma:monoto_replace}
Given a subset $S \subseteq D$ and $\gamma \in (0,1)$, for $p_j \in S$ and $p_k \in D \setminus S$, if $\phi_j \leq \phi_k$, then
\begin{equation}
\begin{split}
     Pr[Precision(S) \geq \gamma] &\leq Pr[Precision(S \cup \{p_k\} \setminus \{p_j\}) \geq \gamma]\\
     Pr[Recall(S) \geq \gamma] &\leq Pr[Recall(S \cup \{p_k\} \setminus \{p_j\}) \geq \gamma]\\
\end{split}
\end{equation}
\end{lemma}

The intuition is simple, if we replace a ``bad'' object with a likely ``better'' object in the answer, both precision and recall rates are likely to get improved. In order to prove Lemma \ref{lemma:monoto_replace}, we need to introduce the notion of the \textit{usual stochastic order} 
, $\leq_{st}$.

\begin{definition} [Usual Stochastic Order]
\label{def:usual_stocha_order}
Let $X$ and $Y$ be two random variables such that,
\begin{equation}
    Pr[X \geq x] \leq Pr[Y \geq x], \, \forall x\in (-\infty, \infty).
\end{equation}
Then $X$ is said to be smaller than $Y$ in the usual stochastic order (denoted by $X \leq_{st} Y$).
\end{definition}

One important property for usual stochastic order is as follows,

\begin{proposition} 
\label{prop:st_expectation}
Let $X$ and $Y$ be two random variables. If $X \leq_{st} Y$, then 
\begin{equation}
    \mathbb{E}[\psi(X)] \leq \mathbb{E}[\psi(Y)]
\end{equation}
for all \textit{increasing function} $\psi$ for which the expectations exist.
\end{proposition}

The proof of Proposition \ref{prop:st_expectation} relies on constructing upper sets on the domain of $X$ and $Y$, which is beyond the scope of this paper. We refer interested readers to the literature  for more details.

Now, we can prove Lemma \ref{lemma:monoto_replace}.
\begin{proof}
Denote $S' = S \cup \{p_k\} \setminus \{p_j\}$.
We are going to prove a stronger result, that is $Precision(S) \leq_{st} Precision(S')$ and $Recall(S) \leq_{st} Recall(S')$.


We are going to first discuss precision, and then the recall.

For precision, define random variables $X=\sum_{p_i \in S \setminus \{p_j\}} \mathbf{1}(p_i)$, $Y=X+\mathbf{1}(p_j)$, and $Z=X+\mathbf{1}(p_k)$. By plugging equation \ref{eq:precis_prob}, we can rewrite $Pr[Precision(S)\geq \gamma]$ as $Pr[Y \geq \lceil s\gamma \rceil]$ and
\begin{equation}
\label{eq:prob_expand_trick}
\begin{split}
Pr[Y \geq \lceil s\gamma \rceil]
    &= Pr[X \geq \lceil s\gamma \rceil \land \mathbf{1}(p_j)=0] + Pr[X \geq \lceil s\gamma \rceil-1 \land \mathbf{1}(p_j)=1] \\
    &= Pr[X \geq \lceil s\gamma \rceil](1-\phi_j) + Pr[X \geq \lceil s\gamma \rceil-1]\phi_j\\
    &= \phi_j \cdot Pr[X=\lceil s\gamma \rceil-1] + Pr[X \geq \lceil s\gamma \rceil]
\end{split}
\end{equation}
where the last step is due to $Pr[X \geq \lceil s\gamma \rceil-1] - Pr[X \geq \lceil s\gamma \rceil] = Pr[X=\lceil s\gamma \rceil-1]$. Following a similar procedure, we can rewrite $Pr[Precision(S')\geq \gamma]$ as $Pr[Z \geq \lceil s\gamma \rceil]$ and 
\begin{equation}
    Pr[Z \geq \lceil s\gamma \rceil] = \phi_k \cdot Pr[X=\lceil s\gamma \rceil-1] + Pr[X \geq \lceil s\gamma \rceil]
\end{equation}

Since $\phi_j \leq \phi_k$, we have $Pr[Y \geq \lceil s\gamma \rceil] \leq Pr[Z \geq \lceil s\gamma \rceil]$ for $\gamma \in \mathbb{R}$, and therefore $Pr[Precision(S)\geq \gamma] \leq Pr[Precision(S')\geq \gamma]$ for $\gamma \in \mathbb{R}$. By definition \ref{def:usual_stocha_order}, we can conclude
\begin{equation}
    Precision(S) \leq_{st} Precision(S')
\end{equation}
and also $Y \leq_{st} Z$, which will help us to finish the proof for recall rates.

Next, we want to prove $Pr[Recall(S)\geq \gamma] \leq Pr[Recall(S')\geq \gamma]$ for $\gamma \in \mathbb{R}$. 

When $\gamma = 0$, we have $Pr[Recall(S)\geq 0] = 1 \leq Pr[Recall(S')\geq 0] = 1$.

When $\gamma \in \mathbb{R} \setminus \{0\}$, by denoting new random variables $X_C=\sum_{p_i \in D \setminus S \cup \{p_j\}} \mathbf{1}(p_i)$, $Y_C=X_C-\mathbf{1}(p_j)$, and $Z_C=X_C-\mathbf{1}(p_k)$, we can rewrite equation \ref{eq:recall_prob} as,
\begin{equation}
\begin{split}
    Pr[Recall(S) \geq \gamma] &= 
    \sum_{j=0}^{s} Pr[Y=j]\cdot Pr[Y_C \leq \lfloor \frac{j}{\gamma} \rfloor],\\
    Pr[Recall(S') \geq \gamma] &= 
    \sum_{j=0}^{s} Pr[Z=j]\cdot Pr[Z_C \leq \lfloor \frac{j}{\gamma} \rfloor].\\
\end{split}
\end{equation}
We are going to first prove $Pr[Z_C \leq \lfloor \frac{j}{\gamma} \rfloor]  \geq Pr[Y_C \leq \lfloor \frac{j}{\gamma} \rfloor]$ for any $0 \leq j \leq s$. We use the same trick as in equation \ref{eq:prob_expand_trick},
\begin{equation}
\begin{split}
Pr[Y_C \leq \lfloor \frac{j}{\gamma} \rfloor]
    &= Pr[X_C \leq \lfloor \frac{j}{\gamma} \rfloor \land \mathbf{1}(p_j)=0] + Pr[X_C \leq \lfloor \frac{j}{\gamma} \rfloor+1 \land \mathbf{1}(p_j)=1] \\
    &= Pr[X_C \leq \lfloor \frac{j}{\gamma} \rfloor](1-\phi_j) + Pr[X_C \leq \lfloor \frac{j}{\gamma} \rfloor+1]\phi_j\\
    &= \phi_j \cdot Pr[X_C = \lfloor \frac{j}{\gamma} \rfloor+1] + Pr[X_C \leq \lfloor \frac{j}{\gamma} \rfloor]
\end{split}
\end{equation}
and 
\begin{equation}
\begin{split}
Pr[Z_C \leq \lfloor \frac{j}{\gamma} \rfloor]
    &= \phi_k \cdot Pr[X_C = \lfloor \frac{j}{\gamma} \rfloor+1] + Pr[X_C \leq \lfloor \frac{j}{\gamma} \rfloor]
\end{split}
\end{equation}

Since $\phi_j \leq \phi_k$, we conclude $Pr[Y_C \leq \lfloor \frac{j}{\gamma} \rfloor]  \leq Pr[Z_C \leq \lfloor \frac{j}{\gamma} \rfloor]$ for any $0 \leq j \leq s$. 
By defining $\psi(x) := Pr[Z_C \leq \lfloor \frac{x}{\gamma} \rfloor]$, we can write the following inequality,
\begin{equation}
    Pr[Recall(S) \geq \gamma] \leq 
    \sum_{j=0}^{s} Pr[Y=j]\cdot \psi(j) =\mathbb{E}[\psi(Y)] .
\end{equation}
Because $Pr[Z_C \leq \lfloor \frac{x}{\gamma} \rfloor] = Pr[\gamma \cdot Z_C \leq x]$, which is the cdf for the random variable $\gamma Z_C$, we know $\psi(x)$ is an increasing function. By applying Proposition \ref{prop:st_expectation} in addition to the result $Y \leq_{st} Z$ which we have proven above, we have,
\begin{equation}
    Pr[Recall(S) \geq \gamma] \leq \mathbb{E}[\psi(Y)] \leq \mathbb{E}[\psi(Z)] = Pr[Recall(S') \geq \gamma]
\end{equation}
for $\gamma \in \mathbb{R} \setminus \{0\}$. 

At last, we can conclude $Pr[Recall(S) \geq \gamma] \leq Pr[Recall(S') \geq \gamma]$ for $\gamma \in \mathbb{R}$ and therefore
\begin{equation}
    Recall(S) \leq_{st} Recall(S')
\end{equation}
\end{proof}

Lemma \ref{lemma:monoto_replace} characterizes the monotonicity of success probability when we replace ``bad'' objects with likely ``better'' objects. Since our problem also asks for a maximal expected CR, a natural question to ask is \textit{does there exist monotonicity of  $\mathbb{E}[CR]$ over replacement operation?} It turns out that we can prove the following lemma,

\begin{lemma} [Monotonicity of Expectation (Replacement)]
\label{lemma:monoto_expect_replace}
Given a subset $S \subseteq D$ and $\gamma \in (0,1)$, for $p_j \in S$ and $p_k \in D \setminus S$, if $\phi_j \leq \phi_k$, then
\begin{equation}
\begin{split}
\mathbb{E}[Precision(S)] &\leq \mathbb{E}[Precision(S \cup \{p_k\} \setminus \{p_j\})]\\
\mathbb{E}[Recall(S)] &\leq \mathbb{E}[Recall(S \cup \{p_k\} \setminus \{p_j\})]\\
\end{split}
\end{equation}
\end{lemma}

\begin{proof}
Let $S' = S \cup \{p_k\} \setminus \{p_j\}$. Since we have proven $Precision(S) \leq_{st} Precision(S')$ and $Recall(S) \leq_{st} Recall(S')$, by Proposition \ref{prop:st_expectation}, we have
\begin{equation}
\begin{split}
    \mathbb{E}[\psi(Precision(S))] &\leq \mathbb{E}[\psi(Precision(S'))] \\
    \mathbb{E}[\psi(Recall(S))] &\leq \mathbb{E}[\psi(Recall(S'))] \\    
\end{split}
\end{equation}
where $\psi(x)$ is an increasing function.

Let $\psi(x) := x$, we can conclude 
\begin{equation}
\begin{split}
    \mathbb{E}[Precision(S)] &\leq \mathbb{E}[Precision(S')] \\
    \mathbb{E}[Recall(S)] &\leq \mathbb{E}[Recall(S')] \\    
\end{split}
\end{equation}
\end{proof}

Lemma \ref{lemma:monoto_replace} and Lemma \ref{lemma:monoto_expect_replace} together hint at that, if we replace ``bad'' objects with likely ``better'' objects from an answer, we can monotonically increase the success probability and expected CR values, which means a better solution. Recall that $\topkproxy{k} := \{ \text{top-k objects of smallest proxy distance} \}$, we can have the following lemma,

\begin{lemma} 
\label{lemma:topk}
For a given PT/RT-query, if $S\subseteq D$ of size $s$ is a valid answer, then $\topkproxy{s}$ is a valid answer and $\mathbb{E}[CR(\topkproxy{s})] \geq \mathbb{E}[CR(S)]$.
\end{lemma}

\begin{proof}
The proofs for PT and RT query are quite similar. We will prove the theorem for PT-query only and leave RT-query to readers.

Because $S \subseteq D$ is a valid answer, we have $Pr[Precision(S) \geq pt] \geq prob$. We want to prove that $D_s$ can be constructed from $S$ through a series of replacement operation, in each step we monotonically increase success probability and the expected CR. We use $S[k]$ to denote \textit{the object of the k-th smallest proxy distance in $S$}, and similar for $D_s[k]$.

An example series of replacement is as follows. Denote $S_0 := S$, we construct $S_i = S_{i-1} \setminus \{S[i]\} \cup \{D_s[i]\}$ for $i=1,2,\cdots,s$. By the definition of $D_s$, we know the proxy distance of $D_s[i]$ is smaller than $S[i]$. Because $\phi(p_i) = \Phi(t-dist(O(q), T(p_i)))$ for $p_i \in D$, which decreases when the proxy distance goes up, we have $\phi(D_s[i]) \geq \phi(S[i])$. According to Lemma \ref{lemma:monoto_replace} and \ref{lemma:monoto_expect_replace}, we have $Pr[Precision(S_i) \geq pt] \geq Pr[Precision(S_{i-1}) \geq pt]$ and $\mathbb{E}[CR(S_i)] \geq \mathbb{E}[CR(S_{i-1})]$ for $i=1,2,\cdots, s$. Note $D_s = S_s$, we can conclude 
\begin{equation}
\begin{split}
    Pr[Precision(D_s) \geq pt] &\geq Pr[Precision(S) \geq pt] \geq prob \\
    \mathbb{E}[CR(D_s)] &\geq \mathbb{E}[CR(S)] \\
\end{split}
\end{equation}
\end{proof}

Lemma \ref{lemma:topk} supports the following theorem directly,


Denote the optimal solution as $\topkproxy{k^*}$. Now, we are interested in how to find $D_{k^*}$ efficiently. In order to do that, we want to compare success probability and expected CR for any $\topkproxy{i}$ and $\topkproxy{j}$ where $1\leq i, j \leq |D|$. To achieve this, we have the following lemma,

\begin{lemma} [Monotonicity of Append]
\label{lemma:monoto_append}
Given $S \subseteq D$ of size $s$, $\gamma \in (0,1)$, and $p_j \in D \setminus S$.

(1)
\begin{equation}
\begin{split}
Pr[Recall(S) \geq \gamma] &\leq Pr[Recall(S\cup \{p_j\}) \geq \gamma]\\
\end{split}
\end{equation}

(2) When $\lceil s\gamma \rceil=\lceil (s+1)\gamma \rceil$,
\begin{equation}
\begin{split}
Pr[Precision(S) \geq \gamma] &\leq Pr[Precision(S\cup \{p_j\}) \geq \gamma]\\
\end{split}
\end{equation}
When $\lceil s\gamma \rceil+1=\lceil (s+1)\gamma \rceil$,
\begin{equation}
\begin{split}
Pr[Precision(S) \geq \gamma] &\geq Pr[Precision(S\cup \{p_j\}) \geq \gamma]\\
\end{split}
\end{equation}

\end{lemma}

\begin{proof}
Denote $S'= \{S \cup \{p_j\}\}$. We will first discuss precision, and then recall.

For precision, denote random variables $X=\sum_{p_i \in S} \mathbf{1}(p_i)$, and $Y=X+\mathbf{1}(p_j)$. Similarly, we can rewrite $Pr[Precision(S \cup \{p_j\})\geq \gamma]$ as $Pr[Y \geq \lceil (s+1)\gamma \rceil]$ and
\begin{equation}
\begin{split}
Pr[Y \geq \lceil (s+1)\gamma \rceil]
    &= \phi_j \cdot Pr[X=\lceil (s+1)\gamma \rceil-1] + Pr[X \geq \lceil (s+1)\gamma \rceil]
\end{split}
\end{equation}

When $\lceil s\gamma \rceil=\lceil (s+1)\gamma \rceil$, 
\begin{equation}
\end{equation}
\begin{equation}
\begin{split}
Pr[Y \geq \lceil (s+1)\gamma \rceil]
    &\geq Pr[X \geq \lceil (s+1)\gamma \rceil] = Pr[X \geq \lceil s\gamma \rceil]
\end{split}
\end{equation}

When $\lceil s\gamma \rceil+1=\lceil (s+1)\gamma \rceil$, 
\begin{equation}
\end{equation}
\begin{equation}
\begin{split}
Pr[Y \geq \lceil (s+1)\gamma \rceil]
    &\leq Pr[X = \lceil s\gamma \rceil] + Pr[X \geq \lceil s\gamma \rceil+1] = Pr[X \geq \lceil s\gamma \rceil]
\end{split}
\end{equation}
because $\phi_j \leq 1$.

Since $Pr[Precision(S)\geq \gamma]=Pr[X \geq \lceil s\gamma \rceil]$, the proof for precision is complete.

For recall, we are going to prove a stronger result, that is $Recall(S) \leq_{st} Recall(S')$. The proof is quite similar to the one to Lemma \ref{lemma:monoto_replace}, and here we only give a proof sketch.

When $\gamma = 0$, we have $Pr[Recall(S)\geq 0] = 1 \leq Pr[Recall(S')\geq 0] = 1$.

When $\gamma \in \mathbb{R} \setminus \{0\}$, denote $X_C=\sum_{p_i \in D \setminus S} \mathbf{1}(p_i)$, and $Y_C=X_C-\mathbf{1}(p_j)$. We can write
\begin{equation}
\begin{split}
    Pr[Recall(S) \geq \gamma] &= 
    \sum_{j=0}^{s} Pr[X=j]\cdot Pr[X_C \leq \lfloor \frac{j}{\gamma} \rfloor],\\
    Pr[Recall(S') \geq \gamma] &=
    \sum_{j=0}^{s+1} Pr[Y=j]\cdot Pr[Y_C \leq \lfloor \frac{j}{\gamma} \rfloor].\\
\end{split}
\end{equation}

Next we can show 
\begin{equation}
\begin{split}
Pr[Y_C \leq \lfloor \frac{j}{\gamma} \rfloor]
    &= \phi_j \cdot Pr[X_C = \lfloor \frac{j}{\gamma} \rfloor+1] + Pr[X_C \leq \lfloor \frac{j}{\gamma} \rfloor] \geq Pr[X_C \leq \lfloor \frac{j}{\gamma} \rfloor]
\end{split}
\end{equation}
By defining $\psi(x) := Pr[Y_C \leq \lfloor \frac{x}{\gamma} \rfloor]$, which is an increasing function, we can write the following inequality,
\begin{equation}
    Pr[Recall(S) \geq \gamma] \leq 
    \sum_{j=0}^{s} Pr[X=j]\cdot \psi(j) =\mathbb{E}[\psi(X)] .
\end{equation}
Since $X \leq_{st} Y$ \footnote{This result is easy to prove because $Y=X+\mathbf{1}(p_j)$, which suggests \textit{$Y$ is more likely than $X$ to take on large values}.}, by applying Proposition \ref{prop:st_expectation}, we have,
\begin{equation}
    Pr[Recall(S) \geq \gamma] \leq \mathbb{E}[\psi(X)] \leq \mathbb{E}[\psi(Y)] = Pr[Recall(S') \geq \gamma]
\end{equation}
for $\gamma \in \mathbb{R} \setminus \{0\}$. 

At last, we can conclude $Pr[Recall(S) \geq \gamma] \leq Pr[Recall(S') \geq \gamma]$ for $\gamma \in \mathbb{R}$ and therefore
\begin{equation}
    Recall(S) \leq_{st} Recall(S')
\end{equation}
\end{proof}

Similarly, we can have the monotonicity of expected recall for a given $S\subseteq D$,
\begin{lemma} [Monotonicity of Expectation (Append)]
\label{lemma:monoto_expect_append}
Given $S \subseteq D$ of size $s$, $\gamma \in (0,1)$, and $p_j \in D \setminus S$.
\begin{equation}
\begin{split}
\mathbb{E}[Recall(S)] \leq \mathbb{E}[Recall(S \cup \{p_j\})]
\end{split}
\end{equation}
\end{lemma}

\begin{proof}
Denote $S'=S \cup \{p_j\}$. Because $Recall(S) \leq_{st} Recall(S')$, by defining $\psi(x):=x$ and applying Proposition \ref{prop:st_expectation}, we have $\mathbb{E}[Recall(S)] \leq \mathbb{E}[Recall(S \cup \{p_j\})]$, same as the proof to Lemma \ref{lemma:monoto_expect_replace}.
\end{proof}

Lemma \ref{lemma:monoto_expect_append} says if we append new objects into an answer, the expected recall will monotonically increase. Note expected recall is indeed $\mathbb{E}[CR]$ for PT-query, we can have the following theorem for PT-query under Assumption \ref{proxy_quality_assumption},


A native algorithm is to use linear search to compute $Pr[Precision(D_i) \geq pt]$ for $i=1,2,\cdots , |D|$ and return the largest $k$ ensuring high success probability. This naive algorithm, however, requires a full scan of the list $D_1, D_2, \cdots, D_{|D|}$, and could be expensive in practice.
It turns out that we can solve this problem by examining only a subset of $\{D_i \mid i=1,2,\cdots, |D|\}$, which gives us a more efficient algorithm. We will first introduce the algorithm for PT-query, and then RT-query.

For PT-query, we first use Lemma \ref{lemma:monoto_append} to identify all intervals $[ks, ke]$ where $Pr[Precision(D_i) \geq pt]$ monotonically increases or decreases on $i \in [ks, ke]$. Next, we compute $Pr[Precision(D_{ks}) \geq pt]$ and $Pr[Precision(D_{ke}) \geq pt]$ for all interval end points. We say an interval $[ks, ke]$ is \textit{proper} if either $Pr[Precision(D_{ks}) \geq pt] \geq prob$ or $Pr[Precision(D_{ke}) \geq pt] \geq prob$. 
We use $[ks^*, ke^*]$ to denote the proper interval of largest end point values. It is clear that $kmax \in [ks^*, ke^*]$. At last, we use a linear search to find $kmax$ on interval $[ks^*, ke^*]$. The algorithm is given as Algorithm \ref{alg:pt_query},

\begin{algorithm}
\caption{Algorithm for PT-query}
\label{alg:pt_query}



\end{algorithm}

In line \ref{pt:comp_intvl_s} to \ref{pt:comp_intvl_e}, we construct a vector of all interval end points, $intvl$, by repeatedly applying Lemma \ref{lemma:monoto_append}. In line \ref{pt:comp_proper_s} to \ref{pt:comp_proper_e}, we compute the proper interval, $proper\_intvl$, of largest end points from $intvl$. In line \ref{pt:comp_kmax}, we use a linear search on $proper\_intvl$ to find $D_{kmax}$ which maximizes the expected recall rates according to our analysis. Algorithm \ref{alg:pt_query} invokes $PBPrecision$ for $O(|D|)$ objects and has an overall $O(|D|^3)$ time complexity.

For RT-query, because of Lemma \ref{lemma:monoto_append} and the fact $Recall(D)=1$, there exists a $k'$ where $Pr[Recall(D_k) \geq rt] \geq prob$ for $k' \leq k \leq |D|$. We first use a binary search to find $k'$. Next, we compute the $D_{k^*}$ which maximizes the expected precision through a linear search on the interval $[k', |D|]$. The algorithm is given in Algorithm \ref{alg:rt_query},

\begin{algorithm}
\caption{Algorithm for RT-query}
\label{alg:rt_query}




\end{algorithm}

In line \ref{rt:bs_s} to \ref{rt:bs_e}, we use binary search to find the smallest $k=L$ which ensures the success probability of $D_L$ no less than $prob$, taking $O(log(|D|)|D|^2)$ time. In line \ref{rt:kstar}, we use linear search on the interval $[L, |D|]$ to find out the $D_{k^*}$ which maximizes the expected precision and return it as the answer. The computation of expected precision for the given $D_k$ is described in line \ref{rt:exp_precis_s} to \ref{rt:exp_precis_e}. We first compute the probability distribution $PB$ and then sums over the product between all possible precision outcomes and corresponding probabilities, which takes $O(|D|^2)$ time complexity. The overall running time is therefore dominated by the linear search (line \ref{rt:kstar}) and turns out to be $O(|D|^3)$. 

It is also worth to point out that, after the binary search (line \ref{rt:bs_s} to \ref{rt:bs_e}), we already figure out a group of valid answers (i.e., $D_{L}, D_{L+1}, \cdots, D_{|D|}$). If the application is sensitive to running time and tolerant to expected precision, we can run Algorithm \ref{alg:rt_query} as an anytime algorithm after completing line \ref{rt:bs_s} to \ref{rt:bs_e}.

\subsection{Algorithms with Known Query Selectivity - RT} \label{subsec:rt-algos}
We first show how to compute $Pr[ \sampleoptdepth{S} \geq \optdepth]$, for a given sample $S\subseteq D$, assuming the query selective is known, and derive the best sampling strategy accordingly.

\subsubsection{Computing $Pr[ \sampleoptdepth{S} \geq \optdepth]$}\hfill

Recall that $\optdepth = \min\{k|\frac{|\topkproxy{k}\cap A|}{|A|} \geq rt\}$, and $\sampleoptdepth{S} = \min\{k|S \cap A \subseteq \topkproxy{k}\}$. We have
\begin{equation}
\begin{split}
    Pr[ \sampleoptdepth{S} \geq \optdepth] &= Pr[\min\{k|S \cap A \subseteq \topkproxy{k}\} \geq \optdepth] \\
    &=1 - Pr[S \cap A \cap \{D \setminus \topkproxy{k^*-1}\} = \emptyset]
\end{split}
\end{equation}
In other words, $Pr[ \bar{i}_{S} \geq \bar{i}]$ equals to the probability that the given sample $S$ has a non-empty intersection with the set $A \cap \{p|\frac{|D_{\leq I(p)}\cap A|}{|A|} \geq rt\}$. The size of $A \cap \{I(p)|\frac{|D_{\leq I(p)}\cap A|}{|A|} \geq rt\}$ equals to $K=\lfloor |A|(1-rt)\rfloor+1$.

For a sample $S\subseteq D$ of size $s$,
\begin{equation}
Pr[S \cap A \cap \{p|\frac{|D_{\leq I(p)}\cap A|}{|A|} \geq rt\} \text{ is empty}] = \frac{\binom{|D|-K}{s}}{\binom{|D|}{s}}
\end{equation}

The success probability of a single draw can be written as 
\begin{equation}
\begin{split}
Pr[\bar{i}_{S} \geq \bar{i}] &= 1 - \frac{\binom{|D|-K}{s}}{\binom{|D|}{s}}\\
    &= 1 - \frac{(|D|-K)!(|D|-s)!}{(|D|)!(|D|-s-K)!}\\
    &= 1 - \prod_{i=0}^{K-1} \frac{|D|-s-i}{|D|-i}\\
\end{split}
\end{equation}
where $K=\lfloor |A|(1-rt)\rfloor+1$.

\eat{
\begin{equation}
\begin{split}
\sum_{j=0}^{min\{|A|,s\}} Pr[\bar{i}_{S} \geq \bar{i} \mid  |S \cap A|=j] \cdot Pr[|S \cap A|=j]
\end{split}
\end{equation}

$Pr[|S \cap A|=j]$ is the probability that $S$ contains exactly $j$ qualified objects, which corresponds to a hypergeometric distribution \cite{babara2019stat}.

\begin{definition}
 [Hypergeometric Distribution]
Let the size of a finite population be $N$, with $K$ distinguished elements. \eat{belonging to Group $1$ and the other $N-K$ belonging to Group $2$.} Suppose we draw a sample of size $n$ without replacement, then the probability of seeing exactly $k$ distinguished elements is given by:
\begin{small}
\begin{equation}
\label{eq:pmf}
    pmf(k,N,K,n) = \frac{\binom{K}{k}\binom{N-K}{n-k}}{\binom{N}{n}}
\end{equation}
\end{small}
The survival function of seeing no less than $k$  distinct elements is: 
\begin{small}
\begin{equation}
\label{eq:svl}
    svl(k,N,K,n) = \sum_{i=k}^{min\{K, n\}} \frac{\binom{K}{i}\binom{N-K}{n-i}}{\binom{N}{n}}
\end{equation}
\end{small}
\end{definition}

From Eq. (\ref{eq:pmf}), we see that $Pr[|S \cap A|=j] = pmf(j,|D|,|A|,s)$. Furthermore, notice that objects in $A$ can be partitioned into $A_{<\bar{i}}$ and $A_{\geq \bar{i}}$ according to position values w.r.t. the index $I$. In a similar way, we can define $(S \cap A)_{<\bar{i}_{S}}$ and $(S \cap A)_{\geq \bar{i}_{S}}$. Since $\bar{i}$ and $\bar{i}_{S}$ respectively correspond to the smallest position in $A_{\geq \bar{i}}$ and $(S \cap A)_{\geq \bar{i}_{S}}$, we have $\bar{i} \leq \bar{i}_{S}$ if and only if $(S \cap A)_{\geq \bar{i}_{S}} \subseteq A_{\geq \bar{i}}$, which can be characterized by the survival function of a hypergeometric distribution with $N=|A|$, $K=|A_{\geq \bar{i}}|$, $n=|S \cap A|$, and $k=|(S \cap A)_{\geq \bar{i}_{S}}|$.
By definition, $|A_{\leq\bar{i}}| = \lceil rt \cdot |A| \rceil$, we have $|A_{\geq\bar{i}}|=|A_{>\bar{i}}|+1=|A|-|A_{\leq\bar{i}}|+1=\lfloor (1-rt) \cdot |A|\rfloor + 1$. Similarly, $|(S \cap A)_{\leq \bar{i}_{S}}| = \lceil rt \cdot |(S \cap A)| \rceil$ by definition, and we have $|(S \cap A)_{\geq \bar{i}_{S}}| = \lfloor (1-rt) \cdot |S \cap A| \rfloor + 1$, so

\begin{small}
\begin{equation}
\begin{split}
    Pr[\bar{i}_{S} \geq \bar{i} \mid |S \cap A|=j] &= svl(|(S \cap A)_{\geq \bar{i}_{S}}|, |A|, |A_{\geq \bar{i}}|, |S \cap A|)\\
    &= svl(\lfloor (1-rt) j \rfloor + 1, |A|, \lfloor (1-rt) |A|\rfloor + 1, j). \nonumber
\end{split}
\end{equation}
\end{small}

As a result, the success probability of a single draw is:
\begin{small}
\begin{equation}
\begin{split}
    Pr[\bar{i}_{S} \geq \bar{i}] &= \hspace*{-4ex} \sum_{j=0}^{min\{|A|,s\}} \hspace*{-3ex} pmf(j,|D|,|A|,s) \cdot svl(\lfloor (1-rt) j \rfloor + 1, |A|, \lfloor (1-rt) |A|\rfloor + 1, j) \nonumber 
\end{split}
\end{equation}
\end{small}

Define $f_{|A|}(s) := Pr[\bar{i}_{S} \geq \bar{i}]$. 
}

Define $f_{|A|}(s) := Pr[\bar{i}_{S} \geq \bar{i}]$.
Notice that for a given dataset $D$, $f_{|A|}(s)$ is a function of $s$ parameterized by $|A|$. Let $\sigma$ be the query selectivity, we have $|A|=|D|\sigma$. 

Next, we want to derive the best sampling strategy based on this single draw success probability. We are going to discuss two cases: 1) where the exact value of $|A|$ is known; and 2) where a one-sided confidence interval $Pr[\alow \leq |A|] \geq 1-\delta$ is known. Later, we will show how to compute this kind of confidence interval in practice.

\eat{
Since $|A|$ is usually unknown beforehand, we need to estimate it by drawing extra samples. 
Let $\mathbf{1}_{cond}$ be an indicator function where $\mathbf{1}_{cond} = 1$ iff condition $cond$ holds.
For a randomly sampled object $p\in D$, the indicator value $\mathbf{1}_{p\in A}$ is a Bernoulli random variable with the expected value $\mathbb{E}[\mathbf{1}_{p\in A}]=\frac{|A|}{|D|}$. If we can obtain a good estimation of $\mathbb{E}[\mathbf{1}_{p\in A}]$, then we can compute $|A|$ directly as   $\mathbb{E}[\mathbf{1}_{p\in A}]\cdot |D|$. The most straightforward method is to draw $n$ singleton samples, $p_1, p_2, \cdots, p_n$, from $D$ with replacement and estimate\footnote{The function $\mathbf{1}_{p_i\in A}$ is evaluated using the oracle, for samples $p_i$.} $\mathbb{E}[\mathbf{1}_{p\in A}] \approx \frac{1}{n} \sum_{i=1}^n \mathbf{1}_{p_i\in A}$. Due to random variations in the sampling process, the estimation introduces new errors which can be propagated and affect the overall success probability. To control the estimation error, we employ Hoeffding bounds \cite{vershynin_2018} and compute the confidence intervals with the desired error rates.

\begin{proposition} [Hoeffding Bounds]\cite{vershynin_2018} 
Let $X_1, \cdots, X_n$ be i.i.d. Bernoulli random variables and let $Y = \frac{1}{n} \sum_{i=1}^n X_i$. Then for $\forall \epsilon \geq 0$, we have the one-sided  concentration bound
\begin{small}
\begin{equation}
\label{eq:oneside_hoeffding_bound}
    Pr[Y-\epsilon \leq \mathbb{E}[Y]] \geq 1-exp(-2n\epsilon^2)
\end{equation}
\end{small}
and the two-sided bound
\begin{small}
\begin{equation}
\label{eq:twoside_hoeffding_bound}
    Pr[Y-\epsilon \leq \mathbb{E}[Y] \leq Y+\epsilon] \geq 1-2exp(-2n\epsilon^2) 
\end{equation}
\end{small}
\end{proposition}
Let $p_1, p_2, \cdots, p_n$, be independent singleton samples from $D$ with replacement. Define the indicator random variables $X_1, \cdots, X_{n}$, where $X_i = \mathbf{1}_{p_i\in A}$, $1 \leq i \leq n$. Let  $u$ denote the sampled query selectivity, that is, $u = \frac{1}{n}\sum_{i=1}^{n} X_i$. We can derive the confidence interval for the quantity $\frac{|A|}{|D|}$ as follows:
\begin{small}
\begin{equation}
\label{eq:ci_tpr}
    Pr[u - \epsilon \leq \frac{|A|}{|D|} \leq u + \epsilon]\geq 1-2exp(-2n\epsilon^2)
\end{equation}
\end{small}
\eat{ 
\note[Laks]{The above are bounds on the absolute error. Should we use bounds on the relative error instead?} 
\note[Dujian]{It will be great if we can. The challenge is that, the sample complexity of bounds on relative errors (e.g., Chernoff bounds) usually is a function of the interval width $\epsilon$ and the unknown mean $\mu$ (e.g., $\frac{|A|}{|D|}$ in our problem). As a result, one can derive an explicit sample complexity by manually constructing $\epsilon$ to eliminate the unknown variable $\mu$, in which sense the $\epsilon$ will then be a function of $\mu$. But in our case, we need both, that is, the explicit sample complexity and interval width $\epsilon$, where bounds on relative errors do not fit. }
} 
where $\epsilon \geq 0$. We further define $\aup := \lceil(u + \epsilon)|D|\rceil$ and $\alow := \lfloor(u - \epsilon)|D|\rfloor$ as the likely upper bound and lower bound for $|A|$. For any given $\delta$, we have $Pr[\alow \leq |A| \leq \aup] \geq 1-\delta$, provided $n \geq -\frac{log(\frac{\delta}{2})}{2\epsilon^2}$.

We are interested in $a_s \in [\alow, \aup]$ which gives the \textit{lowest} success probability for a single draw, i.e., $\forall a\colon \alow \leq a \leq \aup$, $f_{a}(s) \geq f_{a_s}(s)$. Interesting queries in real-world applications are usually selective. Since $a_s$ is an integer and the domain $[\alow, \aup]$ is bounded and is expected to be small, we can use brute force search to find the value of $a_s$ from $[\alow, \aup]$.  
\eat{In addition to the fact that $a_s$ is an integer, we can use brute-force search to find the exact value of $a_s$ from $[\alow, \aup]$.} In order to improve the efficiency, we further parallelize the computation of brute-force search and the average running time is within one second across our experiments, which is a fraction of \eat{the overall} a cost of invoking the oracle on the entire database \eat{we are interested in.}for our target applications.  
For a given sample $S$ of size $s$, we have $Pr[\bar{i}_S \geq \bar{i}] \geq f_{a_s}(s)$ with a probability no less than $1-\delta$, provided  $n \geq -\frac{log(\frac{\delta}{2})}{2\epsilon^2}$. Notice that $\delta$ and $\epsilon$ jointly control the sample  complexity of estimating $|A|$,\footnote{Equivalently, estimating query selectivity, for a given $D$.} which incurs extra oracle invocations. We want to choose $\delta$ and $\epsilon$, the sample size $s$ and the number of samples $m$ so as to \textit{minimize the overall cost}.
}

\subsubsection{Cost Minimization -- Given $|A|$}\hfill
\label{subsubsec:mincost_s_m}

Since applying the oracle to the whole dataset is prohibitively expensive, we are interested in the best $s$ and $m$, which result in minimal cost  and ensure high success probability. We define the cost as the number of oracle invocations, which equals the number of unique objects we see in the entire sampling process.
We can formalize the expected number of oracle invocations $N_{sample}$ for this repeated sampling phase as: 

\begin{small}
\begin{equation}
\label{eq:exp_cost_sample}
\begin{split} 
    \mathbb{E}[N_{sample}] = \mathbb{E}[|\bigcup_{i=1}^{m} S_i|]
    & = \mathbb{E}[\sum_{p \in D} \mathbf{1}_{p \in \bigcup_{i=1}^{m} S_i}] \\
    & = \sum_{p \in D} \mathbb{E}[\mathbf{1}_{p \in \bigcup_{i=1}^{m} S_i}] \\
    & = \sum_{p \in D} (1 - (1-\frac{s}{|D|})^{m}) \\
    & = |D|(1 - (1-\frac{s}{|D|})^{m})\\
\end{split}
\end{equation}
\end{small}
which is a function of $s, m$, denoted as $c(s,m)$.

Moreover, given $s$ and $m$, we have $ Pr[\bar{i}_{up} \geq \bar{i}] = Pr[\exists j\colon  1\leq j\leq m, \bar{i}_{S_j} \geq \bar{i}] = 1-(1-f_{|A|}(s))^{m}$.
\eat{If we know $\alow \leq |A| \leq \aup$, we can derive the lower bound for this conditional success probability as $Pr[\bar{i}_{up} \geq \bar{i} \mid \alow \leq |A| \leq \aup] \geq 1-(1-f_{a_s}(s))^{m}$.}

With the objective function (\ref{eq:exp_cost_sample}), we choose the best sampling strategy by solving the following optimization problem subject to a success probability constraint: 
\begin{equation}
\label{opt:samplesize_and_number}
\begin{split} 
     \textbf{minimize } c(s,m) &= |D|(1 - (1-\frac{s}{|D|})^{m})\\
    \textbf{s.t. } \\
    1-(1-f_{|A|}(s))^{m} &\geq prob   \\
\end{split}
\end{equation}

By plugging $f_{|A|}(s)=1 - \prod_{i=0}^{K-1} \frac{|D|-s-i}{|D|-i}$ into the constraint, we have 
\begin{equation}
    1-(1-f_{|A|}(s))^{m}=1-(\prod_{i=0}^{K-1} \frac{|D|-s-i}{|D|-i})^{m} \geq prob
\end{equation}
After swapping terms in both sides and take the logarithm, we get an equivalent constraint
\begin{equation}
    m \geq \frac{log(1-prob)}{log(\prod_{i=0}^{K-1} \frac{|D|-s-i}{|D|-i})}
\end{equation}
When $s$ is fixed, $c(s,m)$ monotonically increases when $m$ goes up. The smallest $m$ which ensures high success probability is therefore a function of $s$,

\begin{equation}
    m(s) = \lceil \frac{log(1-prob)}{log(\prod_{i=0}^{K-1} \frac{|D|-s-i}{|D|-i})} \rceil
\end{equation}

Given that, we can compute the corresponding $c(s,m(s))$ for each $1\leq s \leq |D|$, which takes $O(|D|^2)$ in total.

This exhaustive search algorithm can be inefficient in practice. We further offer an approximation algorithm which takes $O(|D|)$. The approximation algorithm fixes $s=1$ and computes $m(1)$ in $O(|D|)$ time. 
The approximation rate is defined as the ratio between saved cost $|D|-c(1,m(1))$ and the optimal saving $|D|-OPT$, where $OPT=\min_{s} \{c(s,m(s))\}$. We denote this ratio as $\gamma = \frac{|D|-c(1,m(1))}{|D|-OPT} \in (0, 1]$. Clearly, a larger $\gamma$ indicates a better cost saving and we would like to give a lower bound for $\gamma$.

First, we want to give a lower bound for $OPT$. Denote $\underline{m}(s)$ as
\begin{equation}
    \underline{m}(s) := \frac{log(1-prob)}{log(\prod_{i=0}^{K-1} \frac{|D|-s^*-i}{|D|-i})} \leq m(s).
\end{equation}
Because the cost increases when $m$ increases, we have $c(s,\underline{m(s)}) \leq c(s,m(s))$ and 
\begin{equation}
    \min_{s}\{ c(s,\underline{m}(s))\} \leq \min_{s} \{c(s,m(s))\} = OPT
\end{equation}
which is a lower bound for $OPT$.

Next, we want to compute $\min_{s}\{ c(s,\underline{m}(s))\}$. We have the following claim
\begin{claim}[Monotonicity of $c(s,\underline{m}(s))$]
Function $c(s,\underline{m}(s))$ monotonically decreases when $s > 0$.
\end{claim}
Since $1 \leq s \leq |D|-K$, we have $\min_{s}\{ c(s,\underline{m}(s))\} = c(D-K,\underline{m}(D-K))$. 

Now, we can give the approximation rate if we use $s=1$,
\begin{equation}
\begin{split}
\gamma = \frac{|D|-c(1,m(1))}{|D|-OPT} &\geq \frac{|D|-c(1,m(1))}{|D|-c(D-K,\underline{m}(D-K))} \\
&= \frac{(1-\frac{1}{|D|})^{\lceil \frac{log(1-prob)}{log(\prod_{i=0}^{K-1} \frac{|D|-1-i}{|D|-i})} \rceil}}{ (\frac{K}{|D|})^{ \frac{log(1-prob)}{log(\prod_{i=0}^{K-1} \frac{K-i}{|D|-i})} }} \\
&\geq \frac{(1-\frac{1}{|D|})^{ \frac{log(1-prob)}{log(\frac{|D|-K}{|D|})} + 1}}{(\frac{K}{|D|})^{ \frac{log(1-prob)}{\sum_{i=0}^{K-1}log(\frac{K-i}{|D|-i})} }} \\
\end{split}
\end{equation}

The expression is quite complicated, and we would like to simplify it. We first take logarithm on both sides,

\begin{equation}
\begin{split}
log(\gamma) &\geq log(1-\frac{1}{|D|}) -log(1-prob) \Big( \frac{log(1-\frac{1}{|D|})}{log(\frac{|D|}{|D|-K})} + \frac{log(\frac{|D|}{K})}{\sum_{i=0}^{K-1}log(\frac{|D|-i}{K-i})} \Big)\\
\end{split}
\end{equation}

We define $g(K) = \frac{log(1-\frac{1}{|D|})}{log(\frac{|D|}{|D|-K})}$ and $h(K) = \frac{log(\frac{|D|}{K})}{\sum_{i=0}^{K-1}log(\frac{|D|-i}{K-i})}$. Because $1-\frac{1}{x} \leq log(x) \leq x-1$ for $x>0$, we have
\begin{equation}
    g(K) \geq \frac{log(1-\frac{1}{|D|})}{1 - \frac{|D|-K}{|D|}} = \frac{log(1-\frac{1}{|D|})|D|}{K} 
    \geq -\frac{|D|}{K(|D|-1)}
\end{equation}
and
\begin{equation}
\begin{split}
 h(K) &\geq \frac{log(\frac{|D|}{K})}{\sum_{i=0}^{K-1}\frac{|D|-i}{K-i}-1}\\
 &=\frac{log(\frac{|D|}{K})}{(|D|-K)\sum_{i=1}^{K}\frac{1}{i}} \\
 &\geq \frac{\frac{|D|-K}{|D|}}{(|D|-K)K} \\
 &= \frac{1}{K|D|}
\end{split}
\end{equation}

By plugging these two lower bounds, we have
\begin{equation}
    log(\gamma) \geq log(1-\frac{1}{|D|}) -\frac{log(1-prob)}{K} \Big( \frac{1}{|D|} - \frac{|D|}{|D|-1}  \Big)
\end{equation}
After taking exponential, we have
\begin{equation}
    \gamma \geq (1-\frac{1}{|D|})(\frac{1}{1-prob})^{\frac{1}{K}( \frac{1}{|D|} - \frac{|D|}{|D|-1})}
\end{equation}
where $K=\lfloor |D|\sigma (1-rt)\rfloor+1$.
Denote this lower bound as $LB(|D|,prob,rt,\sigma)$.
We provide the following figures to help sense how good this lower bound could be in practice,
\begin{equation}
\begin{split}
LB(|D|=100000,prob=0.95,rt=0.95,\sigma=0.1)&=0.994\\
LB(|D|=100000,prob=0.99,rt=0.95,\sigma=0.1)&=0.991\\
LB(|D|=100000,prob=0.95,rt=0.99,\sigma=0.1)&=0.971\\
LB(|D|=100000,prob=0.95,rt=0.95,\sigma=0.01)&=0.943\\
\end{split}
\end{equation}

In general, we have the follow claim,
\begin{claim}
$LB(|D|,prob,rt,\sigma)$ decreases as $\sigma$ goes down or $rt$ goes up, and achieves the minima $(1-\frac{1}{|D|})(\frac{1}{1-prob})^{( \frac{1}{|D|} - \frac{|D|}{|D|-1})}$ when either $\sigma=0$ or $rt=1$. As $|D|$ goes to infinity, the minima converges to $1-prob$.
\end{claim}

\dujian{
\subsubsection{Cost Minimization -- Given $Pr[\alow \leq |A|] \geq 1-\delta$}\hfill
}
\dujian{
There are two new challenges, 1) the exact value of $|A|$ is unknown, instead we have a likely lower bound $\alow$; 2) the confidence interval brings new uncertainty which we should account for to ensure the overall success probability.}

\dujian{
For the first challenge, since $|A|$ is unknown, we can no longer use $f_{|A|}(s)$ as the lower bound for $Pr[\bar{i}_{S} \geq \bar{i}]$. We need a new lower bound, perhaps in a probabilistic way, and we are going to show $f_{\alow}(s)$ is the one we want. The idea is simple. Recall that $Pr[\bar{i}_{S} \geq \bar{i}] \geq 1 - (1-\frac{s}{|D|})^K$ where $K=\lfloor |A|(1-rt)\rfloor+1$. Denote $\underline{K}=\lfloor \alow(1-rt)\rfloor+1$, we have $Pr[\underline{K} \leq K] \geq 1-\delta$ because $Pr[\alow \leq |A|] \geq 1-\delta$. Then it can be easily examined that $Pr[Pr[\bar{i}_{S} \geq \bar{i}] \geq 1 - (1-\frac{s}{|D|})^{\underline{K}}] \geq 1-\delta$. And $f_{\alow}(s) = 1 - (1-\frac{s}{|D|})^{\underline{K}}$.}

\dujian{
For the second challenge, we need to use a more stringent probability threshold, $\hat{prob}$, to account for the new uncertainty. And the optimization problem is of the form,
\begin{equation}
\label{opt:samplesize_and_number}
\begin{split} 
     \textbf{minimize } cost_{prob} &= |D|(1 - (1-\frac{s}{|D|})^{m})\\
    \textbf{s.t. } \\
    1-(1-f_{\alow}(s))^{m} &\geq \hat{prob}   \\
\end{split}
\end{equation}
Since the error brought by the confidence interval of $|A|$ is independent with the sampling uncertainty, we have
\begin{equation} \label{eq: stat_guarantee}
\begin{split}
    Pr[\bar{i}_{up} \geq \bar{i}] &\geq Pr[\bar{i}_{up} \geq \bar{i} \land \alow \leq |A| ] \\
    &=Pr[\bar{i}_{up} \geq \bar{i} \mid \alow \leq |A|] \cdot Pr[\alow \leq |A|] \\
    &\geq \hat{prob} \cdot (1-\delta)\\
\end{split}
\end{equation}
Since we want $Pr[\bar{i}_{up} \geq \bar{i}] \geq prob$, $\hat{prob}$ is chosen to be at least $\frac{prob}{1-\delta}$. The optimization problem becomes
\begin{equation}
\label{opt:samplesize_and_number}
\begin{split} 
     \textbf{minimize } cost_{prob} &= |D|(1 - (1-\frac{s}{|D|})^{m})\\
    \textbf{s.t. } \\
    1-(1-f_{\alow}(s))^{m} &\geq \frac{prob}{1-\delta}   \\
\end{split}
\end{equation}
}

\hl{The optimality analysis is similar to  Problem (5), and will be added later.}

\eat{
where $\delta$ is the error rate of $|A|$ estimation and will be chosen later. That is, we will ensure the error rate is such that $Pr[\alow \leq |A| \leq \aup] \ge 1-\delta$. Also, since a small sample typically contains  few or even no qualified objects, we deliberately enforce a lower bound constraint on the sample size, $s \geq \frac{|D|}{\alow}$, to ensure that it  contains at least one qualified object with a high probability. 

\begin{proposition}
Problem (\ref{opt:samplesize_and_number}) is non-convex.
\end{proposition}
\begin{proof}
Non-convexity of Problem (\ref{opt:samplesize_and_number}) is caused by the non-convex constraint $cons(s, m) = \frac{prob}{1-\delta} - 1 + (1-f_{a_s}(s))^{m} \leq 0$. The following counterexample shows the non-convexity of $cons(s, m)$. Without loss of generality,  assume $|A|$ is known and $a_s=|A|$. Given a setting where $|D|=10000$, $|A|=1000$, and $rt=0.9$, consider $s_1=100$, $s_2=400$, $m_1=m_2=1$, it can be easily shown that the interpolated point $0.9\cdot cons(s_1,m_1)+0.1\cdot cons(s_2,m_2) = 0.56$ is of a smaller value than $cons(0.9s_1+0.1s_2, 0.9m_1+0.1m_2)=0.57$. As a result, $cons(s,m)$ is not convex and so is Problem (\ref{opt:samplesize_and_number}).
\end{proof}

Problem (\ref{opt:samplesize_and_number}) is an instance of  a \textit{bi-level optimization} problem \cite{DBLP:journals/tec/SinhaMD18}. The upper-level problem asks for the best combination of $s$ and $m$, which relies on the value of $a_s$ returned by the lower-level problem. The lower-level problem aims to find $a_s$ in a given range $[\alow, \aup]$, and in turn requires $s$ as input. \eat{Such hierarchical optimization structure raises challenges of non-convexity and non-continuity, which make it difficult to be handled mathematically.} There are two challenges associated with Problem (\ref{opt:samplesize_and_number}): (i) the problem is neither convex nor continuous; (ii) there is mutual dependence between the two levels.

To solve Problem (\ref{opt:samplesize_and_number}), we draw upon a body of methods known as clustering methods \cite{torn1986clustering}, which are designed for global optimization of problems whose objective functions are complex and expensive to evaluate. The intuition behind clustering methods, typically based on  \textit{Multistart}, is as follows: using a local optimization procedure \cite{kraft1988software}, repeatedly perform local search by sampling starting points.  However, a direct implementation of \textit{Multistart} may compute the same local optimum more than once yielding poor performance. Clustering methods were proposed to avoid such redundant computation. A typical clustering method consists of four steps \cite{torn1986clustering}:
\begin{enumerate}
    \item Sample points in the regions of interest;
    \item Transform samples to points grouped around local optimum;
    \item Use clustering methods to identify the groups (i.e., sub-domains containing local optimum);
    \item Perform local search on each sub-domain and return the best local optimum as the approximate global optimum.
\end{enumerate}
\begin{algorithm}
\caption{FindBestSampling()}
\label{alg:bilevel_opt_s_m}
\begin{small}
\begin{algorithmic}[1]
\Require problem parameters $param=(rt, prob, \delta, \epsilon, u)$
\Require lower-level optimizer $LOPT$, sample number $N$;
\State $s_u, m_u \gets \Call{Initialize}{ }$
\State $S \gets \Call{Sample}{\protect\Call{domainOf}{s,m}, size=N}$ \label{sm:sample}
\State $OBJ \gets \Call{ComputeObjective}{S, LOPT, param}$ \label{sm:objective}
\State $SD \gets \Call{FindSubdomain}{S,OBJ}$
\For{$sd \in SD$} \label{sm:cluster}
\State $local\_opt \gets \Call{LocalSearch}{sd, LOPT, param}$ \label{sm:localsearch}
\If{$local\_opt$ is better than $s_u, m_u$}
\State $s_u, m_u \gets local\_opt$
\EndIf
\EndFor
\State\Return $s_u, m_u$
\end{algorithmic}
\end{small}
\end{algorithm}

In the optimization process of Problem (\ref{opt:samplesize_and_number}), each sample is a pair  $(s,m)$ which instantiates a corresponding lower-level problem (see pseudo code is Algorithm \ref{alg:bilevel_opt_s_m}). Line \ref{sm:sample} draws $N$ samples from the variable domain of $s$ and $m$. Line \ref{sm:objective} computes the values of the objective function (\ref{eq:exp_cost_sample}) for each sample, while checking the constraint satisfaction of Problem (\ref{opt:samplesize_and_number}) with lower-level optimizer $LOPT$. In Line \ref{sm:cluster}, subdomains containing a local optimum are recognized with clustering  \cite{torn1986clustering} or by building a directed graph that connects samples to samples with a higher objective value \cite{torn1994tgo}. Each node with no incoming arcs indicates a local optimum neighborhood. Line \ref{sm:localsearch} uses a standard local search (e.g., SLSQP \cite{kraft1988software}) to compute the local optimum on each subdomain. The algorithm returns the best local optimum we found as $s_u$ and $m_u$.

Since the solution to the lower-level problem directly impacts our statistical guarantee,  we seek a sound approach to  return the optimal solution to that problem. 
Since the lower-level problem is discrete with a relatively small domain $|[\alow, \aup]|=2\epsilon|D|$, we employ a brute-force search on the bounded domain to solve it. $N$ is a hyper-parameter that controls how many samples to draw from the problem domains.  
We use a default value of $N=100$ in all our experiments and the optimization cost is observed to be seconds on average, which is a fraction of the cost of invoking the oracle on the entire database of interest.

Since the lower-level optimizer is chosen to be sound and always return the exact $a_s$, 

We can show that the solution to Problem (\ref{opt:samplesize_and_number}) returned by \ours-RT satisfies the desired statistical guarantee.
\begin{proof}
Let $s_u$, $m_u$ be the optimal choices given selectivity estimation $u$. We know the success probability of \ours-RT equals to $Pr[\bar{i}_{up} \geq \bar{i}]$, whose lower bound is given by

\begin{small}
\begin{equation} \label{eq: stat_guarantee}
\begin{split}
    Pr[\bar{i}_{up} \geq \bar{i}] &\geq Pr[\bar{i}_{up} \geq \bar{i} \land \alow \leq |A| \leq \aup] \\
    &=Pr[\bar{i}_{up} \geq \bar{i} \mid \alow \leq |A| \leq \aup] \cdot Pr[\alow \leq |A| \leq \aup] \\
    &\geq (1-(1-f_{a_s}(s_u))^{m_u}) \cdot (1-\delta)\\
    &\geq \frac{prob}{1-\delta} \cdot (1-\delta)\\
    &= prob
\end{split}
\end{equation}
\end{small}
\end{proof}
}

\subsubsection{Cost Minimization -- $\delta$, $\epsilon$}
As can be seen from  Eq.  (\ref{eq:ci_tpr}), $\delta$ and $\epsilon$ control the sample complexity for estimating $|A|$, which  contributes  additional sampling cost. We want to determine $\delta$ and $\epsilon$ carefully to minimize the expected number of oracle invocations. 
Given $s$, $m$, and $n$, 
We can show that the expected overall sampling cost is: 
\begin{small}
\begin{equation}
\label{eq:exp_cost_overall}
\mathbb{E}[N_{overall}] = |D|(1-(1-\frac{s}{|D|})^{m}(1-\frac{1}{|D|})^{n})
\end{equation}
\end{small}
where $n$ is lower bounded by $-\frac{log(\frac{\delta}{2})}{2\epsilon^2}$. The choice of $s$ and $m$ relies on the sampled query selectivity $u$ estimated from the first $n$ draws,  which is a random variable. To make  objective (\ref{eq:exp_cost_overall}) computable, we need to expand the expected overall cost conditioned on the possibility of intermediate sampling results:
\begin{small}
\begin{equation} \label{eq:expected_total_cost}
\begin{split}
\mathbb{E}[N_{overall}] &= \mathbb{E}[\mathbb{E}[N_{overall} \mid u]]\\
&= \sum_{i = 0}^{n} \mathbb{E}[N_{overall} \mid u=\frac{i}{n}] \cdot Pr[u=\frac{i}{n}]    
\end{split}
\end{equation}
\end{small}
Specifically, by adopting the notation $s_u$ and $m_u$, we have: 
\begin{small}
\begin{equation}
\mathbb{E}[N_{overall} \mid u=\frac{i}{n}] = |D|(1-(1-\frac{s_u}{|D|})^{m_u}(1-\frac{1}{|D|})^{n})
\end{equation}
\end{small}
\begin{small}
\begin{equation}
Pr[u=\frac{i}{n}] = \sum_{j=1}^{|D|} Pr[u=\frac{i}{n} \mid |A|=j] \cdot Pr[|A|=j]
\end{equation}
\end{small}
\begin{small}
\begin{equation}
Pr[u=\frac{i}{n} \mid |A|=j] = \binom{n}{i} (\frac{j}{|D|})^{i} (1-\frac{j}{|D|})^{n - i}    
\end{equation}
\end{small}
following a binomial distribution. If we know the distribution of query selectivity (e.g., uniform), we can determine $Pr[|A|=j]$ and write:  
\begin{equation}
Pr[|A|=j] = \frac{1}{|D|}
\end{equation}
We can hence formulate the objective function as follows, which is further denoted as $cost_{total}$:
\begin{small}
\begin{equation} 
\begin{split}
\mathbb{E}[N_{overall}] &= \sum_{i = 0}^{n} \mathbb{E}[N_{overall} \mid u=\frac{i}{n}] \cdot Pr[u=\frac{i}{n}]\\
&= \sum_{i = 0}^{n} |D|(1-(1-\frac{s_u}{|D|})^{m_u}(1-\frac{1}{|D|})^{n})\\
&\ \times \sum_{j=1}^{|D|} \binom{n}{i} (\frac{j}{|D|})^{i} (1-\frac{j}{|D|})^{n - i} \cdot \frac{1}{|D|}
\end{split}
\end{equation}
\end{small}

The cost objective above relies on the uniform distribution assumption for query selectivity. It may however lead to poor performance when the true distribution significantly deviates from the assumption. For example, in real-world applications, queries are likely to be selective while the selectivity distribution may be complicated and hard to  model accurately. In light of this, we also propose an adaptive approach, which can choose parameters specific to the selectivity of the given query, through pilot sampling. More specifically, we will use a small sample (e.g., of 100 objects) to obtain a rough estimation $\hat{u}$ of query selectivity  and optimize the objective with respect to $\hat{u}$ ($0 < \delta, \epsilon < 1$): 
\begin{small}
\begin{equation} \label{opt:exp_cost_overall}
\begin{split}
\textbf{minimize }
cost_{total} &=  |D|(1-(1-\frac{s_{\hat{u}}}{|D|})^{m_{\hat{u}}}(1-\frac{1}{|D|})^{n}) \\
\textbf{s.t. } \\
    n &\geq -\frac{log(\frac{\delta}{2})}{2\epsilon^2}   \\
\end{split}
\end{equation}
\end{small}

Problem (\ref{opt:exp_cost_overall}) is built on Problem (\ref{opt:samplesize_and_number}) with additionally finding the best $\epsilon$ and $\delta$. This makes it a natural tri-level optimization problem. Given the problem complexity, we first reduce it to a bi-level version and  solve it with a clustering algorithm. The reduction is straightforward. Conceptually, the upper-level problem aims to find the best $\epsilon$ and $\delta$, which is dependent on the optimal expected cost. The lower-level problem outputs the optimal expected cost conditioned on each $\epsilon$ and $\delta$. The lower-level problem is similar to Problem (\ref{opt:samplesize_and_number}) whenever $\epsilon$ and $\delta$ are instantiated. 
Hence, we use Algorithm \ref{alg:bilevel_opt_s_m} with a modest modification for the lower-level optimizer, $LOPT$ (see Algorithm \ref{alg:bilevel_opt_delta_eps}). 
Line \ref{de:selectivity} draws a small sample of size $M$ from $D$ to estimate the query selectivity $\hat{u}$. Line \ref{de:sample} draws $N$ samples from the variable domain of $\delta$ and $\epsilon$. Line \ref{de:objective} computes the objective and constraint of Problem (\ref{opt:exp_cost_overall}) for each sample with the lower-level optimizer $LOPT$ and $\hat{u}$. 
The rest is similar to Algorithm \ref{alg:bilevel_opt_s_m}.
Lines \ref{de:subdomain}-\ref{de:enditr} identify neighborhoods where local optima are found by standard local search algorithms. The algorithm returns the best $\delta$ and $\epsilon$.
\begin{algorithm}
\caption{MinimizeOverallCost()}
\label{alg:bilevel_opt_delta_eps}
\end{algorithm}

Our overall algorithm, \ours-RT, depicted in Algorithm \ref{alg:our_rt}, is as follows.  We first compute the best $\delta, \epsilon$ by Algorithm \ref{alg:bilevel_opt_delta_eps}. Next, the query selectivity is estimated with Eq. (\ref{eq:ci_tpr}). Using $(\delta,\epsilon)$, and the selectivity estimate $u$, we derive the optimal $s_u$ and $m_u$ by Algorithm \ref{alg:bilevel_opt_s_m}. Finally, we draw $m_u$ samples of size $s_u$, $S_1, S_2, \cdots, S_{m_u}$, and compute $\bar{i}_{up} := max\{\bar{i}_{S_1}, \bar{i}_{S_2}, \cdots, \bar{i}_{S_{m_u}}\}$. The answer is then returned as $Ans=D_{\bar{i}_{up}}$. Algorithm \ref{alg:our_rt} depicts our solution.

\begin{algorithm}
\caption{\ours-RT}
\label{alg:our_rt}
\end{algorithm}

\subsection{Algorithms for Precision-Target Queries}
\label{subsec:pt-algos}
\subsubsection{Phase 1: Pilot Sampling}\hfill
\label{subsubsec:phase_1}

An important observation is that, if chosen wisely, the proxy model should be able to emulate the oracle model. Given a query, qualified objects are expected to be located in the region of low positions $I(.)$, which implies the existence of dense regions including a majority of qualified objects and of reasonable recall rates. Since there may exist multiple dense regions, we want a union of them. Given a sample $S\subseteq D$, the problem can be formulated as follows:

\begin{problem} [FindLargestSubset]
Given a sample $S\subseteq D$ of size $s$, sort $p \in S$ by $I(p)$ increasingly as $p_1, p_2, \cdots, p_s$.
Denote $mean(i,j) = \frac{\sum_{k \in [i,j]} \mathbf{1}_{p_k \in A}}{j-i+1}$, where $\mathbf{1}_{p_k \in A}=1$ if $p_k$ is a qualified object; $\mathbf{1}_{p_k \in A}=0$, otherwise.
With a real number $c \in [0,1]$, we call an interval $[i,j]$ \textbf{proper} if $mean(i,j) \geq c$, and two intervals are \textbf{compatible} if they do not overlap. Let $G$ be the set of \textbf{proper} and mutually \textbf{compatible} intervals,
and further denote $\mathcal{G}$ as the set of all possible $G$. By defining $length(G) = \sum_{[i,j] \in G} j-i+1$, find $G_{max}$ where $\forall G \in \mathcal{G}$, $length(G) \leq length(G_{max})$. 
\end{problem}

With a sample $S$ of size $s$, a straightforward approach would first compute all $O(s^2)$ \textit{proper} intervals $[i,j]$,
and then sort them by the value of ending points, $j$, and starting points, $i$, in increasing order. Typically, proper intervals can be computed incrementally by $mean(i,j)=\frac{mean(i,j-1)\cdot (j-i) + \mathbf{1}_{p_j \in A}}{j-i+1}=\frac{mean(i+1,j)\cdot (j-i) + \mathbf{1}_{p_i \in A}}{j-i+1}$, and sorting can be done naturally along with this procedure. The time complexity is $O(s^2)$.
We use $opt[k]$ to track the optimal solution, $length(G_{max}^k)$, for a subset of proper intervals $[i,j]$ with $i \leq j \leq k$, $1\leq k \leq s$. We can compute the global optimum $opt[s]=length(G_{max})$ by dynamic programming, and the update rule is as follows,
\begin{small}
\begin{equation}
    opt[k] = max\{opt[k-1], max\{opt[i-1]+k-i+1 \mid [i,k] \text{ is proper}\}\}
\end{equation}
\end{small}
with $opt[0]=0$. We can also keep tracking of all intervals composing each $G_{max}^k$ along with the dynamic programming.
Since we only need to check all proper intervals once, the overall time complexity is $O(s^2)$.
Algorithm \ref{alg:find_dense_region} depicts our solution. 
\begin{algorithm}
\caption{$FindLargestSubset()$}
\label{alg:find_dense_region}



\end{algorithm}

\subsubsection{Phase 2: Iterative TPR Estimation}\hfill
\label{subsubsec:phase_2}

Given $D'$ (union of all plausible dense regions), we derive an answer $Ans$ of high precision with a high probability. Following a sampling process as in Equation (\ref{eq:ci_tpr}), we estimate the TPR of $D'$ and denote the estimation as $u$. Given $\epsilon$, with the one-side Hoeffding bound in Equation (\ref{eq:oneside_hoeffding_bound}), we can derive a likely lower bound, $\tprlow{D'} = u-\epsilon$, for the true TPR of $D'$ with probability:
\begin{small}
\begin{equation}
\label{eq:tpr_lowerbound}
    Pr[u-\epsilon \leq \frac{|A\cap D'|}{|D'|}] \geq 1-exp(-2n\epsilon^2)
\end{equation}
\end{small}
where $n$ is the number of singleton samples. Given $\delta$, by enforcing $n \geq -\frac{log(\delta)}{2\epsilon^2}$, we can ensure $Pr[u-\epsilon \leq \frac{|A\cap D'|}{|D'|}] \geq 1-\delta$.
If $pt \leq u-\epsilon$ and $1-\delta \geq prob$, we can conclude $Pr[pt \leq \frac{|A\cap D'|}{|D'|}] \geq Pr[u-\epsilon \leq \frac{|A\cap D'|}{|D'|}] \geq prob$, and, therefore have $Ans=D'$.
However, it is possible that $pt \geq u-\epsilon$ due to a small value of $u$ or a large $\epsilon$. In this case, we can improve $u$ by purifying $D'$ with oracles, i.e., applying the oracle on a selected subset to filter out unqualified objects. Or, we can choose a more stringent $\epsilon$. The hope is that, with an improved $u$ and more stringent $\epsilon$, the refined lower bound exceeds $pt$ in the next round of estimation. 

Notice that, a stringent $\epsilon$ leads to a large number of samples required by the Hoeffding bound, and the purification of $D'$ also needs extra oracle invocations.
In light of this, we are interested in how to decide the value of $\epsilon$ and the number of oracle invocations for improving $u$, $N_{imp}$. It turns out that we can choose these two parameters jointly by solving the following optimization problem. Let $S \subseteq D'$ denote the selected subset for purification of size $N_{imp}$, the remained unexamined subset is $D'\setminus S$, from which we will draw samples to re-estimate the TPR.
The expected cost of purification and estimation is hence
\begin{small}
\begin{equation}
    cost_{pass} = N_{imp} + (|D'|-N_{imp})(1 - (1 -\frac{1}{|D'|-N_{imp}})^n)
\end{equation}
\end{small}
where $n$ is the number of singleton samples for estimation. Let $u'$ be the estimated TPR of $D' \setminus S$, and $u' - \epsilon$ be the corresponding lower bound. Since we memorize the qualified objects in purification, $S_{qual}$, we can always improve the lower bound by adding these qualified objects back to $D'\setminus S$. The lower bound, $\tprlow{D'}$, is improved to be
\begin{small}
\begin{equation}
\label{eq:tpr_lowerbound_refined}
\tprlow{D'} = \frac{(u' - \epsilon)(|D'|-N_{imp}) + |S_{qual}|}{|D'|-N_{imp}+|S_{qual}|}
\end{equation}
\end{small}
We want to choose $N_{imp}$ and $\epsilon$ by solving the following optimization problem,
\begin{small}
\begin{equation} \label{opt: find_eps_np}
\begin{split}
\textbf{minimize }
cost_{pass} &=  N_{imp} + (|D'|-N_{imp})(1 - (1 -\frac{1}{|D'|-N_{imp}})^n) \\
\textbf{s.t. } \\
    n &\geq -\frac{log(\delta)}{2\epsilon^2}    \\
    pt &\leq \tprlow{D'}
\end{split}
\end{equation}
\end{small}
where $0<\delta<1$. Because $u'$ is unknown beforehand, we use the plug-in estimate $u$ instead. We can show that Problem (\ref{opt: find_eps_np}) is non-convex.

\begin{proposition}
Problem (\ref{opt: find_eps_np}) is non-convex.
\end{proposition}
\begin{proof}
Non-convexity of Problem (\ref{opt: find_eps_np}) is caused by the non-convex constraint $cons(N_{imp},\epsilon) = pt - \tprlow{D'} \leq 0$. 
The following counterexample shows the non-convexity of $cons(N_{imp},\epsilon)$.
Given a setting where $|D'|=10000$, $pt=0.95$, $u'=0.96$, $|S_{qual}|=u'\cdot N_{imp}$, consider $N_{imp}^1=100$, $N_{imp}^2=900$, $\epsilon_1=0.01$, $\epsilon_2=0.1$. The interpolation point $0.5\cdot cons(N_{imp}^1,\epsilon_1) + 0.5\cdot cons(N_{imp}^2,\epsilon_2)=385$ is smaller than the middle point $cons(0.5(N_{imp}^1+N_{imp}^2), 0.5(\epsilon_1+\epsilon_2))=403$. As a result, $cons(N_{imp},\epsilon)$ is not convex and so is Problem (\ref{opt: find_eps_np}).
\end{proof}

We solve Problem (\ref{opt: find_eps_np}) with a clustering procedure similar to Algorithm \ref{alg:bilevel_opt_s_m} with the difference that we no longer need lower level optimization and the problem is solved on the domain of $\epsilon$ and $N_{imp}$.
Also, to overcome the extra error caused by the \textit{Multiple Hypothesis Testing} problem, we choose a stringent significance threshold in each iteration to be $\delta = \frac{1-prob}{\maxitr}$ and return the qualified objects discovered so far if no valid lower bound were found after $\maxitr$ iterations. Theorem \ref{thm:stat_pt} gives the statistical guarantee.

\begin{proof}
Given $\maxitr$ and $\delta = \frac{1-prob}{\maxitr}$, denote $D'_i$ as the subset of interest at the $i$-th iteration, $term_i$ as a boolean vairable indicating if \ours-PT terminates at the $i$-th iteration, for $1\leq i \leq \maxitr$. Let $e$ be the overall error rate of \ours-PT, we are going to show $e \leq 1-prob$,
\begin{small}
\begin{equation} 
\begin{split}
    e &\leq \sum_{i=1}^{\maxitr} Pr[TPR(D'_i) < \tprlow{D'_i} \land term_i = True]\\
    & \leq \sum_{i=1}^{\maxitr} Pr[TPR(D'_i) < \tprlow{D'_i}]\\
    & = \sum_{i=1}^{\maxitr} (1-Pr[TPR(D'_i) \geq \tprlow{D'_i}])\\
    & \leq \maxitr \cdot \delta\\
    & = 1-prob
\end{split}
\end{equation}
\end{small}
\end{proof}
where the second last step is the direct application of Hoeffding bounds.
Algorithm \ref{alg:our_pt} depicts our solution.

\begin{algorithm}
\caption{\ours-PT}
\label{alg:our_pt}
\end{algorithm}

}

\vspace*{-1ex} 
\section{Experiments}
\label{sec:experiment}
Our extensive experiments
(1) assess the performance of \pqa to demonstrate its optimality w.r.t. CR and success probability under {\sc Proxy Quality} assumption (\S~\ref{subsec:pqa}), 
(2) assess the performance of  \csa to demonstrate its minimal oracle usage and success probability under {\sc Core Set Closure} assumption (\S~\ref{subsec:csa}), 
(3) compare \pqe, \cse 
with the baselines on CR and success probability under the \textit{same oracle usage} (\S~\ref{subsec:cse_pqe}) and under \textit{varied oracle budgets} (\S~\ref{subsec:oracle_efficiency}). 
(4) Compare \pqe, \cse 
with the baselines w.r.t. query time (\S~\ref{subsec:time_efficiency}). 
(5) Compare \pqe, \cse 
with the baselines w.r.t. CPU overhead, CR, and success probability on datasets of various sizes and domains  (\S~\ref{subsec:scalability}).

\subsection{Experimental Setup}
\subsubsection{Datasets and Proxy Models}

\vspace*{-1.1ex}
\paragraph{\textbf{Multi-label Image Recognition}}
\textit{VOC} \cite{voc2007} and \textit{COCO} \cite{coco2014} are widely used benchmarks in multi-label recognition tasks. 
The validation set of COCO consists of $40,504$ images from $80$ classes, and VOC contains $4,952$ images from $20$ object categories. We also uniformly sample a $8000$-image subset from COCO, denoted as COCO (small). We use COCO and COCO (small) in different experiments.
\vspace*{-1ex}
\paragraph{\textbf{Medical}}
\textit{Mimic-III} \cite{mimic_iii} and \textit{eICU} \cite{eICU2018} are two publicly available clinical datasets, that include patient trajectories, demographics collected by daily ICU admissions, and clinical measurements. 
After pruning records with only one admission, we obtain a Mimic-III subset of $4,243$ records and an eICU subset of $8,235$ records.

\paragraph{\textbf{Video}}
We use the \textit{night-street} dataset \cite{jackson_dataset} to support queries over classification tasks. Each video frame has a Boolean label indicating whether or not it contains a car. 
We uniformly draw a subset of  $10,000$ frames from the original dataset for evaluation.  
\begin{table}
    \centering
    \begin{small}
\begin{tabular}{ |c|c|c|c|  }
\hline
Datasets & Oracle & Proxy & Query targets \\
\hline
VOC\&COCO &\multirow{1}{*}{Human labeler}	&\multirow{1}{*}{ML-GCN\cite{chen2019mlgcn}}& \multirow{1}{*}{Similar images}\\
Mimic-III\&eICU &\multirow{1}{*}{Physicians}	&\multirow{1}{*}{LIG-Doctor\cite{jose2021lig}}& \multirow{1}{*}{Similar patients}\\
night-street  & Mask R-CNN\cite{he2017mask}	& ResNet-50\cite{he2015res}	& Car frames \\
\hline
\end{tabular}
\end{small}
\label{data:oracle_proxy}
\vspace{-6mm}
\end{table}
\vspace*{-1ex} 
\subsubsection{Baselines}
We consider the following baselines. 

\textbf{SUPG}
The closest work to ours is \supg \cite{DBLP:journals/pvldb/KangGBHZ20}.
\dujian{
\supg uses oracle $\oracle'$ with a Boolean output and a proxy model $\proxy'$ which outputs a score in $[0,1]$.
Given a query object $q$ and a radius $r$, our problem can be mapped to a binary classification problem: \textit{for each object $x$, is it a near neighbor to $q$ w.r.t. $r$?} {Given oracle $O$ and proxy $P$ for our problem,} a natural oracle predicate for SUPG should output $1$ when the given object is a near neighbor and $0$ otherwise. This translates to $O'(x) = 1$ iff $dist^O(x)<=r$.
Similarly, a natural proxy model for SUPG should give high scores when the corresponding object is more probable to be a near neighbor. As illustrated in Figure \ref{assmp_just:pqa}, with a properly chosen proxy, proxy distance $dist^P(.)$ is a good approximation for oracle distance $dist^O(.)$. Given that $dist^P(x) \in [0,1]$ in our problem, we choose $\proxy'(x)=1-dist^P(x)$ as the proxy model for SUPG. Intuitively, if object $x$ has a small proxy distance, $x$ is more likely to have a small oracle distance as well and hence more probable to be classified as $O'(x)=1$, which is properly reflected by a high value of $\proxy'(x)$. 
}


\dujian{
\textbf{Probabilistic Top-K~\cite{10.1145/3448016.3452786}}. 
This baseline studies approximate Top-K queries and delivers solutions with statistical guarantees. Given a query, there exists a direct mapping from our FRNN query to a Top-K query. For example, given a query object $q$ and radius $r$, an FRNN query asks for an answer $Ans$ which comprises all near-neighbours within the radius $r$ to $q$. Naturally, we can re-write this query in Top-K semantics: given query object $q$, return the Top-$K$ nearest neighbors to $q$ where $K=|Ans|$ according to the aforementioned FRNN query. Furthermore, this Top-K baseline relies on distribution over oracle predictions, which can be obtained from our \textsc{Proxy Quality} assumption in \pqa.
}

\dujian{
\textbf{Sample2Test} Given an FRNN RT (resp. PT) query, this baseline first probes samples w.r.t. a given oracle budget, and then selects the optimal proxy prefix as the answer according to sample precision (resp. recall). This is the approach used in probabilistic predicates (PP) \cite{DBLP:conf/sigmod/LuCKC18}, NoScope \cite{DBLP:journals/pvldb/KangEABZ17}, and also serves as a baseline in SUPG~\cite{DBLP:journals/pvldb/KangGBHZ20}.
Given a sample $S \subset D$ and a proxy index $k$, denote $S^k=S \cap \topkproxy{k}$.  The sample precision at $k$ is $Precision_S(k) = \frac{|S^k \cap \nearneighbor|}{|S^k|}$ and the sample recall is $Recall_S(k) = \frac{|S^k \cap \nearneighbor|}{|S \cap \nearneighbor|}$. Given dataset $D$ and the target $\gamma$, this baseline returns $\topkproxy{k'}$ where $k' = max\{1 \leq k \leq |D| \mid Precision_S(k) \geq \gamma\}$ for PT queries, and $k' = min\{1 \leq k \leq |D| \mid Recall_S(k) \geq \gamma\}$ for RT queries. We select the largest (resp. smallest) proxy prefix for PT (resp. RT) to improve CR.
}

\dujian{
\textbf{Scan2Test} We also consider the naive approach which probes all objects with the oracle and selects the correct answer set for a given query. This approach is used as the baseline in~\cite{10.1145/3448016.3452786}.
}

\subsubsection{Evaluation Measures} 
For both RT and PT queries, we are interested in three measures: (i)  \textit{empirical} success probability in relation  to the \textit{required} success  probabilities; 
(ii)  average CR of answers returned by different methods; and 
(iii)  query processing time including (a) CPU overhead and (b) number of oracle calls. We do not compare proxy time since it is identical for all approaches and is only a fraction of the overall query processing time. 

\subsubsection{Protocol} 
Our evaluation protocol randomly chooses several query objects from a dataset and aggregates our measures for those query objects. 
In Section~\ref{subsec:pqa}, we randomly choose $200$ query objects and aggregate their results. In other experiments, we randomly choose $50$ query objects and execute each query $10$ times and aggregate their results. This is because \pqa is deterministic  while other algorithms are subject to randomness, so we average  over multiple trials. 
We use cosine distance whenever the model outputs are  multi-dimensional vectors:  $dist_{cos}(\mathbf{y_1},\mathbf{y_2})=1-\frac{\mathbf{y_1 \cdot y_2}}{\|\mathbf{y_1}\|\cdot\|\mathbf{y_2}\|}$, \dujian{given its wide application in proximity query processing~\cite{7349760,mingdong2018methods,7577578}}. When the output is scalar (e.g., Boolean labels), we use the absolute difference $dist_{abs}(\mathbf{y_1},\mathbf{y_2})=|\mathbf{y_1}-\mathbf{y_2}|$ as the distance function, \dujian{which allows us to generalize SUPG query with boolean oracle predicates}. In all cases, the radius threshold is $r=0.9$. The choice of distances and thresholds has no impact on our statistical guarantees. 

{\bf Default values.} Unless otherwise stated, we set $\gamma$ (recall and precision targets) to $0.95$ and $\delta$ to $0.1$ in all our experiments. \dujian{We add a black dashed line ($-\cdot-$) at the level of $1-\delta$ in figures to help visually track the success probability of each approach.} 
We empirically choose $\sigma_0=0.3$ for \pqe, $\epsilon_p=0.1\%$ and $b'=100$ for \cse-PT, $\epsilon_r=10\%$ and $\deltar = 0.05$ for \cse-RT according to our experiment results.
We choose a small $\epsilon_p$ for \cse-PT to improve the probabilistic lower bound for precision, and a relatively large $\epsilon_r$ for \cse-RT to reduce the oracle usage incurred by applying Hoeffding Bounds. 
In addition, we only report results of \cse when $m=1$ (see Eq.~\ref{eq:csa_m1}) given its dominating performance over other $m$ settings.  

Our algorithms are implemented in Python 3.7 and experiments are conducted on a M1 Pro chip @ $3.22$GHz with a 16GB RAM. 

\vspace*{-1ex} 
\subsection{\pqa Success at Maximal CR}
\label{subsec:pqa}
\pqa finds high probability valid answers of maximal expected CR with zero oracle calls, whenever {\sc Proxy Quality} assumption holds (\S~\ref{subsec:alg_pq}). We test it on two semi-synthetic datasets. Specifically, we use real proxy distances from VOC and eICU, and synthesize oracle distances with a normal distribution $\mathcal{N}(0,\sigma=0.1)$.
We clip the normal distribution to $[0,1]$ to agree with the output range of our distance measures.  We demonstrate CR maximality and success probability guarantees of \pqa by comparing it with a series of variants. 
Recall that \pqa returns the top-$k^*$ objects of smallest proxy distances as the answer. We measure the success probability and CR when using a perturbed $k^*$. We try perturbations ranging from $-20\%$ to $20\%$ by returning top-$(1+perturb) \cdot k^*$ for $-20\% \leq perturb \leq 20\%$. 

The results are shown in Figure \ref{exp:pqa}. The two top plots summarize RT queries. Recall $\delta=0.1$, which requires success probability being no less than $90\%$. On VOC, with zero perturbation, \pqa achieves a $92\%$ empirical success probability and $39\%$ CR for RT queries. On eICU, the empirical success probability and CR are $90\%$ and $38\%$ respectively. With negative perturbation, empirical success probability quickly shrinks to nearly zero; with positive perturbation, CR starts to drop. This observation clearly demonstrates that \pqa gives the highest answer CR while respecting the success probability constraint. The two bottom plots are for PT queries, and are similar to RT. The unperturbed \pqa achieves $94\%$ empirical success probability on both datasets, a $53\%$ CR on VOC, and a $42\%$ CR on eICU. Any perturbation to $k^*$ either fails the success probability constraint or degrades CR.

\dujian{We compare \pqa and \bltopk on the same semi-synthetic VOC dataset (see Figure \ref{exp:pqa_topk}). Both approaches achieve desired success probability targets.
However, \bltopk suffers from huge oracle usage while \pqa needs no oracle calls,  indicating  \pqa is capable of efficient query processing when proxy quality distribution is known. Furthermore, we investigate the sensitivity of \pqa  to $\sigma$\footnote{We assume a normal distribution $\epsilon_i \sim \mathcal{N}(0, \sigma)$ for \pqa to compute $\Phi(D)$.},  shown in Figure \ref{exp:pqa_pqe}. We test $\pqa$ on VOC  with various $\sigma$ values and report \pqe performance\footnote{Budget of \pqe set equal to the oracle cost incurred by \cse for the given query.} for comparison purposes. As $\sigma$ increases, for both query types, the success probability of \pqa increases while CR decreases. This agrees with the intuition that, \textit{as the proxy quality gets worse, \pqa becomes more conservative to improve success probability at the cost of CR degradation.} Since \pqe does not rely on external $\sigma$, both success probability and CR are constant and higher than \pqa, given that \pqe has the flexibility to probe samples with the oracle. 
}
\begin{figure}
  \begin{subfigure}[b]{0.23\textwidth}
    \includegraphics[width=\textwidth]{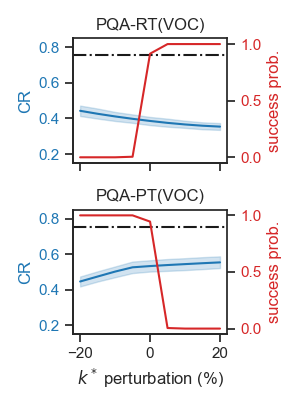}
  \end{subfigure}
 \hfill
    \begin{subfigure}[b]{0.23\textwidth}
    \includegraphics[width=\textwidth]{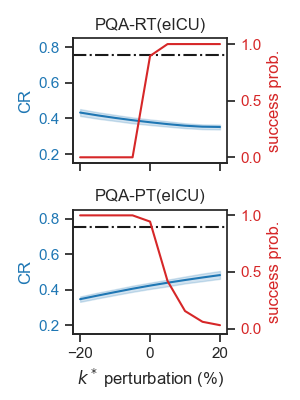}
  \end{subfigure}
  \vspace*{-5mm}
  \caption{PQA with perturbed $k^*$ on VOC and eICU datasets.}
  \label{exp:pqa}
  \begin{subfigure}[b]{0.23\textwidth}
    \includegraphics[width=\textwidth]{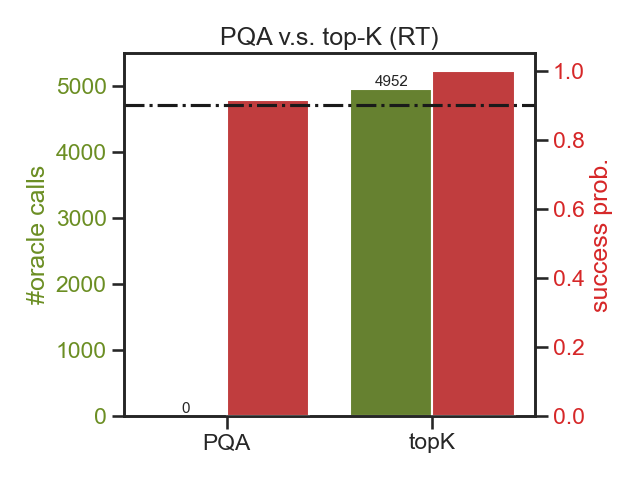}
  \end{subfigure}
 \hfill
    \begin{subfigure}[b]{0.23\textwidth}
    \includegraphics[width=\textwidth]{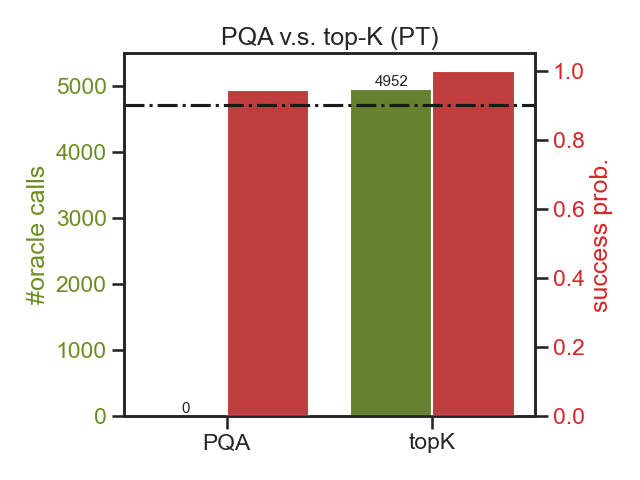}
  \end{subfigure}
  \vspace*{-5mm}
  \caption{\dujian{Comparison of \pqa and \bltopk  on VOC.}}
  \label{exp:pqa_topk}

  \begin{subfigure}[b]{0.23\textwidth}
    \includegraphics[width=\textwidth]{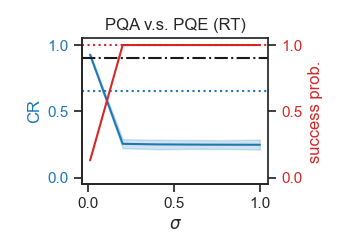}
  \end{subfigure}
 \hfill
    \begin{subfigure}[b]{0.23\textwidth}
    \includegraphics[width=\textwidth]{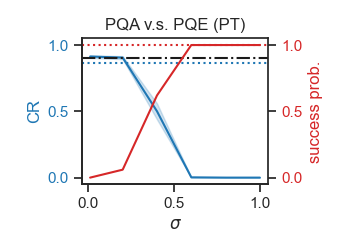}
  \end{subfigure}
  \vspace*{-6mm}
  \caption{\dujian{\pqa (solid line) v.s. \pqe (dotted line) on VOC.}}
  \label{exp:pqa_pqe}
\end{figure}


\vspace*{-1ex} 
\subsection{\csa with Minimal Oracle Usage}
\label{subsec:csa}
\csa ensures high success probabilities with minimal oracle usage under {\sc Core Set Closure} assumption (\S~\ref{subsec:alg_cs}). 
We implement
an \textit{exact} algorithm and two approximation algorithms to compute $s^*$ and $m^*$ (\S~\ref{subsubsec:csa_assumption}), 
\textit{Approx-s1} and \textit{Approx-m1}. We compare  these algorithms to two baselines, \textit{Rand-s} and \textit{Rand-sm}. Specifically, \textit{Rand-s} randomly chooses $s$ and sets $m=\mlow(s)$, whereas \textit{Rand-sm} chooses both $s, m$ at random. For each query, we precompute the core set size and feed it to all approaches. We study the empirical success probability and oracle usage  on VOC and eICU. The results for RT and PT queries are reported in Figure \ref{exp:csa}. \dujian{Especially, we report standard deviation of oracle usage for both query types on both datasets.} 
We also report CPU overheads  in Figure \ref{exp:csa_overheads}.

\begin{figure*}
  \begin{subfigure}[b]{0.24\textwidth}
    \includegraphics[width=\textwidth]{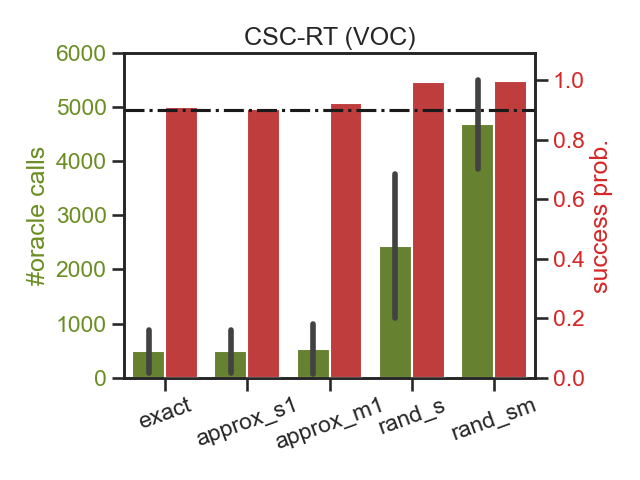}
  \end{subfigure}
  \hfill
  \begin{subfigure}[b]{0.24\textwidth}
    \includegraphics[width=\textwidth]{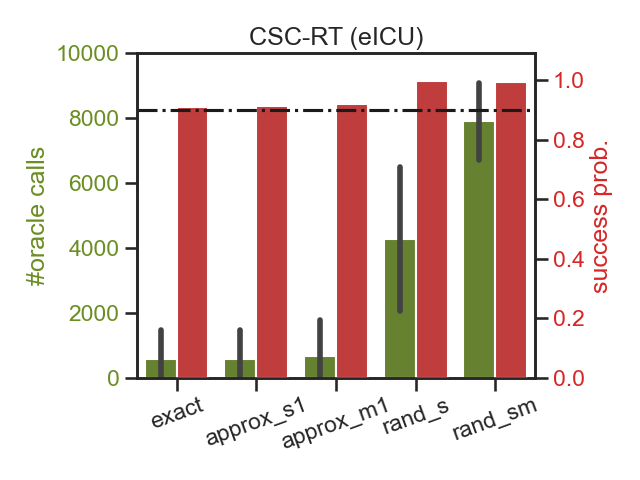}
  \end{subfigure}
\hfill
  \begin{subfigure}[b]{0.24\textwidth}
    \includegraphics[width=\textwidth]{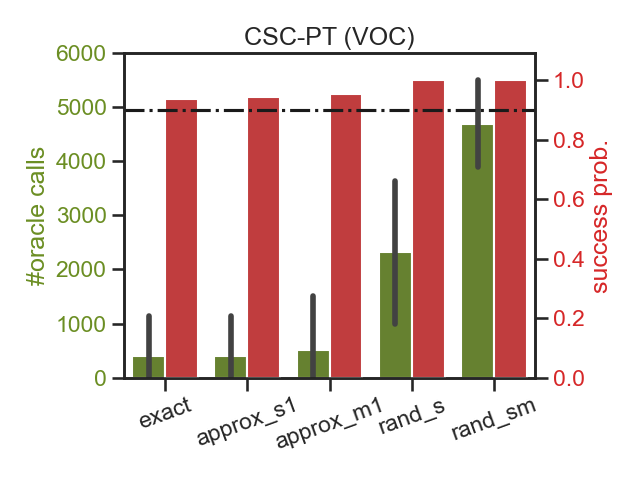}
  \end{subfigure}
  \hfill
  \begin{subfigure}[b]{0.24\textwidth}
    \includegraphics[width=\textwidth]{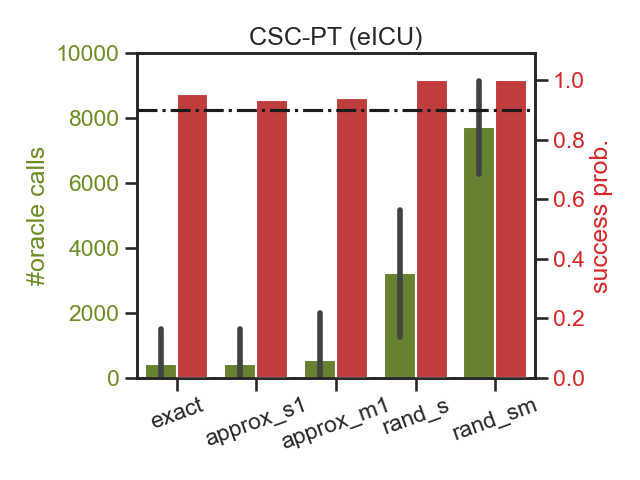}
  \end{subfigure}  
    \vspace*{-6mm}
  \caption{\dujian{Oracle usage and empirical success probability by \csa-RT and \csa-PT.}}
  \label{exp:csa}
  \vspace*{-3mm}
\end{figure*}

For RT queries, all approaches achieve high  empirical success probability. The exact algorithm invokes the oracle on only $9.8\%$ objects in VOC and $7.1\%$ objects in eICU. The oracle usage of approximation algorithms is just up to $1.1\%$ more than the exact algorithm.
However, the baseline \textit{Rand-s} applies the oracle on at least $49.3\%$ objects and \textit{Rand-sm} makes oracles calls on at least $94.6\%$ objects. 

\begin{figure}[htpb]
  \begin{subfigure}[b]{0.23\textwidth}
    \includegraphics[width=\textwidth]{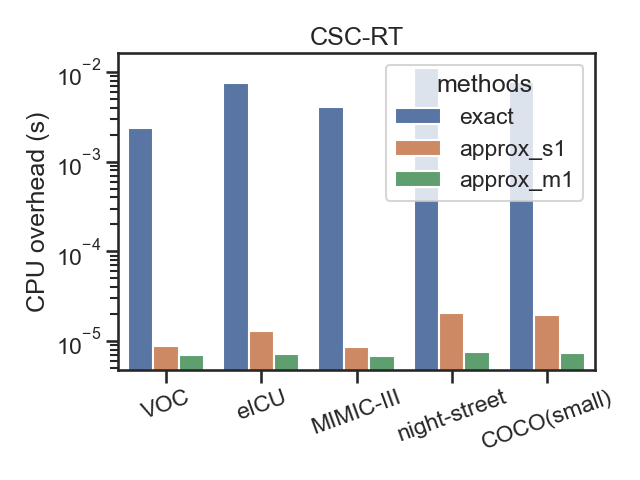}
  \end{subfigure}
  \hfill
  \begin{subfigure}[b]{0.23\textwidth}
    \includegraphics[width=\textwidth]{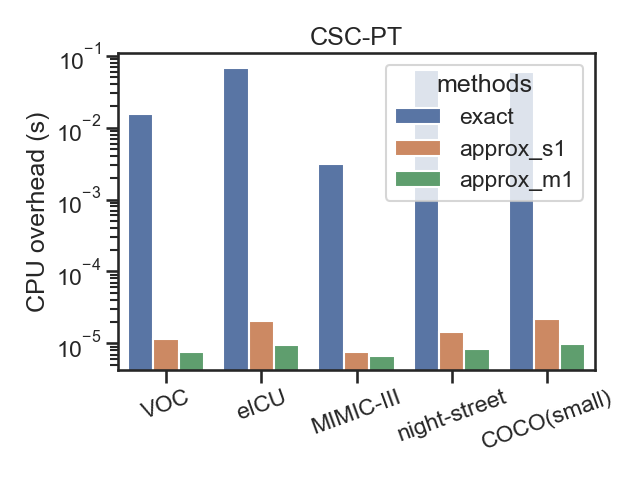}
  \end{subfigure}
    \vspace*{-6mm}
  \caption{\csa CPU overheads}
  \label{exp:csa_overheads}

  \begin{subfigure}[b]{0.23\textwidth}
    \includegraphics[width=\textwidth]{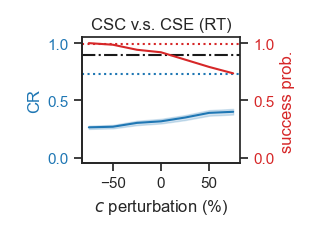}
  \end{subfigure}
  \hfill
  \begin{subfigure}[b]{0.23\textwidth}
    \includegraphics[width=\textwidth]{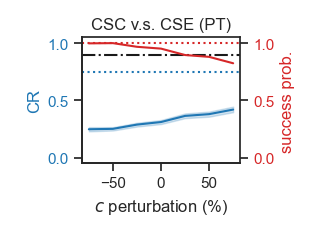}
  \end{subfigure}
    \vspace*{-5mm}
  \caption{\dujian{\csa (solid line) v.s. \cse (dotted line) on VOC.}}
  \label{exp:csc_cse}
  \vspace*{-5mm}
\end{figure}


\begin{figure*}[htpb]
  \begin{subfigure}[b]{0.24\textwidth}
    \includegraphics[width=\textwidth]{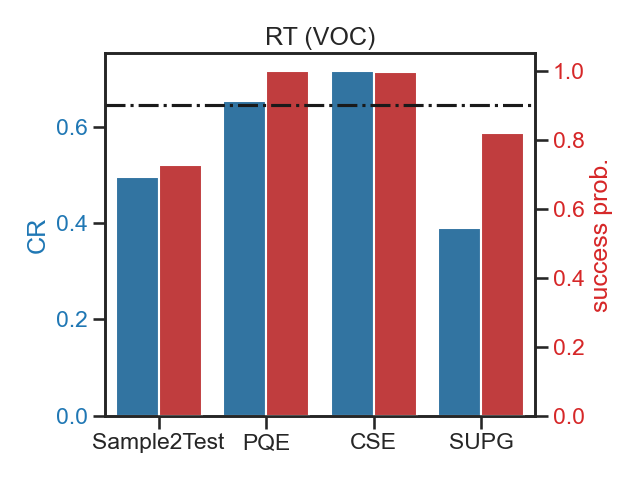}
  \end{subfigure}
  \hfill
  \begin{subfigure}[b]{0.24\textwidth}
    \includegraphics[width=\textwidth]{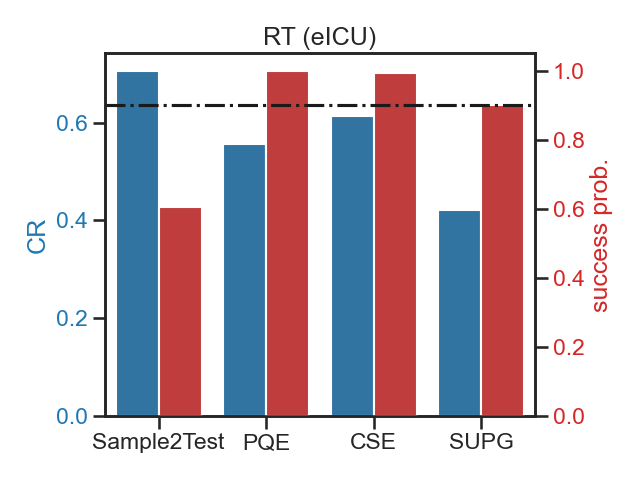}
  \end{subfigure}
\hfill
  \begin{subfigure}[b]{0.24\textwidth}
    \includegraphics[width=\textwidth]{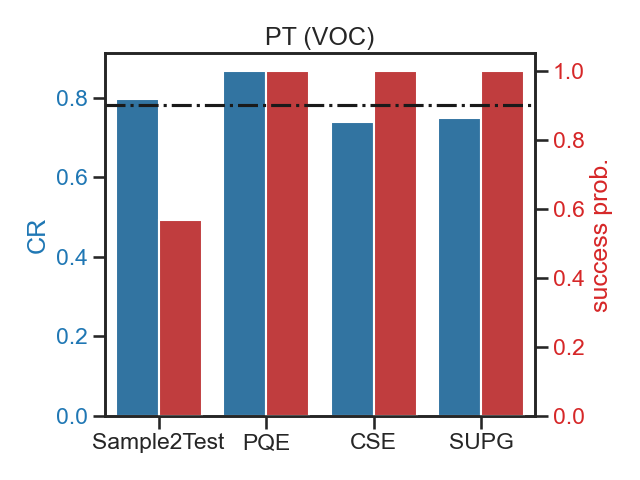}
  \end{subfigure}
  \hfill
  \begin{subfigure}[b]{0.24\textwidth}
    \includegraphics[width=\textwidth]{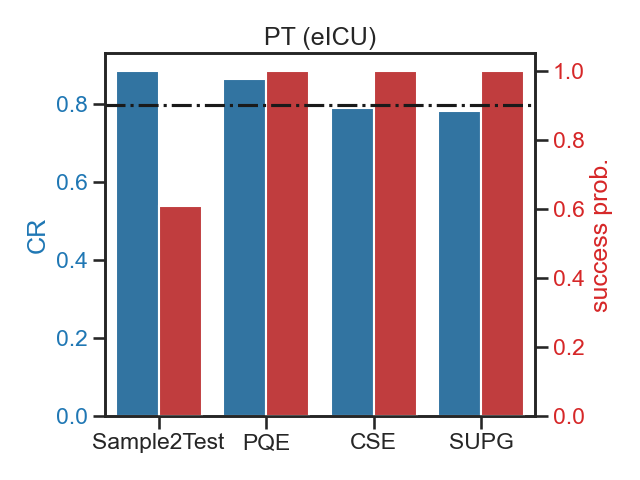}
  \end{subfigure}
    \vspace*{-5mm}
  \caption{\dujian{CR and  empirical success probability by \pqe, \cse, \supg, \sampletest.}}
  \label{exp:cse}

  \begin{subfigure}[b]{0.24\textwidth}
    \includegraphics[width=\textwidth]{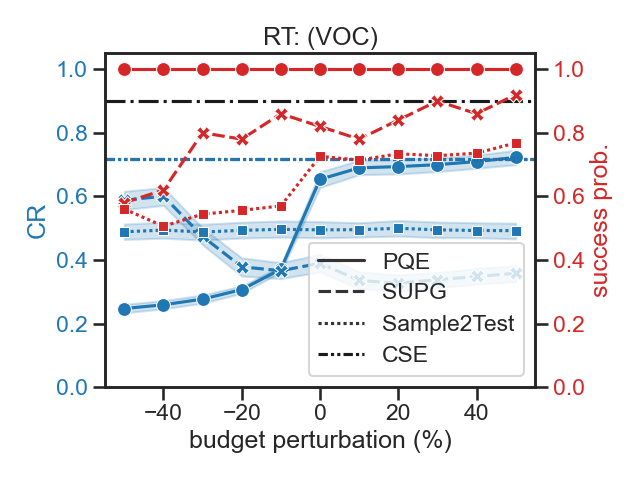}
  \end{subfigure}
  \hfill
  \begin{subfigure}[b]{0.24\textwidth}
    \includegraphics[width=\textwidth]{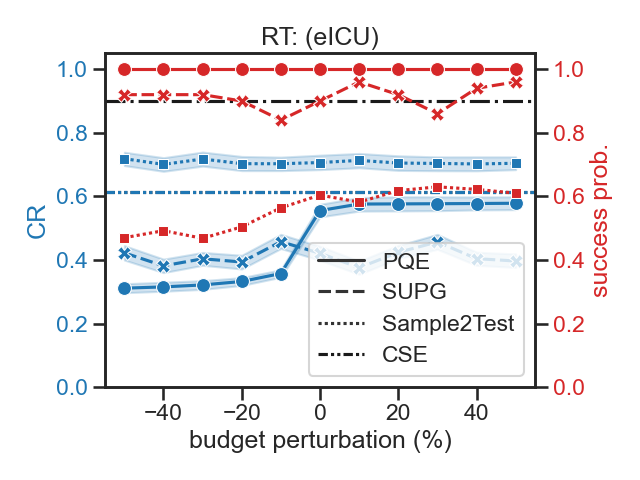}
  \end{subfigure}
\hfill
  \begin{subfigure}[b]{0.24\textwidth}
    \includegraphics[width=\textwidth]{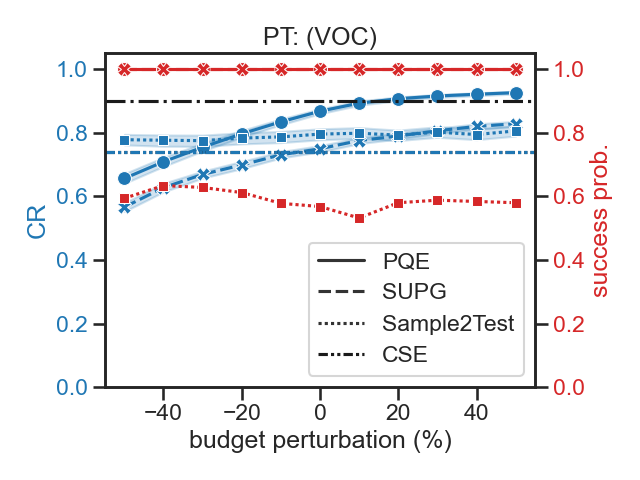}
  \end{subfigure}
  \hfill
  \begin{subfigure}[b]{0.24\textwidth}
    \includegraphics[width=\textwidth]{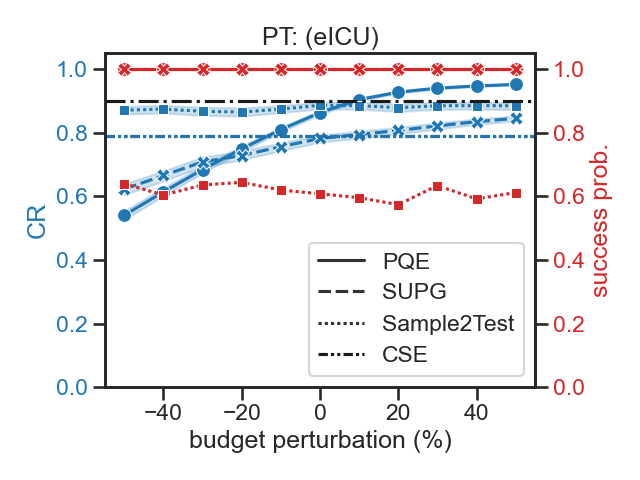}
  \end{subfigure}
    \vspace*{-5mm}
  \caption{\dujian{CR and  empirical success probability by \pqe, \supg, \sampletest with perturbed budget from \cse}}
  \label{exp:oracle_efficiency}
\end{figure*}
\vspace*{-1ex}
\begin{figure}[!tbp]
\vspace*{-1ex}
  \begin{subfigure}[b]{0.23\textwidth}
    \includegraphics[width=\textwidth]{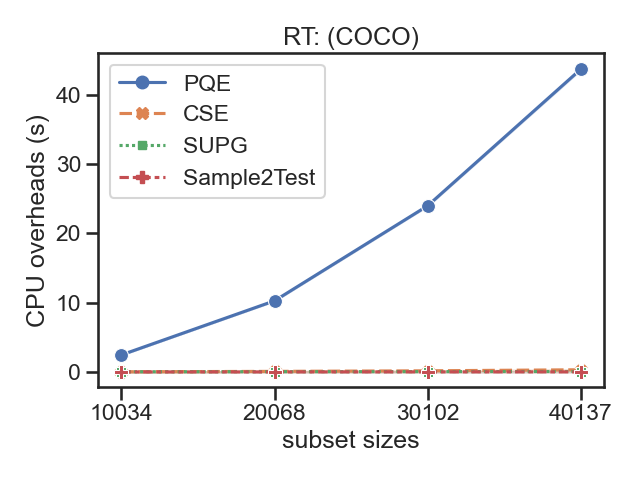}
  \end{subfigure}
  \hfill
  \begin{subfigure}[b]{0.23\textwidth}
    \includegraphics[width=\textwidth]{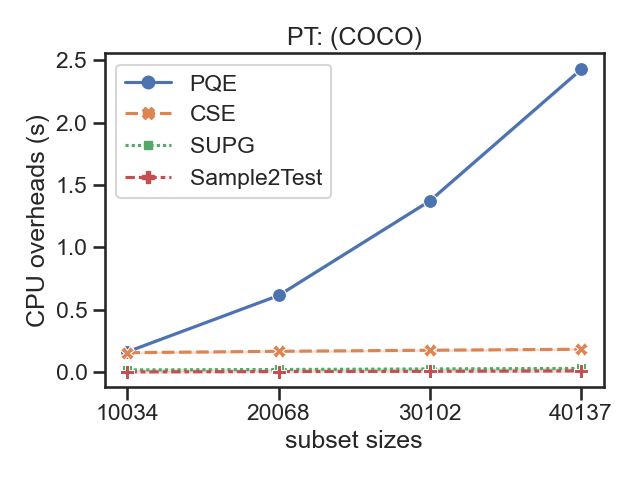}
  \end{subfigure}
    \vspace*{-5mm}
  \caption{\dujian{Scalability test: CPU overheads}}
  \label{exp:scalability_overhead}
  \vspace*{-5mm}
\end{figure}

\vspace*{1ex}
Results of PT queries are similar to RT. All approaches achieve high  empirical success probability. The exact algorithm has the smallest oracle usage, which accounts for $8.1\%$ objects in VOC and $5.2\%$ objects in eICU. The approximation algorithms incur an oracle usage which is just up to $2.1\%$ higher than the exact algorithm. The baseline method \textit{Rand-s} calls the oracle on at least $39\%$ objects, and \textit{Rand-sm} makes oracle calls for at least $93.7\%$ objects. 


We are also interested in the CPU overhead of the exact algorithm and the two approximation algorithms. Results on our five real-world datasets are summarized in Figure \ref{exp:csa_overheads}. Clearly, the exact algorithm has a larger CPU overhead in comparison to the  two approximation algorithms. {\sl Specifically, the approximation algorithms achieve a speedup up to $1466 \times$ for RT queries and $7391 \times$ for PT queries on CPU overheads, in comparison to the exact method. } 

\dujian{On VOC, we investigate how sensitive \csa is to the input core set size $c$ and include \cse performance for the same query for comparison (Figure \ref{exp:csc_cse}). As $c$ increases, for both query types, success probability of \csa decreases and CR increases. Since \cse estimates $c$ internally, both success probability and CR are agnostic to external $c$ changes. Note that the \cse performance is generally better than \csa, which is attributed to the fact that \cse has more flexibility to probe objects with oracle for the additional $c$ estimation.
}

\vspace*{-1ex} 
\subsection{\cse and \pqe vs \supg and \sampletest}
\label{subsec:cse_pqe}
We implement \cse and \pqe and compare them to \supg \dujian{and \sampletest}. For a fair comparison, we use the oracle usage incurred by \cse as the budget for \pqe, \supg, \dujian{and \sampletest}. We measure empirical success probability and CR on VOC and eICU (Figure \ref{exp:cse}). 
For RT, \cse and \pqe achieve high  empirical success probability on both datasets, while \supg fails the success threshold empirically by a margin of $10\%$ on VOC. On CR, \pqe outperforms \supg by a margin up to $26\%$, and \cse outperforms \supg up to $33\%$. For PT, all approaches achieve high  empirical success probability. On CR, \pqe outperforms \supg by up to $12\%$ and \cse achieves comparable CR 
to \supg.
\dujian{\sampletest continuously fails the success probability for both query types on both datasets.}

Failures of \supg on RT queries stem from the sample mean and variance it uses \textit{without error bounds}, which introduces uncontrolled uncertainty and degrades statistical guarantees. 
\vspace*{-1ex} 
\subsection{Oracle Efficiency}
\label{subsec:oracle_efficiency}
To measure Oracle efficiency, we perturb the oracle usage incurred by \cse and use it as the budget for \pqe, \supg, \dujian{\sampletest}. The CR of \cse is also plotted as a baseline. Results are in Figure \ref{exp:oracle_efficiency}. For RT queries, \pqe achieves high  empirical success probability on both datasets, while \supg \dujian{and \sampletest} fails frequently on VOC especially with small budgets. This indicates that \cse is the most oracle efficient approach for RT queries. For PT queries, all approaches \dujian{except \sampletest} achieve high  empirical success probability on both datasets.
\eat{
In the previous section, we compared the performance of different methods under the same oracle budgets. Next, we measure the number of oracle calls required by a method to achieve comparable performance to others, denoted as oracle efficiency.
Specifically, we perturb the oracle usage incurred by \cse and use it as the budget for \pqe, \supg, \dujian{\sampletest}. 
We test with perturbations ranging from $-50\%$ to $50\%$, and measure CR and  empirical  success probability of \pqe, \supg, \dujian{and \sampletest} on VOC and eICU . The CR of \cse is also plotted as a baseline. Results are in Figure \ref{exp:oracle_efficiency}.

For RT queries, \pqe achieves high  empirical success probability on both datasets, while \supg \dujian{and \sampletest} fails frequently on VOC especially with small budgets. Specifically, by halving budgets of \supg on VOC, its  empirical success probability becomes as low as $58\%$. With a $50\%$ higher budget, \supg starts to meet the success probability constraint, however, its CR decreases to $36\%$. In contrast, \pqe respects the success probability in all cases; with $50\%$ more budget, CR of \pqe increases to $72\%$, twice that of \supg. In all cases, \supg and \pqe incur higher oracle usage to match the performance of \cse, 
which indicates that \cse is the most oracle efficient approach for RT queries.

For PT queries, all approaches \dujian{except \sampletest} achieve high  empirical success probability on both datasets. Notably, \pqe achieves the same CR as \cse and \supg with up to $30\%$ fewer oracle calls, which implies \pqe is the most oracle efficient approach for PT queries. With more oracle calls, \pqe consistently outperform \supg by up to $12\%$ CR.

}

\vspace*{-1ex} 
\subsection{Time Efficiency}
\label{subsec:time_efficiency}
The running time of a query is composed of CPU overhead and model usage including proxy and oracle calls. We measure CPU overhead for each approach locally and approximate model usage by timing the number of model calls and average time taken by each call. For instance, on medical datasets (MIMIC-III \& eICU), the oracle is a human physician whose average diagnosis time is $15$ minutes \cite{tai2007time}, while the proxy is a recurrent neural network taking roughly $1$ millisecond for each call \cite{jose2021lig}. Results are reported in Table \ref{exp:time_efficiency_rt}, \ref{exp:time_efficiency_pt} with the best results in bold and saving ratios w.r.t. \supg. \dujian{For both query types, \sampletest and \scantest are the two most time-consuming approaches.}
\eat{
We analyze the query processing time of \cse, \pqe, \supg, \dujian{\sampletest, and \scantest}. To make a fair comparison, we control the performance of \pqe, \supg, \dujian{and \sampletest} by adjusting oracle budgets. Since both CR and success probability generally increase with higher oracle budgets, we use a binary search to find the smallest oracle budget required by \pqe, \supg, \dujian{and \sampletest} to match the performance of \cse. 

The running time of a query is composed of CPU overhead and model usage including proxy and oracle calls. We measure CPU overhead for each approach locally and approximate model usage by timing the number of model calls and average time taken by each call. On medical datasets (MIMIC-III \& eICU), the oracle is a human physician whose average diagnosis time is $15$ minutes \cite{tai2007time}, while the proxy is a recurrent neural network taking roughly $1$ millisecond for each call \cite{jose2021lig}. On image datasets (VOC \& COCO), the oracle is a human labeler with a rate of roughly $20$ seconds per image \cite{awslabel, aihourpay}, while the proxy, a neural model, takes $50$ milliseconds per image on average \cite{koubaa2022cloud}. On the video dataset (night-street), the oracle is a heavy DNN model \cite{he2017mask} processing each frame in $400$ milliseconds, while the proxy model is much faster at a speed of $20$ milliseconds per frame \cite{DBLP:journals/corr/abs-1905-00571}. Results are reported in Table \ref{exp:time_efficiency_rt}, \ref{exp:time_efficiency_pt} with the best results in bold and saving ratios w.r.t. \supg.

\dujian{For both query types, \sampletest and \scantest are the two most time-consuming approaches.}
For RT queries, \cse dominates all others. It saves up to $87.5\%$ overall time in comparison to \supg. \pqe outperforms \supg on  MIMIC-III with a $85.6\%$ saving. 
For PT queries, \pqe and \cse are the most efficient. \pqe delivers the shortest query processing time on VOC, COCO, and eICU, while \cse dominates on MIMIC-III and night-street. The time saving achieved by \pqe and \cse is up to $38.66\%$ in comparison to \supg. Notably, \cse consistently outperforms \supg on all datasets. 

}
\begin{table}
    \centering
    \begin{small}
\begin{tabular}{ |c|c|c|c|c|c|c|  }
\hline
Time / Hours & \pqe & \cse & \supg & \dujian{\scantest} &\dujian{\sampletest}\\
\hline
VOC &21.22	&\textbf{15.88}	&24.12 & 27.51 & 26.84\\
COCO(small) &32.21	&\textbf{19.66}	&30.46 & 44.44 & 44.06\\
\hline
MIMIC-III &124.3	&\textbf{108.1}	&865 &1061 & 1045.08\\
eICU &1337.3	&\textbf{679.2}	&925 &2059 & 2009.16\\
\hline
night-street  &0.38	&\textbf{0.11}	&0.19 &1.11 &1.15\\
\hline
\end{tabular}
\caption{\dujian{RT queries:  query  time by \cse, \pqe, and baselines}}
\label{exp:time_efficiency_rt}
\vspace{-3mm}
\begin{tabular}{ |c|c|c|c|c|c|  }
\hline
Time / Hours & \pqe & \cse & \supg &\dujian{\scantest} &\dujian{\sampletest}\\ 
\hline
VOC & \textbf{4.71}	& 6.3	& 7.52 &27.51 & 26.41\\ 
COCO(small) & \textbf{11.93} & 13.18 & 19.45 &44.44 &39.97\\ 
\hline
MIMIC-III &997.3 & \textbf{132.7} &158.4 &1061 &1060.69\\
eICU &\textbf{569.4}	&617.3	&871.1 &2059 & 1973.64\\
\hline
night-street  &0.28	&\textbf{0.25}	&0.35 & 1.11 & 1.17\\
\hline
\end{tabular}
\caption{\dujian{PT queries:  query time by \cse, \pqe, and baselines.}}
\label{exp:time_efficiency_pt}
\vspace{-3mm}
\begin{tabular}{ |c||c|c|  }
\hline
|D| & \multicolumn{1}{|c|}{Success Prob. } & \multicolumn{1}{|c|}{CR} \\
\hline
	&\pqe/\cse/\supg/\dujian{\sampletest}	&\pqe/\cse/\supg/\dujian{\sampletest}\\
\hline
10034 & 1/0.99/\textcolor{red}{0.78}/\textcolor{red}{0.69} 	&0.72/\textbf{0.74}/0.46/0.53 \\
20068 & 1/0.99/\textcolor{red}{0.76}/\textcolor{red}{0.59}	&0.65/\textbf{0.66}/0.46/0.58 \\
30102 & 1/0.99/\textcolor{red}{0.76}/\textcolor{red}{0.65}	&0.63/\textbf{0.65}/0.44/0.5 \\
40137 & 1/0.98/\textcolor{red}{0.75}/\textcolor{red}{0.66}	&0.68/\textbf{0.69}/0.5/0.54 \\
\hline
\end{tabular}
\caption{\dujian{Scalability test for RT queries}}
\label{exp:scalability_rt}
\vspace{-4mm}
\begin{tabular}{ |c|c|c|  }
\hline
|D| & \multicolumn{1}{|c|}{Success Prob. } & \multicolumn{1}{|c|}{CR} \\
\hline
	&\pqe/\cse/\supg/\dujian{\sampletest}	&\pqe/\cse/\supg/\dujian{\sampletest}\\
\hline
10034		&1/1/1/\textcolor{red}{0.52} & \textbf{0.73}/0.67/0.59/\textbf{0.73} \\
20068		&1/1/1/\textcolor{red}{0.48} & {0.69}/0.65/0.59/\textbf{0.71} \\
30102		&1/1/1/\textcolor{red}{0.5} & {0.66}/0.64/0.62/\textbf{0.68} \\
40137		&1/1/1/\textcolor{red}{0.51} & {0.67}/0.64/0.66/\textbf{0.74}\\
\hline
\end{tabular}
\end{small}
\caption{\dujian{Scalability test for PT queries}}
\label{exp:scalability_pt}
\vspace*{-10mm}
\end{table}

\vspace*{-1.5ex} 
\subsection{Scalability}
\label{subsec:scalability}
We measure CPU overhead, success probability, and CR of \pqe, \cse, \supg, \dujian{and \sampletest}. We  uniformly draw subsets of the original COCO dataset ($25\%$, $50\%$, $75\%$, and $100\%$). To make a fair comparison, we use the oracle usage incurred by \cse as the budget for \pqe and \supg. Results are reported in Figure \ref{exp:scalability_overhead} and Table \ref{exp:scalability_rt}, \ref{exp:scalability_pt}. For RT queries, \cse, \supg, \dujian{and \sampletest} have a reasonably low CPU overhead. For PT queries, \pqe has the highest CPU overhead, $2.5$ seconds per query, while the overhead of \cse, \supg, \dujian{and \sampletest} is less than $0.2$ seconds.

\eat{
We measure CPU overhead, success probability, and CR of \pqe, \cse, \supg, \dujian{and \sampletest}. We  uniformly draw subsets of the original COCO dataset ($25\%$, $50\%$, $75\%$, and $100\%$). To make a fair comparison, we use the oracle usage incurred by \cse as the budget for \pqe and \supg. Results are reported in Figure \ref{exp:scalability_overhead} and Table \ref{exp:scalability_rt}, \ref{exp:scalability_pt}.

For RT queries, \cse, \supg, \dujian{and \sampletest} have a reasonably low CPU overhead. Not surprisingly, \pqe has the highest CPU overhead, $43.7$ seconds per query. But it is worth pointing that the overhead is still negligible in comparison to the overall query processing time, as we have seen in Table \ref{exp:time_efficiency_rt} and \ref{exp:time_efficiency_pt}.
Both \pqe and \cse achieve high empirical  success probability with oracle calls on up to $36.74\%$ objects, while \supg \dujian{and \sampletest} consistently fail the guarantee by a margin up to $31\%$. On CR, both \pqe and \cse consistently outperform \supg \dujian{and \sampletest}. \cse achieves the best CR and outperforms \supg by up to $28\%$.

For PT queries, \pqe has the highest CPU overhead, $2.5$ seconds per query, while the overhead of \cse, \supg, \dujian{and \sampletest} is less than $0.2$ seconds. All approaches \dujian{except \sampletest} achieve high empirical success probability with oracle calls on up to $31.6\%$ objects. \pqe achieves the highest CR in all cases \dujian{with high success probability} and outperforms \supg by up to $14\%$. \cse achieves higher CR than \supg in 3/4 cases with a marginal gain up to $8\%$. 
}

\vspace*{-1ex} 
\section{Related Work}
\label{sec:related}

\noindent {\bf Query approximation.}
Query approximation techniques~\cite{DBLP:journals/dase/LiL18} can be categorized into (1) online aggregation: select samples online and use them to answer OLAP queries, and (2) offline synopses generation 
\eat{: generate synopses based on a-priori knowledge (e.g., data statistics or query workload) and use them} to facilitate OLAP queries. Our work adopts a probabilistic top-k approach~\cite{DBLP:conf/vldb/TheobaldWS04} and is significantly different from these.

\noindent {\bf FRNN query.}
FRNN query answering systems \cite{bentley1975frnn,BENTLEY1977209} build spatial indexes on the whole DB, which requires oracle calls on every single object. Our work focuses on reducing the oracle usage and is clearly distinguished from this line of work.

\noindent {\bf Optimizing ML inference.}
Several recent approaches were proposed to speed up the application of an ML model. 
Existing approaches follow either an in-database~\cite{DBLP:conf/nsdi/CrankshawWZFGS17} or in-application approach~\cite{DBLP:conf/kdd/AhmedABCCDDEFFG19}. Amazon Aurora is an example of an in-database containerized solution that enables external calls from SQL queries to ML models in SageMaker\footnote{\url{https://aws.amazon.com/fr/sagemaker/}}. Containerized execution introduces overhead in prediction latency. To mitigate that, Google's BigQuery ML\footnote{\url{https://cloud.google.com/bigquery-ml/docs}} and Microsoft's Raven were developed~\cite{DBLP:conf/cidr/KaranasosIPSPPX20}. Compared to Raven, BigQuery ML relies mostly on hard-coded models and targets batch predictions, since it inherits a relatively high startup cost. Raven and its runtime environment ONNX~\cite{DBLP:conf/osdi/ChenMJZYSCWHCGK18} offer the additional ability to make tuple-level inference. 

\noindent{\bf Combining queries and ML inference.}
Bolukbasi et al.~\cite{DBLP:conf/icml/BolukbasiWDS17} enable  incremental predictions for neural networks.  
\eat{The authors formulate a global objective for 
by learning a policy and solve it by reducing it to a layer-by-layer weighted binary classification problem.}  Computation time is reduced by pruning examples that are classified in earlier layers, selected adaptively. \eat{This is further extended to learn how to adaptively select the network to be evaluated for each example. This was shown to achieve a good trade-off between classification accuracy and evaluation time on ImageNet.} 
Kang et al.~\cite{DBLP:journals/pvldb/KangEABZ17} present NOSCOPE, a system for querying videos that can reduce the cost of neural network video analysis by up to three orders of magnitude via inference-optimized model search.\eat{Given a target video, object to detect, and reference neural network, NOSCOPE automatically searches for and trains a sequence, or cascade, of models that preserve the accuracy of the reference network but are specialized to the target video and are therefore far less computationally expensive.}
Lu et al.~\cite{DBLP:conf/sigmod/LuCKC18} \dujian{and Yang et al.~\cite{10.14778/3547305.3547310}} use probabilistic predicates to filter data blobs that do not satisfy the query \dujian{and empirically increase data reduction rates}. 
\eat{To support complex predicates and avoid per-query training,  They augment a cost-based query optimizer to choose plans with appropriate combinations of simpler probabilistic predicates. }
\eat{While introducing a configurable amount of error, They show that their solution boosts  the performance of ML queries.}
Anderson et al.~\cite{DBLP:conf/icde/AndersonCRW19} use a hierarchical model to reduce the runtime cost of queries over visual content. \eat{Prior work used cascades, e.g., smaller CNNs, to optimize the computational cost of inference. By treating transformations of physical representation of the input image  as  part  of  query  optimization, that  is,  by  including image  transformations  such  as  resolution  scaling within  the  cascade, they   optimize  data  handling costs. }
\eat{The authors propose Tahoma, which generates and evaluates many potential classifier cascades. }
Gao et al.~\cite{DBLP:conf/sigmod/GaoXAY21} introduce a Multi-Level Splitting Sampling to let one "promising" sample path prefix generate multiple "offspring" paths, and direct Monte-Carlo based simulations toward more promising paths. \eat{This frees users from manually tuning models.} 
\dujian{Lai et al.~\cite{10.1145/3448016.3452786} studies approximate Top-K query with light-weight proxy models that generate oracle label distribution.}
Recent work that proposed to use cheap proxy models, such as image classifiers,  to identify  an  approximate  set  of  data points satisfying a query~\cite{DBLP:journals/pvldb/KangGBHZ20}, is by far the  closest to our work, albeit they require a budget. \eat{. As mentioned in \S~\ref{sec:intro}, they studied approximate answering of recall- and precision-target queries, with a user-specified cost budget. Their setting is restricted to classification, whereas we consider more general similarity queries. The limitations of using a user-specified cost budget were already discussed in \S~\ref{sec:intro}. Our experiments clearly demonstrate the superiority of our approach.

}

\dujian{\section{Conclusion and Discussion}}
\label{sec:discussion}

We formalize and solve precision-target and recall-target queries, two paradigms that are well-suited for querying the results of ML predictions.
We propose two assumptions and  develop four algorithms.
Our extensive experiments on five real-world datasets show that our approach enjoys statistical guarantees with a small cost and a good complementary rate, i.e., a good balance between recall and precision rates. 
\eat{
We are currently investigating how to express precision-target and recall-target queries as a foundational extension of the relational algebra. This would allow us to answer queries within the relational query runtime without exfiltrating data into memory. This will also open new directions in DB/ML cross-optimizations such as pushing predicates and model inlining. Existing frameworks such as Raven and its runtime environment ONNX~\cite{DBLP:conf/cidr/KaranasosIPSPPX20} appear as good starting points.}

\dujian{Our framework can be extended to optimize a query workload using metric  properties like triangle inequality \cite{10.1145/3448016.3457303}. 
Consider the objects $\{q_1, q_2, x\}$. Suppose that we first choose $q_1$ as the query object, and compute the proxy distances $dist^\proxy(q_1, q_2)$ and $dist^\proxy(q_1, x)$ to find answers using our approaches. Next, when we choose $q_2$ as the query object, by leveraging triangle inequality, we can lower bound $dist^\proxy(q_2,x)$ as $dist^\proxy(q_2,x) \geq |dist^\proxy(q_1,q_2) - dist^\proxy(q_1,x)|$. If this bound is high, we can safely avoid applying the probe to $x$ for query $q_2$.
We can extend our framework to multiple proxies at different accuracy and cost levels. This represents real-world scenarios where proxies are derived from huge neural models by activating specific subnetworks \cite{DBLP:conf/icml/BolukbasiWDS17}. It is clear that, with multiple proxy models, the search space for our optimization problem will exponentially increase. 
Secondly, by introducing proxies with various cost and accuracy levels, optimizing efficiency would go beyond simply counting oracle calls, and would yield a linear programming problem. We are currently exploring   possible solutions. }

\clearpage

\normalem 
 
\bibliographystyle{ACM-Reference-Format}
\bibliography{sample-base}


\begin{thebibliography}{51}


\ifx \showCODEN    \undefined \def \showCODEN     #1{\unskip}     \fi
\ifx \showDOI      \undefined \def \showDOI       #1{#1}\fi
\ifx \showISBNx    \undefined \def \showISBNx     #1{\unskip}     \fi
\ifx \showISBNxiii \undefined \def \showISBNxiii  #1{\unskip}     \fi
\ifx \showISSN     \undefined \def \showISSN      #1{\unskip}     \fi
\ifx \showLCCN     \undefined \def \showLCCN      #1{\unskip}     \fi
\ifx \shownote     \undefined \def \shownote      #1{#1}          \fi
\ifx \showarticletitle \undefined \def \showarticletitle #1{#1}   \fi
\ifx \showURL      \undefined \def \showURL       {\relax}        \fi
\providecommand\bibfield[2]{#2}
\providecommand\bibinfo[2]{#2}
\providecommand\natexlab[1]{#1}
\providecommand\showeprint[2][]{arXiv:#2}

\bibitem[\protect\citeauthoryear{??}{fra}{[n.d.]}]%
        {franklin1959preventive}
 \bibinfo{year}{[n.d.]}\natexlab{}.
\newblock  (\bibinfo{year}{[n.\,d.]}).
\newblock


\bibitem[\protect\citeauthoryear{Ahmed, Amizadeh, Bilenko, Carr, Chin, Dekel,
  Dupr{\'{e}}, Eksarevskiy, Filipi, Finley, Goswami, Hoover, Inglis,
  Interlandi, Kazmi, Krivosheev, Luferenko, Matantsev, Matusevych, Moradi,
  Nazirov, Ormont, Oshri, Pagnoni, Parmar, Roy, Siddiqui, Weimer, Zahirazami,
  and Zhu}{Ahmed et~al\mbox{.}}{2019}]%
        {DBLP:conf/kdd/AhmedABCCDDEFFG19}
\bibfield{author}{\bibinfo{person}{Zeeshan Ahmed}, \bibinfo{person}{Saeed
  Amizadeh}, \bibinfo{person}{Mikhail Bilenko}, \bibinfo{person}{Rogan Carr},
  \bibinfo{person}{Wei{-}Sheng Chin}, \bibinfo{person}{Yael Dekel},
  \bibinfo{person}{Xavier Dupr{\'{e}}}, \bibinfo{person}{Vadim Eksarevskiy},
  \bibinfo{person}{Senja Filipi}, \bibinfo{person}{Tom Finley},
  \bibinfo{person}{Abhishek Goswami}, \bibinfo{person}{Monte Hoover},
  \bibinfo{person}{Scott Inglis}, \bibinfo{person}{Matteo Interlandi},
  \bibinfo{person}{Najeeb Kazmi}, \bibinfo{person}{Gleb Krivosheev},
  \bibinfo{person}{Pete Luferenko}, \bibinfo{person}{Ivan Matantsev},
  \bibinfo{person}{Sergiy Matusevych}, \bibinfo{person}{Shahab Moradi},
  \bibinfo{person}{Gani Nazirov}, \bibinfo{person}{Justin Ormont},
  \bibinfo{person}{Gal Oshri}, \bibinfo{person}{Artidoro Pagnoni},
  \bibinfo{person}{Jignesh Parmar}, \bibinfo{person}{Prabhat Roy},
  \bibinfo{person}{Mohammad~Zeeshan Siddiqui}, \bibinfo{person}{Markus Weimer},
  \bibinfo{person}{Shauheen Zahirazami}, {and} \bibinfo{person}{Yiwen Zhu}.}
  \bibinfo{year}{2019}\natexlab{}.
\newblock \showarticletitle{Machine Learning at Microsoft with {ML.NET}}. In
  \bibinfo{booktitle}{\emph{Proceedings of the 25th {ACM} {SIGKDD}
  International Conference on Knowledge Discovery {\&} Data Mining, {KDD} 2019,
  Anchorage, AK, USA, August 4-8, 2019}},
  \bibfield{editor}{\bibinfo{person}{Ankur Teredesai}, \bibinfo{person}{Vipin
  Kumar}, \bibinfo{person}{Ying Li}, \bibinfo{person}{R{\'{o}}mer Rosales},
  \bibinfo{person}{Evimaria Terzi}, {and} \bibinfo{person}{George Karypis}}
  (Eds.). \bibinfo{publisher}{{ACM}}, \bibinfo{pages}{2448--2458}.
\newblock


\bibitem[\protect\citeauthoryear{Alodadi and Janeja}{Alodadi and
  Janeja}{2015}]%
        {7349760}
\bibfield{author}{\bibinfo{person}{Mohammad Alodadi} {and}
  \bibinfo{person}{Vandana~P. Janeja}.} \bibinfo{year}{2015}\natexlab{}.
\newblock \showarticletitle{Similarity in Patient Support Forums Using TF-IDF
  and Cosine Similarity Metrics}. In \bibinfo{booktitle}{\emph{2015
  International Conference on Healthcare Informatics}}.
  \bibinfo{pages}{521--522}.
\newblock
\urldef\tempurl%
\url{https://doi.org/10.1109/ICHI.2015.99}
\showDOI{\tempurl}


\bibitem[\protect\citeauthoryear{Anderson, Cafarella, Ros, and
  Wenisch}{Anderson et~al\mbox{.}}{2019}]%
        {DBLP:conf/icde/AndersonCRW19}
\bibfield{author}{\bibinfo{person}{Michael~R. Anderson},
  \bibinfo{person}{Michael~J. Cafarella}, \bibinfo{person}{German Ros}, {and}
  \bibinfo{person}{Thomas~F. Wenisch}.} \bibinfo{year}{2019}\natexlab{}.
\newblock \showarticletitle{Physical Representation-Based Predicate
  Optimization for a Visual Analytics Database}. In
  \bibinfo{booktitle}{\emph{35th {IEEE} International Conference on Data
  Engineering, {ICDE} 2019, Macao, China, April 8-11, 2019}}.
  \bibinfo{pages}{1466--1477}.
\newblock


\bibitem[\protect\citeauthoryear{Augustine, Shetiya, Esfandiari, Basu~Roy, and
  Das}{Augustine et~al\mbox{.}}{2021}]%
        {10.1145/3448016.3457303}
\bibfield{author}{\bibinfo{person}{Jees Augustine}, \bibinfo{person}{Suraj
  Shetiya}, \bibinfo{person}{Mohammadreza Esfandiari}, \bibinfo{person}{Senjuti
  Basu~Roy}, {and} \bibinfo{person}{Gautam Das}.}
  \bibinfo{year}{2021}\natexlab{}.
\newblock \showarticletitle{A Generalized Approach for Reducing Expensive
  Distance Calls for A Broad Class of Proximity Problems}. In
  \bibinfo{booktitle}{\emph{Proceedings of the 2021 International Conference on
  Management of Data}} (Virtual Event, China) \emph{(\bibinfo{series}{SIGMOD
  '21})}. \bibinfo{publisher}{Association for Computing Machinery},
  \bibinfo{address}{New York, NY, USA}, \bibinfo{pages}{142–154}.
\newblock
\showISBNx{9781450383431}
\urldef\tempurl%
\url{https://doi.org/10.1145/3448016.3457303}
\showDOI{\tempurl}


\bibitem[\protect\citeauthoryear{Bentley}{Bentley}{1975}]%
        {bentley1975frnn}
\bibfield{author}{\bibinfo{person}{Jon~L Bentley}.}
  \bibinfo{year}{1975}\natexlab{}.
\newblock \bibinfo{booktitle}{\emph{A Survey of Techniques for Fixed Radius
  near Neighbor Searching.}}
\newblock \bibinfo{type}{{T}echnical {R}eport}. \bibinfo{address}{Stanford, CA,
  USA}.
\newblock


\bibitem[\protect\citeauthoryear{Bentley, Stanat, and Williams}{Bentley
  et~al\mbox{.}}{1977}]%
        {BENTLEY1977209}
\bibfield{author}{\bibinfo{person}{Jon~L. Bentley}, \bibinfo{person}{Donald~F.
  Stanat}, {and} \bibinfo{person}{E.Hollins Williams}.}
  \bibinfo{year}{1977}\natexlab{}.
\newblock \showarticletitle{The complexity of finding fixed-radius near
  neighbors}.
\newblock \bibinfo{journal}{\emph{Inform. Process. Lett.}} \bibinfo{volume}{6},
  \bibinfo{number}{6} (\bibinfo{year}{1977}), \bibinfo{pages}{209--212}.
\newblock
\showISSN{0020-0190}
\urldef\tempurl%
\url{https://doi.org/10.1016/0020-0190(77)90070-9}
\showDOI{\tempurl}


\bibitem[\protect\citeauthoryear{Biscarri, Zhao, and Brunner}{Biscarri
  et~al\mbox{.}}{2018}]%
        {osti_1548776}
\bibfield{author}{\bibinfo{person}{William Biscarri},
  \bibinfo{person}{Sihai~Dave Zhao}, {and} \bibinfo{person}{Robert~J.
  Brunner}.} \bibinfo{year}{2018}\natexlab{}.
\newblock \showarticletitle{A simple and fast method for computing the Poisson
  binomial distribution function}.
\newblock \bibinfo{journal}{\emph{Computational Statistics and Data Analysis
  (Print)}}  \bibinfo{volume}{122} (\bibinfo{date}{6} \bibinfo{year}{2018}).
\newblock
\urldef\tempurl%
\url{https://doi.org/10.1016/j.csda.2018.01.007}
\showDOI{\tempurl}


\bibitem[\protect\citeauthoryear{Bolukbasi, Wang, Dekel, and
  Saligrama}{Bolukbasi et~al\mbox{.}}{2017}]%
        {DBLP:conf/icml/BolukbasiWDS17}
\bibfield{author}{\bibinfo{person}{Tolga Bolukbasi}, \bibinfo{person}{Joseph
  Wang}, \bibinfo{person}{Ofer Dekel}, {and} \bibinfo{person}{Venkatesh
  Saligrama}.} \bibinfo{year}{2017}\natexlab{}.
\newblock \showarticletitle{Adaptive Neural Networks for Efficient Inference}.
  In \bibinfo{booktitle}{\emph{Proceedings of the 34th International Conference
  on Machine Learning, {ICML} 2017, Sydney, NSW, Australia, 6-11 August 2017}}
  \emph{(\bibinfo{series}{Proceedings of Machine Learning Research})},
  \bibfield{editor}{\bibinfo{person}{Doina Precup} {and}
  \bibinfo{person}{Yee~Whye Teh}} (Eds.), Vol.~\bibinfo{volume}{70}.
  \bibinfo{publisher}{{PMLR}}, \bibinfo{pages}{527--536}.
\newblock


\bibitem[\protect\citeauthoryear{Brull, Ghali, and Quan}{Brull
  et~al\mbox{.}}{1999}]%
        {brull1999missed}
\bibfield{author}{\bibinfo{person}{R Brull}, \bibinfo{person}{W~A Ghali}, {and}
  \bibinfo{person}{H Quan}.} \bibinfo{year}{1999}\natexlab{}.
\newblock \showarticletitle{Missed opportunities for prevention in general
  internal medicine.}
\newblock \bibinfo{journal}{\emph{CMAJ}} \bibinfo{volume}{160},
  \bibinfo{number}{8} (\bibinfo{date}{Apr} \bibinfo{year}{1999}),
  \bibinfo{pages}{1137--1140}.
\newblock
\showISSN{0820-3946 (Print); 1488-2329 (Electronic); 0820-3946 (Linking)}


\bibitem[\protect\citeauthoryear{Canel, Kim, Zhou, Li, Lim, Andersen, Kaminsky,
  and Dulloor}{Canel et~al\mbox{.}}{2019}]%
        {jackson_dataset}
\bibfield{author}{\bibinfo{person}{Christopher Canel}, \bibinfo{person}{Thomas
  Kim}, \bibinfo{person}{Giulio Zhou}, \bibinfo{person}{Conglong Li},
  \bibinfo{person}{Hyeontaek Lim}, \bibinfo{person}{David~G. Andersen},
  \bibinfo{person}{Michael Kaminsky}, {and} \bibinfo{person}{Subramanya~R.
  Dulloor}.} \bibinfo{year}{2019}\natexlab{}.
\newblock \showarticletitle{Scaling Video Analytics on Constrained Edge Nodes}.
\newblock \bibinfo{journal}{\emph{CoRR}}  \bibinfo{volume}{abs/1905.13536}
  (\bibinfo{year}{2019}).
\newblock
\showeprint[arXiv]{1905.13536}
\urldef\tempurl%
\url{http://arxiv.org/abs/1905.13536}
\showURL{%
\tempurl}


\bibitem[\protect\citeauthoryear{Cao, Long, Wang, Zhu, and Wen}{Cao
  et~al\mbox{.}}{2016}]%
        {yue2016deep}
\bibfield{author}{\bibinfo{person}{Yue Cao}, \bibinfo{person}{Mingsheng Long},
  \bibinfo{person}{Jianmin Wang}, \bibinfo{person}{Han Zhu}, {and}
  \bibinfo{person}{Qingfu Wen}.} \bibinfo{year}{2016}\natexlab{}.
\newblock \showarticletitle{Deep Quantization Network for Efficient Image
  Retrieval}. In \bibinfo{booktitle}{\emph{Proceedings of the Thirtieth AAAI
  Conference on Artificial Intelligence}} (Phoenix, Arizona)
  \emph{(\bibinfo{series}{AAAI'16})}. \bibinfo{publisher}{AAAI Press},
  \bibinfo{pages}{3457–3463}.
\newblock


\bibitem[\protect\citeauthoryear{Chen, Moreau, Jiang, Zheng, Yan, Shen, Cowan,
  Wang, Hu, Ceze, Guestrin, and Krishnamurthy}{Chen et~al\mbox{.}}{2018}]%
        {DBLP:conf/osdi/ChenMJZYSCWHCGK18}
\bibfield{author}{\bibinfo{person}{Tianqi Chen}, \bibinfo{person}{Thierry
  Moreau}, \bibinfo{person}{Ziheng Jiang}, \bibinfo{person}{Lianmin Zheng},
  \bibinfo{person}{Eddie~Q. Yan}, \bibinfo{person}{Haichen Shen},
  \bibinfo{person}{Meghan Cowan}, \bibinfo{person}{Leyuan Wang},
  \bibinfo{person}{Yuwei Hu}, \bibinfo{person}{Luis Ceze},
  \bibinfo{person}{Carlos Guestrin}, {and} \bibinfo{person}{Arvind
  Krishnamurthy}.} \bibinfo{year}{2018}\natexlab{}.
\newblock \showarticletitle{{TVM:} An Automated End-to-End Optimizing Compiler
  for Deep Learning}. In \bibinfo{booktitle}{\emph{13th {USENIX} Symposium on
  Operating Systems Design and Implementation, {OSDI} 2018, Carlsbad, CA, USA,
  October 8-10, 2018}}, \bibfield{editor}{\bibinfo{person}{Andrea~C.
  Arpaci{-}Dusseau} {and} \bibinfo{person}{Geoff Voelker}} (Eds.).
  \bibinfo{publisher}{{USENIX} Association}, \bibinfo{pages}{578--594}.
\newblock


\bibitem[\protect\citeauthoryear{Chen, Liu, Wang, Bakker, Georgiou, Fieguth,
  Liu, and Lew}{Chen et~al\mbox{.}}{2021}]%
        {chen2021deep}
\bibfield{author}{\bibinfo{person}{Wei Chen}, \bibinfo{person}{Yu Liu},
  \bibinfo{person}{Weiping Wang}, \bibinfo{person}{Erwin Bakker},
  \bibinfo{person}{Theodoros Georgiou}, \bibinfo{person}{Paul Fieguth},
  \bibinfo{person}{Li Liu}, {and} \bibinfo{person}{Michael~S. Lew}.}
  \bibinfo{year}{2021}\natexlab{}.
\newblock \bibinfo{title}{Deep Image Retrieval: A Survey}.
\newblock
\newblock
\showeprint[arxiv]{2101.11282}~[cs.CV]


\bibitem[\protect\citeauthoryear{Chen, Wei, Wang, and Guo}{Chen
  et~al\mbox{.}}{2019}]%
        {chen2019mlgcn}
\bibfield{author}{\bibinfo{person}{Zhao-Min Chen}, \bibinfo{person}{Xiu-Shen
  Wei}, \bibinfo{person}{Peng Wang}, {and} \bibinfo{person}{Yanwen Guo}.}
  \bibinfo{year}{2019}\natexlab{}.
\newblock \showarticletitle{Multi-Label Image Recognition With Graph
  Convolutional Networks}. In \bibinfo{booktitle}{\emph{2019 IEEE/CVF
  Conference on Computer Vision and Pattern Recognition (CVPR)}}.
  \bibinfo{pages}{5172--5181}.
\newblock
\urldef\tempurl%
\url{https://doi.org/10.1109/CVPR.2019.00532}
\showDOI{\tempurl}


\bibitem[\protect\citeauthoryear{Choi, Bahadori, Schuetz, Stewart, and
  Sun}{Choi et~al\mbox{.}}{2016}]%
        {choi2016doctor}
\bibfield{author}{\bibinfo{person}{Edward Choi}, \bibinfo{person}{Mohammad~Taha
  Bahadori}, \bibinfo{person}{Andy Schuetz}, \bibinfo{person}{Walter~F.
  Stewart}, {and} \bibinfo{person}{Jimeng Sun}.}
  \bibinfo{year}{2016}\natexlab{}.
\newblock \bibinfo{title}{Doctor AI: Predicting Clinical Events via Recurrent
  Neural Networks}.
\newblock
\newblock
\showeprint[arxiv]{1511.05942}~[cs.LG]


\bibitem[\protect\citeauthoryear{Crankshaw, Wang, Zhou, Franklin, Gonzalez, and
  Stoica}{Crankshaw et~al\mbox{.}}{2017}]%
        {DBLP:conf/nsdi/CrankshawWZFGS17}
\bibfield{author}{\bibinfo{person}{Daniel Crankshaw}, \bibinfo{person}{Xin
  Wang}, \bibinfo{person}{Giulio Zhou}, \bibinfo{person}{Michael~J. Franklin},
  \bibinfo{person}{Joseph~E. Gonzalez}, {and} \bibinfo{person}{Ion Stoica}.}
  \bibinfo{year}{2017}\natexlab{}.
\newblock \showarticletitle{Clipper: {A} Low-Latency Online Prediction Serving
  System}. In \bibinfo{booktitle}{\emph{14th {USENIX} Symposium on Networked
  Systems Design and Implementation, {NSDI} 2017, Boston, MA, USA, March 27-29,
  2017}}, \bibfield{editor}{\bibinfo{person}{Aditya Akella} {and}
  \bibinfo{person}{Jon Howell}} (Eds.). \bibinfo{publisher}{{USENIX}
  Association}, \bibinfo{pages}{613--627}.
\newblock


\bibitem[\protect\citeauthoryear{Deniziak and Michno}{Deniziak and
  Michno}{2016}]%
        {deniziak2016content}
\bibfield{author}{\bibinfo{person}{Stanislaw Deniziak} {and}
  \bibinfo{person}{Tomasz Michno}.} \bibinfo{year}{2016}\natexlab{}.
\newblock \showarticletitle{Content based image retrieval using query by
  approximate shape}. In \bibinfo{booktitle}{\emph{2016 Federated Conference on
  Computer Science and Information Systems (FedCSIS)}}.
  \bibinfo{pages}{807--816}.
\newblock


\bibitem[\protect\citeauthoryear{Ding, Amer-Yahia, and Lakshmanan}{Ding
  et~al\mbox{.}}{2022}]%
        {aquaprofull}
\bibfield{author}{\bibinfo{person}{Dujian Ding}, \bibinfo{person}{Sihem
  Amer-Yahia}, {and} \bibinfo{person}{Laks~VS Lakshmanan}.}
  \bibinfo{year}{2022}\natexlab{}.
\newblock \bibinfo{booktitle}{\emph{On Efficient Approximate Queries over
  Machine Learning Models (Full Version)}}.
\newblock
\urldef\tempurl%
\url{https://sites.google.com/view/dujian/publications}
\showURL{%
\tempurl}


\bibitem[\protect\citeauthoryear{Everingham, Van~Gool, Williams, Winn, and
  Zisserman}{Everingham et~al\mbox{.}}{2010}]%
        {voc2007}
\bibfield{author}{\bibinfo{person}{Mark Everingham}, \bibinfo{person}{Luc
  Van~Gool}, \bibinfo{person}{Christopher K.~I. Williams},
  \bibinfo{person}{John Winn}, {and} \bibinfo{person}{Andrew Zisserman}.}
  \bibinfo{year}{2010}\natexlab{}.
\newblock \showarticletitle{The Pascal Visual Object Classes (VOC) Challenge}.
\newblock \bibinfo{journal}{\emph{International Journal of Computer Vision}}
  \bibinfo{volume}{88}, \bibinfo{number}{2} (\bibinfo{year}{2010}),
  \bibinfo{pages}{303--338}.
\newblock


\bibitem[\protect\citeauthoryear{Gao, Xu, Agarwal, and Yang}{Gao
  et~al\mbox{.}}{2021}]%
        {DBLP:conf/sigmod/GaoXAY21}
\bibfield{author}{\bibinfo{person}{Junyang Gao}, \bibinfo{person}{Yifan Xu},
  \bibinfo{person}{Pankaj~K. Agarwal}, {and} \bibinfo{person}{Jun Yang}.}
  \bibinfo{year}{2021}\natexlab{}.
\newblock \showarticletitle{Efficiently Answering Durability Prediction
  Queries}. In \bibinfo{booktitle}{\emph{{SIGMOD} '21: International Conference
  on Management of Data, Virtual Event, China, June 20-25, 2021}}.
  \bibinfo{pages}{591--604}.
\newblock


\bibitem[\protect\citeauthoryear{Grimmett}{Grimmett}{1986}]%
        {Grimmett1986-GRIPAI}
\bibfield{author}{\bibinfo{person}{Geoffrey~R. Grimmett}.}
  \bibinfo{year}{1986}\natexlab{}.
\newblock \bibinfo{booktitle}{\emph{Probability: An Introduction}}.
\newblock \bibinfo{publisher}{Oxford University Press}.
\newblock


\bibitem[\protect\citeauthoryear{He, Gkioxari, Doll{\'{a}}r, and Girshick}{He
  et~al\mbox{.}}{2017}]%
        {he2017mask}
\bibfield{author}{\bibinfo{person}{Kaiming He}, \bibinfo{person}{Georgia
  Gkioxari}, \bibinfo{person}{Piotr Doll{\'{a}}r}, {and}
  \bibinfo{person}{Ross~B. Girshick}.} \bibinfo{year}{2017}\natexlab{}.
\newblock \showarticletitle{Mask {R-CNN}}.
\newblock \bibinfo{journal}{\emph{CoRR}}  \bibinfo{volume}{abs/1703.06870}
  (\bibinfo{year}{2017}).
\newblock
\showeprint[arXiv]{1703.06870}
\urldef\tempurl%
\url{http://arxiv.org/abs/1703.06870}
\showURL{%
\tempurl}


\bibitem[\protect\citeauthoryear{He, Zhang, Ren, and Sun}{He
  et~al\mbox{.}}{2015}]%
        {he2015res}
\bibfield{author}{\bibinfo{person}{Kaiming He}, \bibinfo{person}{Xiangyu
  Zhang}, \bibinfo{person}{Shaoqing Ren}, {and} \bibinfo{person}{Jian Sun}.}
  \bibinfo{year}{2015}\natexlab{}.
\newblock \showarticletitle{Deep Residual Learning for Image Recognition}.
\newblock \bibinfo{journal}{\emph{CoRR}}  \bibinfo{volume}{abs/1512.03385}
  (\bibinfo{year}{2015}).
\newblock
\showeprint[arXiv]{1512.03385}
\urldef\tempurl%
\url{http://arxiv.org/abs/1512.03385}
\showURL{%
\tempurl}


\bibitem[\protect\citeauthoryear{Hsieh, Ananthanarayanan, Bodik, Bahl,
  Philipose, Gibbons, and Mutlu}{Hsieh et~al\mbox{.}}{2018}]%
        {hsieh2018focus}
\bibfield{author}{\bibinfo{person}{Kevin Hsieh}, \bibinfo{person}{Ganesh
  Ananthanarayanan}, \bibinfo{person}{Peter Bodik}, \bibinfo{person}{Paramvir
  Bahl}, \bibinfo{person}{Matthai Philipose}, \bibinfo{person}{Phillip~B.
  Gibbons}, {and} \bibinfo{person}{Onur Mutlu}.}
  \bibinfo{year}{2018}\natexlab{}.
\newblock \bibinfo{title}{Focus: Querying Large Video Datasets with Low Latency
  and Low Cost}.
\newblock
\newblock
\showeprint[arxiv]{1801.03493}~[cs.DB]


\bibitem[\protect\citeauthoryear{Johnson, Pollard, Shen, Lehman, Feng,
  Ghassemi, Moody, Szolovits, Anthony~Celi, and Mark}{Johnson
  et~al\mbox{.}}{2016}]%
        {mimic_iii}
\bibfield{author}{\bibinfo{person}{Alistair E.~W. Johnson},
  \bibinfo{person}{Tom~J. Pollard}, \bibinfo{person}{Lu Shen},
  \bibinfo{person}{Li-wei~H. Lehman}, \bibinfo{person}{Mengling Feng},
  \bibinfo{person}{Mohammad Ghassemi}, \bibinfo{person}{Benjamin Moody},
  \bibinfo{person}{Peter Szolovits}, \bibinfo{person}{Leo Anthony~Celi}, {and}
  \bibinfo{person}{Roger~G. Mark}.} \bibinfo{year}{2016}\natexlab{}.
\newblock \showarticletitle{MIMIC-III, a freely accessible critical care
  database}.
\newblock \bibinfo{journal}{\emph{Scientific Data}} \bibinfo{volume}{3},
  \bibinfo{number}{1} (\bibinfo{year}{2016}), \bibinfo{pages}{160035}.
\newblock


\bibitem[\protect\citeauthoryear{Jr., Gutierrez, Spadon, Brandoli, and
  Amer{-}Yahia}{Jr. et~al\mbox{.}}{2021}]%
        {DBLP:journals/isci/RodriguesGSBA21}
\bibfield{author}{\bibinfo{person}{Jos{\'{e}} F.~Rodrigues Jr.},
  \bibinfo{person}{Marco~Antonio Gutierrez}, \bibinfo{person}{Gabriel Spadon},
  \bibinfo{person}{Bruno Brandoli}, {and} \bibinfo{person}{Sihem
  Amer{-}Yahia}.} \bibinfo{year}{2021}\natexlab{}.
\newblock \showarticletitle{LIG-Doctor: Efficient patient trajectory prediction
  using bidirectional minimal gated-recurrent networks}.
\newblock \bibinfo{journal}{\emph{Inf. Sci.}}  \bibinfo{volume}{545}
  (\bibinfo{year}{2021}), \bibinfo{pages}{813--827}.
\newblock


\bibitem[\protect\citeauthoryear{Kang, Emmons, Abuzaid, Bailis, and
  Zaharia}{Kang et~al\mbox{.}}{2017}]%
        {DBLP:journals/pvldb/KangEABZ17}
\bibfield{author}{\bibinfo{person}{Daniel Kang}, \bibinfo{person}{John Emmons},
  \bibinfo{person}{Firas Abuzaid}, \bibinfo{person}{Peter Bailis}, {and}
  \bibinfo{person}{Matei Zaharia}.} \bibinfo{year}{2017}\natexlab{}.
\newblock \showarticletitle{NoScope: Optimizing Deep CNN-Based Queries over
  Video Streams at Scale}.
\newblock \bibinfo{journal}{\emph{Proc. {VLDB} Endow.}} \bibinfo{volume}{10},
  \bibinfo{number}{11} (\bibinfo{year}{2017}), \bibinfo{pages}{1586--1597}.
\newblock


\bibitem[\protect\citeauthoryear{Kang, Gan, Bailis, Hashimoto, and
  Zaharia}{Kang et~al\mbox{.}}{2020}]%
        {DBLP:journals/pvldb/KangGBHZ20}
\bibfield{author}{\bibinfo{person}{Daniel Kang}, \bibinfo{person}{Edward Gan},
  \bibinfo{person}{Peter Bailis}, \bibinfo{person}{Tatsunori Hashimoto}, {and}
  \bibinfo{person}{Matei Zaharia}.} \bibinfo{year}{2020}\natexlab{}.
\newblock \showarticletitle{Approximate Selection with Guarantees using
  Proxies}.
\newblock \bibinfo{journal}{\emph{Proc. {VLDB} Endow.}} \bibinfo{volume}{13},
  \bibinfo{number}{11} (\bibinfo{year}{2020}), \bibinfo{pages}{1990--2003}.
\newblock


\bibitem[\protect\citeauthoryear{Karanasos, Interlandi, Psallidas, Sen, Park,
  Popivanov, Xin, Nakandala, Krishnan, Weimer, Yu, Ramakrishnan, and
  Curino}{Karanasos et~al\mbox{.}}{2020}]%
        {DBLP:conf/cidr/KaranasosIPSPPX20}
\bibfield{author}{\bibinfo{person}{Konstantinos Karanasos},
  \bibinfo{person}{Matteo Interlandi}, \bibinfo{person}{Fotis Psallidas},
  \bibinfo{person}{Rathijit Sen}, \bibinfo{person}{Kwanghyun Park},
  \bibinfo{person}{Ivan Popivanov}, \bibinfo{person}{Doris Xin},
  \bibinfo{person}{Supun Nakandala}, \bibinfo{person}{Subru Krishnan},
  \bibinfo{person}{Markus Weimer}, \bibinfo{person}{Yuan Yu},
  \bibinfo{person}{Raghu Ramakrishnan}, {and} \bibinfo{person}{Carlo Curino}.}
  \bibinfo{year}{2020}\natexlab{}.
\newblock \showarticletitle{Extending Relational Query Processing with {ML}
  Inference}. In \bibinfo{booktitle}{\emph{10th Conference on Innovative Data
  Systems Research, {CIDR} 2020, Amsterdam, The Netherlands, January 12-15,
  2020, Online Proceedings}}. \bibinfo{publisher}{www.cidrdb.org}.
\newblock


\bibitem[\protect\citeauthoryear{Lahitani, Permanasari, and Setiawan}{Lahitani
  et~al\mbox{.}}{2016}]%
        {7577578}
\bibfield{author}{\bibinfo{person}{Alfirna~Rizqi Lahitani},
  \bibinfo{person}{Adhistya~Erna Permanasari}, {and}
  \bibinfo{person}{Noor~Akhmad Setiawan}.} \bibinfo{year}{2016}\natexlab{}.
\newblock \showarticletitle{Cosine similarity to determine similarity measure:
  Study case in online essay assessment}. In \bibinfo{booktitle}{\emph{2016 4th
  International Conference on Cyber and IT Service Management}}.
  \bibinfo{pages}{1--6}.
\newblock
\urldef\tempurl%
\url{https://doi.org/10.1109/CITSM.2016.7577578}
\showDOI{\tempurl}


\bibitem[\protect\citeauthoryear{Lai, Han, Liu, Zhang, Lo, and Kao}{Lai
  et~al\mbox{.}}{2021}]%
        {10.1145/3448016.3452786}
\bibfield{author}{\bibinfo{person}{Ziliang Lai}, \bibinfo{person}{Chenxia Han},
  \bibinfo{person}{Chris Liu}, \bibinfo{person}{Pengfei Zhang},
  \bibinfo{person}{Eric Lo}, {and} \bibinfo{person}{Ben Kao}.}
  \bibinfo{year}{2021}\natexlab{}.
\newblock \showarticletitle{Top-K Deep Video Analytics: A Probabilistic
  Approach}. In \bibinfo{booktitle}{\emph{Proceedings of the 2021 International
  Conference on Management of Data}} (Virtual Event, China)
  \emph{(\bibinfo{series}{SIGMOD '21})}. \bibinfo{publisher}{Association for
  Computing Machinery}, \bibinfo{address}{New York, NY, USA},
  \bibinfo{pages}{1037–1050}.
\newblock
\showISBNx{9781450383431}
\urldef\tempurl%
\url{https://doi.org/10.1145/3448016.3452786}
\showDOI{\tempurl}


\bibitem[\protect\citeauthoryear{Li and Li}{Li and Li}{2018}]%
        {DBLP:journals/dase/LiL18}
\bibfield{author}{\bibinfo{person}{Kaiyu Li} {and} \bibinfo{person}{Guoliang
  Li}.} \bibinfo{year}{2018}\natexlab{}.
\newblock \showarticletitle{Approximate Query Processing: What is New and Where
  to Go? - {A} Survey on Approximate Query Processing}.
\newblock \bibinfo{journal}{\emph{Data Sci. Eng.}} \bibinfo{volume}{3},
  \bibinfo{number}{4} (\bibinfo{year}{2018}), \bibinfo{pages}{379--397}.
\newblock


\bibitem[\protect\citeauthoryear{Li, Rao, Solares, Hassaine, Ramakrishnan,
  Canoy, Zhu, Rahimi, and Salimi-Khorshidi}{Li et~al\mbox{.}}{2020}]%
        {li2020behrt}
\bibfield{author}{\bibinfo{person}{Yikuan Li}, \bibinfo{person}{Shishir Rao},
  \bibinfo{person}{Jos{\'e}Roberto~Ayala Solares}, \bibinfo{person}{Abdelaali
  Hassaine}, \bibinfo{person}{Rema Ramakrishnan}, \bibinfo{person}{Dexter
  Canoy}, \bibinfo{person}{Yajie Zhu}, \bibinfo{person}{Kazem Rahimi}, {and}
  \bibinfo{person}{Gholamreza Salimi-Khorshidi}.}
  \bibinfo{year}{2020}\natexlab{}.
\newblock \showarticletitle{BEHRT: Transformer for Electronic Health Records}.
\newblock \bibinfo{journal}{\emph{Scientific Reports}} \bibinfo{volume}{10},
  \bibinfo{number}{1} (\bibinfo{year}{2020}), \bibinfo{pages}{7155}.
\newblock


\bibitem[\protect\citeauthoryear{Lin, Maire, Belongie, Hays, Perona, Ramanan,
  Doll{\'a}r, and Zitnick}{Lin et~al\mbox{.}}{2014}]%
        {coco2014}
\bibfield{author}{\bibinfo{person}{Tsung-Yi Lin}, \bibinfo{person}{Michael
  Maire}, \bibinfo{person}{Serge Belongie}, \bibinfo{person}{James Hays},
  \bibinfo{person}{Pietro Perona}, \bibinfo{person}{Deva Ramanan},
  \bibinfo{person}{Piotr Doll{\'a}r}, {and} \bibinfo{person}{C.~Lawrence
  Zitnick}.} \bibinfo{year}{2014}\natexlab{}.
\newblock \showarticletitle{Microsoft COCO: Common Objects in Context}. In
  \bibinfo{booktitle}{\emph{Computer Vision -- ECCV 2014}},
  \bibfield{editor}{\bibinfo{person}{David Fleet}, \bibinfo{person}{Tomas
  Pajdla}, \bibinfo{person}{Bernt Schiele}, {and} \bibinfo{person}{Tinne
  Tuytelaars}} (Eds.). \bibinfo{publisher}{Springer International Publishing},
  \bibinfo{address}{Cham}, \bibinfo{pages}{740--755}.
\newblock
\showISBNx{978-3-319-10602-1}


\bibitem[\protect\citeauthoryear{Love}{Love}{1994}]%
        {love1994cancer}
\bibfield{author}{\bibinfo{person}{R~R Love}.} \bibinfo{year}{1994}\natexlab{}.
\newblock \showarticletitle{Cancer prevention through health promotion.
  Defining the role of physicians in public health.}
\newblock \bibinfo{journal}{\emph{Cancer}} \bibinfo{volume}{74},
  \bibinfo{number}{4 Suppl} (\bibinfo{date}{Aug} \bibinfo{year}{1994}),
  \bibinfo{pages}{1418--1422}.
\newblock
\showISSN{0008-543X (Print); 0008-543X (Linking)}
\urldef\tempurl%
\url{https://doi.org/10.1002/1097-0142(19940815)74:4+<1418::aid-cncr2820741604>3.0.co;2-5}
\showDOI{\tempurl}


\bibitem[\protect\citeauthoryear{Lu, Chowdhery, Kandula, and Chaudhuri}{Lu
  et~al\mbox{.}}{2018}]%
        {DBLP:conf/sigmod/LuCKC18}
\bibfield{author}{\bibinfo{person}{Yao Lu}, \bibinfo{person}{Aakanksha
  Chowdhery}, \bibinfo{person}{Srikanth Kandula}, {and}
  \bibinfo{person}{Surajit Chaudhuri}.} \bibinfo{year}{2018}\natexlab{}.
\newblock \showarticletitle{Accelerating Machine Learning Inference with
  Probabilistic Predicates}. In \bibinfo{booktitle}{\emph{Proceedings of the
  2018 International Conference on Management of Data, {SIGMOD} Conference
  2018, Houston, TX, USA, June 10-15, 2018}}. \bibinfo{pages}{1493--1508}.
\newblock


\bibitem[\protect\citeauthoryear{Mingdong, Lixin, Derong, Yue, and
  Tiezheng}{Mingdong et~al\mbox{.}}{2018}]%
        {mingdong2018methods}
\bibfield{author}{\bibinfo{person}{ZHU Mingdong}, \bibinfo{person}{XU Lixin},
  \bibinfo{person}{SHEN Derong}, \bibinfo{person}{KOU Yue}, {and}
  \bibinfo{person}{NIE Tiezheng}.} \bibinfo{year}{2018}\natexlab{}.
\newblock \showarticletitle{Methods for Similarity Query on Uncertain Data with
  Cosine Similarity Constraints}.
\newblock \bibinfo{journal}{\emph{Journal of Frontiers of Computer Science \&
  Technology}} \bibinfo{volume}{12}, \bibinfo{number}{1}
  (\bibinfo{year}{2018}), \bibinfo{pages}{49}.
\newblock


\bibitem[\protect\citeauthoryear{Moore}{Moore}{1984}]%
        {moore1984possible}
\bibfield{author}{\bibinfo{person}{Robert~C Moore}.}
  \bibinfo{year}{1984}\natexlab{}.
\newblock \bibinfo{booktitle}{\emph{Possible-world semantics for autoepistemic
  logic}}.
\newblock \bibinfo{type}{{T}echnical {R}eport}. \bibinfo{institution}{SRI
  INTERNATIONAL MENLO PARK CA ARTIFICIAL INTELLIGENCE CENTER}.
\newblock


\bibitem[\protect\citeauthoryear{Nair, Sankaran, and Balakrishnan}{Nair
  et~al\mbox{.}}{2013}]%
        {Nair2013}
\bibfield{author}{\bibinfo{person}{N.~Unnikrishnan Nair},
  \bibinfo{person}{P.~G. Sankaran}, {and} \bibinfo{person}{N. Balakrishnan}.}
  \bibinfo{year}{2013}\natexlab{}.
\newblock \bibinfo{booktitle}{\emph{Stochastic Orders in Reliability}}.
\newblock \bibinfo{publisher}{Springer New York}, \bibinfo{address}{New York,
  NY}, \bibinfo{pages}{281--326}.
\newblock
\showISBNx{978-0-8176-8361-0}
\urldef\tempurl%
\url{https://doi.org/10.1007/978-0-8176-8361-0_8}
\showDOI{\tempurl}


\bibitem[\protect\citeauthoryear{Pollard, Johnson, Raffa, Celi, Mark, and
  Badawi}{Pollard et~al\mbox{.}}{2018}]%
        {eICU2018}
\bibfield{author}{\bibinfo{person}{Tom~J. Pollard}, \bibinfo{person}{Alistair
  E.~W. Johnson}, \bibinfo{person}{Jesse~D. Raffa}, \bibinfo{person}{Leo~A.
  Celi}, \bibinfo{person}{Roger~G. Mark}, {and} \bibinfo{person}{Omar Badawi}.}
  \bibinfo{year}{2018}\natexlab{}.
\newblock \showarticletitle{The eICU Collaborative Research Database, a freely
  available multi-center database for critical care research}.
\newblock \bibinfo{journal}{\emph{Scientific Data}} \bibinfo{volume}{5},
  \bibinfo{number}{1} (\bibinfo{year}{2018}), \bibinfo{pages}{180178}.
\newblock


\bibitem[\protect\citeauthoryear{Redmon, Divvala, Girshick, and Farhadi}{Redmon
  et~al\mbox{.}}{2016}]%
        {redmon2016look}
\bibfield{author}{\bibinfo{person}{Joseph Redmon}, \bibinfo{person}{Santosh
  Divvala}, \bibinfo{person}{Ross Girshick}, {and} \bibinfo{person}{Ali
  Farhadi}.} \bibinfo{year}{2016}\natexlab{}.
\newblock \bibinfo{title}{You Only Look Once: Unified, Real-Time Object
  Detection}.
\newblock
\newblock
\showeprint[arxiv]{1506.02640}~[cs.CV]


\bibitem[\protect\citeauthoryear{Redmon and Farhadi}{Redmon and
  Farhadi}{2016}]%
        {redmon2016yolo9000}
\bibfield{author}{\bibinfo{person}{Joseph Redmon} {and} \bibinfo{person}{Ali
  Farhadi}.} \bibinfo{year}{2016}\natexlab{}.
\newblock \bibinfo{title}{YOLO9000: Better, Faster, Stronger}.
\newblock
\newblock
\showeprint[arxiv]{1612.08242}~[cs.CV]


\bibitem[\protect\citeauthoryear{Rodrigues, P{\'{e}}pin, Goeuriot, and
  Amer{-}Yahia}{Rodrigues et~al\mbox{.}}{2020}]%
        {DBLP:conf/cikm/RodriguesPGA20}
\bibfield{author}{\bibinfo{person}{Jose~F. Rodrigues},
  \bibinfo{person}{Jean~Louis P{\'{e}}pin}, \bibinfo{person}{Lorraine
  Goeuriot}, {and} \bibinfo{person}{Sihem Amer{-}Yahia}.}
  \bibinfo{year}{2020}\natexlab{}.
\newblock \showarticletitle{An Extensive Investigation of Machine Learning
  Techniques for Sleep Apnea Screening}. In \bibinfo{booktitle}{\emph{{CIKM}
  '20: The 29th {ACM} International Conference on Information and Knowledge
  Management, Virtual Event, Ireland, October 19-23, 2020}}.
  \bibinfo{pages}{2709--2716}.
\newblock


\bibitem[\protect\citeauthoryear{Rodrigues-Jr, Gutierrez, Spadon, Brandoli, and
  Amer-Yahia}{Rodrigues-Jr et~al\mbox{.}}{2021}]%
        {jose2021lig}
\bibfield{author}{\bibinfo{person}{Jose~F. Rodrigues-Jr},
  \bibinfo{person}{Marco~A. Gutierrez}, \bibinfo{person}{Gabriel Spadon},
  \bibinfo{person}{Bruno Brandoli}, {and} \bibinfo{person}{Sihem Amer-Yahia}.}
  \bibinfo{year}{2021}\natexlab{}.
\newblock \showarticletitle{LIG-Doctor: Efficient patient trajectory prediction
  using bidirectional minimal gated-recurrent networks}.
\newblock \bibinfo{journal}{\emph{Information Sciences}}  \bibinfo{volume}{545}
  (\bibinfo{year}{2021}), \bibinfo{pages}{813--827}.
\newblock
\showISSN{0020-0255}
\urldef\tempurl%
\url{https://doi.org/10.1016/j.ins.2020.09.024}
\showDOI{\tempurl}


\bibitem[\protect\citeauthoryear{Tai-Seale, McGuire, and Zhang}{Tai-Seale
  et~al\mbox{.}}{2007}]%
        {tai2007time}
\bibfield{author}{\bibinfo{person}{Ming Tai-Seale}, \bibinfo{person}{Thomas~G
  McGuire}, {and} \bibinfo{person}{Weimin Zhang}.}
  \bibinfo{year}{2007}\natexlab{}.
\newblock \showarticletitle{Time allocation in primary care office visits.}
\newblock \bibinfo{journal}{\emph{Health Serv Res}} \bibinfo{volume}{42},
  \bibinfo{number}{5} (\bibinfo{date}{Oct} \bibinfo{year}{2007}),
  \bibinfo{pages}{1871--1894}.
\newblock


\bibitem[\protect\citeauthoryear{Theobald, Weikum, and Schenkel}{Theobald
  et~al\mbox{.}}{2004}]%
        {DBLP:conf/vldb/TheobaldWS04}
\bibfield{author}{\bibinfo{person}{Martin Theobald}, \bibinfo{person}{Gerhard
  Weikum}, {and} \bibinfo{person}{Ralf Schenkel}.}
  \bibinfo{year}{2004}\natexlab{}.
\newblock \showarticletitle{Top-k Query Evaluation with Probabilistic
  Guarantees}. In \bibinfo{booktitle}{\emph{(e)Proceedings of the Thirtieth
  International Conference on Very Large Data Bases, {VLDB} 2004, Toronto,
  Canada, August 31 - September 3 2004}},
  \bibfield{editor}{\bibinfo{person}{Mario~A. Nascimento},
  \bibinfo{person}{M.~Tamer {\"{O}}zsu}, \bibinfo{person}{Donald Kossmann},
  \bibinfo{person}{Ren{\'{e}}e~J. Miller}, \bibinfo{person}{Jos{\'{e}}~A.
  Blakeley}, {and} \bibinfo{person}{K.~Bernhard Schiefer}} (Eds.).
  \bibinfo{publisher}{Morgan Kaufmann}, \bibinfo{pages}{648--659}.
\newblock


\bibitem[\protect\citeauthoryear{Vershynin}{Vershynin}{2018}]%
        {vershynin_2018}
\bibfield{author}{\bibinfo{person}{Roman Vershynin}.}
  \bibinfo{year}{2018}\natexlab{}.
\newblock \bibinfo{booktitle}{\emph{High-Dimensional Probability: An
  Introduction with Applications in Data Science}}.
\newblock \bibinfo{publisher}{Cambridge University Press}.
\newblock
\urldef\tempurl%
\url{https://doi.org/10.1017/9781108231596}
\showDOI{\tempurl}


\bibitem[\protect\citeauthoryear{Yang, Wang, Huang, Lu, Li, and Wang}{Yang
  et~al\mbox{.}}{2022}]%
        {10.14778/3547305.3547310}
\bibfield{author}{\bibinfo{person}{Zhihui Yang}, \bibinfo{person}{Zuozhi Wang},
  \bibinfo{person}{Yicong Huang}, \bibinfo{person}{Yao Lu},
  \bibinfo{person}{Chen Li}, {and} \bibinfo{person}{X.~Sean Wang}.}
  \bibinfo{year}{2022}\natexlab{}.
\newblock \showarticletitle{Optimizing Machine Learning Inference Queries with
  Correlative Proxy Models}.
\newblock \bibinfo{journal}{\emph{Proc. VLDB Endow.}} \bibinfo{volume}{15},
  \bibinfo{number}{10} (\bibinfo{date}{jun} \bibinfo{year}{2022}),
  \bibinfo{pages}{2032–2044}.
\newblock
\showISSN{2150-8097}
\urldef\tempurl%
\url{https://doi.org/10.14778/3547305.3547310}
\showDOI{\tempurl}


\bibitem[\protect\citeauthoryear{Zheng, Yang, and Tian}{Zheng
  et~al\mbox{.}}{2018}]%
        {zheng2018SIFT}
\bibfield{author}{\bibinfo{person}{Liang Zheng}, \bibinfo{person}{Yi Yang},
  {and} \bibinfo{person}{Qi Tian}.} \bibinfo{year}{2018}\natexlab{}.
\newblock \showarticletitle{SIFT Meets CNN: A Decade Survey of Instance
  Retrieval}.
\newblock \bibinfo{journal}{\emph{IEEE Transactions on Pattern Analysis and
  Machine Intelligence}} \bibinfo{volume}{40}, \bibinfo{number}{5}
  (\bibinfo{year}{2018}), \bibinfo{pages}{1224--1244}.
\newblock
\urldef\tempurl%
\url{https://doi.org/10.1109/TPAMI.2017.2709749}
\showDOI{\tempurl}


\bibitem[\protect\citeauthoryear{Zhou, Chai, Li, and SUN}{Zhou
  et~al\mbox{.}}{2020}]%
        {9094012}
\bibfield{author}{\bibinfo{person}{Xuanhe Zhou}, \bibinfo{person}{Chengliang
  Chai}, \bibinfo{person}{Guoliang Li}, {and} \bibinfo{person}{JI SUN}.}
  \bibinfo{year}{2020}\natexlab{}.
\newblock \showarticletitle{Database Meets Artificial Intelligence: A Survey}.
\newblock \bibinfo{journal}{\emph{IEEE Transactions on Knowledge and Data
  Engineering}} \bibinfo{volume}{1}, \bibinfo{number}{1}
  (\bibinfo{year}{2020}), \bibinfo{pages}{1--18}.
\newblock


\end{thebibliography}

\appendix
\section{Appendix}
\label{sec:appendix}

\subsection{Proof of Lemma \ref{lemma:monoto_replace}}

In order to prove Lemma \ref{lemma:monoto_replace}, we need to introduce the notion of the \textit{usual stochastic order} \cite{Nair2013}
, $\leq_{st}$.
\begin{definition} [Usual Stochastic Order]
\label{def:usual_stocha_order}
Let $X$ and $Y$ be two random variables such that
    $Pr[X \geq x] \leq Pr[Y \geq x], \, \forall x\in (-\infty, \infty)$.
Then $X$ is said to be smaller than $Y$ in the usual stochastic order (denoted by $X \leq_{st} Y$).
\end{definition}
One important property for usual stochastic order is as follows,
\begin{proposition} 
\label{prop:st_expectation}
Let $X$ and $Y$ be two random variables. If $X \leq_{st} Y$, then 
    $\mathbb{E}[\psi(X)] \leq \mathbb{E}[\psi(Y)]$
for all \textit{increasing function} $\psi$ for which the expectations exist.
\end{proposition}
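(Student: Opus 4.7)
The plan is to establish this classical result via the quantile coupling (a Strassen-type argument), which reduces the expectation inequality to a pointwise almost-sure inequality. First I would rewrite the stochastic-order condition in terms of cumulative distribution functions: $X \leq_{st} Y$ is equivalent to $F_X(t) \geq F_Y(t)$ for every $t \in \mathbb{R}$, where $F_X, F_Y$ denote the CDFs of $X, Y$. This follows immediately from $\Pr[X \geq t] = 1 - F_X(t^-)$ and the right-continuity of CDFs.

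Next I would construct a common probability space carrying random variables $\tilde X, \tilde Y$ with $\tilde X \stackrel{d}{=} X$, $\tilde Y \stackrel{d}{=} Y$, and $\tilde X \leq \tilde Y$ almost surely. Define the generalized inverse $F^{-1}(u) := \inf\{t : F(t) \geq u\}$ for $u \in (0,1)$, draw a single $U \sim \text{Unif}(0,1)$, and set $\tilde X := F_X^{-1}(U)$ and $\tilde Y := F_Y^{-1}(U)$. A standard check gives $\tilde X \stackrel{d}{=} X$ and $\tilde Y \stackrel{d}{=} Y$. The key observation is that $F_X(t) \geq F_Y(t)$ for all $t$ implies $F_X^{-1}(u) \leq F_Y^{-1}(u)$ for every $u \in (0,1)$: if $F_Y$ reaches level $u$ at some point $t_0$, then $F_X$ has already reached $u$ at or before $t_0$, so the infimum defining $F_X^{-1}(u)$ is no larger. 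Hence $\tilde X \leq \tilde Y$ almost surely.

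Finally, since $\psi$ is increasing, $\psi(\tilde X) \leq \psi(\tilde Y)$ pointwise on the probability space. Taking expectations (which are well-defined by the hypothesis that both $\mathbb{E}[\psi(X)]$ and $\mathbb{E}[\psi(Y)]$ exist), we obtain $\mathbb{E}[\psi(\tilde X)] \leq \mathbb{E}[\psi(\tilde Y)]$. Distributional equivalence preserves expectations of measurable functions, so $\mathbb{E}[\psi(X)] = \mathbb{E}[\psi(\tilde X)] \leq \mathbb{E}[\psi(\tilde Y)] = \mathbb{E}[\psi(Y)]$, as claimed.

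The main obstacle is the rigorous handling of the generalized inverse for distributions with atoms, which is precisely the regime of interest here (the $N_S$ are integer-valued Poisson-binomial variables). An alternative route that avoids coupling altogether is a direct tail-integration argument: for increasing $\psi$ and any real $a$ below the support, $\mathbb{E}[\psi(X)] - \psi(a) = \int_a^{\infty} \Pr[\psi(X) > t]\, dt$, and since $\psi$ is increasing the event $\{\psi(X) > t\}$ coincides with $\{X > \psi^{-1}(t)\}$ (using the generalized inverse of $\psi$), so the stochastic-order inequality transfers under the integral. Either approach works; the coupling is conceptually cleaner and is the form most readily invoked by Lemma \ref{lemma:monoto_replace} in the main text.
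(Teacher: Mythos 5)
Your proof is correct, and it is a genuinely different route from the one the paper points to: the paper does not actually prove this proposition but defers to the literature (\cite{Nair2013}), noting only that the standard argument there ``relies on constructing upper sets on the domain of $X$ and $Y$'' --- i.e., one first shows $\Pr[X\in U]\le\Pr[Y\in U]$ for every upper set $U$ and then passes to increasing $\psi$ via a layer-cake representation. Your quantile coupling $\tilde X=F_X^{-1}(U)$, $\tilde Y=F_Y^{-1}(U)$ instead reduces the expectation inequality to the pointwise inequality $\tilde X\le\tilde Y$ a.s., and all the steps check out: the equivalence $\Pr[X\ge t]\le\Pr[Y\ge t]\ \forall t \iff F_X\ge F_Y$ pointwise is handled correctly via left limits and right-continuity; the monotonicity $F_X^{-1}(u)\le F_Y^{-1}(u)$ follows because $\{t:F_X(t)\ge u\}\supseteq\{t:F_Y(t)\ge u\}$; atoms cause no trouble for the generalized inverse (indeed the coupling is exactly how one usually treats the integer-valued case); and measurability of increasing $\psi$ plus existence of the expectations justifies the final comparison. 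What each approach buys: the coupling is self-contained, elementary, and arguably the cleanest proof for real-valued random variables, which is all the paper needs for its Poisson-binomial counts $N_S$; the upper-set argument cited by the paper is the one that generalizes to stochastic orders on partially ordered spaces beyond $\mathbb{R}$. Your alternative tail-integration sketch is also fine, with the one caveat that for non-strictly-increasing or discontinuous $\psi$ the set $\{x:\psi(x)>t\}$ is an upper half-line that may be open or closed, so one should argue $\Pr[X>a]\le\Pr[Y>a]$ separately (it follows from the $\ge$ version by letting the threshold decrease to $a$); but since you present the coupling as the primary argument, the proof stands as written.
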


The proof of Proposition \ref{prop:st_expectation} relies on constructing upper sets on the domain of $X$ and $Y$, which is beyond the scope of this paper. We refer interested readers to the literature \cite{Nair2013} for more details.

Now, we can prove Lemma \ref{lemma:monoto_replace}.
\begin{proof}
Given $\gamma$, we first show $\pos(S, M, \gamma) \leq \pos(S', M, \gamma)$ and then $\mathbb{E}[\measurecomp(S)] \leq \mathbb{E}[\measurecomp(S')]$.

Recall $\pos(S, M, \gamma) = Pr[\measure(S) \geq \gamma]$. A sufficient condition for $\pos(S, M, \gamma) \leq \pos(S', M, \gamma)$ is $\precis(S) \leq_{st} \precis(S')$ \textbf{and} $\recall(S) \leq_{st} \recall(S')$.
We first discuss $\measure=\precis$, and then $\measure=\recall$. We abbreviate $\phi(x_i)$, $\phi(x_j)$ as $\phi_i$, $\phi_j$ for brevity.

When $\measure = \precis$, define random variables $X=N_{S \setminus \{x_i\}}$, $Y=N_S$, and $Z=N_{S'}$. By equation \ref{eq:precis_prob}, we have $Pr[\precis(S)\geq \gamma]=Pr[Y \geq \lceil |S|\gamma \rceil]$. The following relation holds,
\begin{small}
\begin{equation}
\label{eq:prob_expand_trick}
\begin{split}
Pr[Y \geq \lceil |S|\gamma \rceil]
    &= Pr[X \geq \lceil |S|\gamma \rceil](1-\phi_i) + Pr[X \geq \lceil |S|\gamma \rceil-1]\phi_i\\
    &= \phi_i \cdot Pr[X=\lceil |S|\gamma \rceil-1] + Pr[X \geq \lceil |S|\gamma \rceil]
\end{split}
\end{equation}
\end{small}
where the last step is due to $Pr[X \geq \lceil |S|\gamma \rceil-1] - Pr[X \geq \lceil |S|\gamma \rceil] = Pr[X=\lceil |S|\gamma \rceil-1]$. Similarly, we have
\begin{small}
\begin{equation}
    Pr[Z \geq \lceil |S|\gamma \rceil] = \phi_j \cdot Pr[X=\lceil |S|\gamma \rceil-1] + Pr[X \geq \lceil |S|\gamma \rceil]
\end{equation}
\end{small}
Since $\phi_i \leq \phi_j$, we have $Pr[Y \geq \lceil s\gamma \rceil] \leq Pr[Z \geq \lceil s\gamma \rceil]$ for $\gamma \in \mathbb{R}$, and therefore $Pr[\precis(S)\geq \gamma] \leq Pr[\precis(S')\geq \gamma]$ for $\gamma \in \mathbb{R}$. By definition, we can conclude
$Y \leq_{st} Z$ and $\precis(S) \leq_{st} \precis(S')$. 

Next, we show $\recall(S) \leq_{st} \recall(S')$. 
When $\gamma = 0$, we have $Pr[\recall(S)\geq 0] = 1 \leq Pr[\recall(S')\geq 0] = 1$.
When $\gamma \in \mathbb{R} \setminus \{0\}$, denote random variables $X_C=N_{(D \setminus S) \cup \{x_i\}}$, $Y_C=N_{D \setminus S}$, and $Z_C=N_{D \setminus S'}$. By equation \ref{eq:recall_prob}, we have,
\begin{small}
\begin{equation}
\begin{split}
    Pr[\recall(S) \geq \gamma] &= 
    \sum_{k=0}^{|S|} Pr[Y=k]\cdot Pr[Y_C \leq \lfloor \frac{k(1-\gamma)}{\gamma} \rfloor],\\
    Pr[\recall(S') \geq \gamma] &= 
    \sum_{k=0}^{|S|} Pr[Z=k]\cdot Pr[Z_C \leq \lfloor \frac{k(1-\gamma)}{\gamma} \rfloor].\\
\end{split}
\end{equation}
\end{small}
Similar to Eq. \ref{eq:prob_expand_trick}, we have
\begin{small}
\begin{equation}
\begin{split}
Pr[Y_C \leq \lfloor \frac{k(1-\gamma)}{\gamma} \rfloor]
    &= \phi_i \cdot Pr[X_C = \lfloor \frac{k(1-\gamma)}{\gamma} \rfloor+1] + Pr[X_C \leq \lfloor \frac{k(1-\gamma)}{\gamma} \rfloor]\\
Pr[Z_C \leq \lfloor \frac{k(1-\gamma)}{\gamma} \rfloor]
    &= \phi_j \cdot Pr[X_C = \lfloor \frac{k(1-\gamma)}{\gamma} \rfloor+1] + Pr[X_C \leq \lfloor \frac{k(1-\gamma)}{\gamma} \rfloor]\\
\end{split}
\end{equation}
\end{small}
Since $\phi_i \leq \phi_j$, we conclude $Pr[Y_C \leq \lfloor \frac{k(1-\gamma)}{\gamma} \rfloor]  \leq Pr[Z_C \leq \lfloor \frac{k(1-\gamma)}{\gamma} \rfloor]$ for any $0 \leq k \leq |S|$. 
Denote $\psi(x) := Pr[Z_C \leq \lfloor \frac{x(1-\gamma)}{\gamma} \rfloor]$, we have,
\begin{small}
\begin{equation}
    Pr[\recall(S) \geq \gamma] \leq 
    \sum_{k=0}^{|S|} Pr[Y=k]\cdot \psi(k) =\mathbb{E}[\psi(Y)] .
\end{equation}
\end{small}
Because $Pr[Z_C \leq \lfloor \frac{x(1-\gamma)}{\gamma} \rfloor] = Pr[\frac{\gamma}{1-\gamma} \cdot Z_C \leq x]$, which is the cdf for the random variable $\frac{\gamma}{1-\gamma} Z_C$ evaluated at $x$, we know $\psi(x)$ is an increasing function. By Proposition \ref{prop:st_expectation} and the result $Y \leq_{st} Z$ which we have proven above, we have
    $Pr[\recall(S) \geq \gamma] \leq \mathbb{E}[\psi(Y)] \leq \mathbb{E}[\psi(Z)] = Pr[\recall(S') \geq \gamma]$
for $\gamma \in \mathbb{R} \setminus \{0\}$. 

By definition, we conclude $Pr[\recall(S) \geq \gamma] \leq Pr[\recall(S') \geq \gamma]$ for $\gamma \in \mathbb{R}$ and therefore $\recall(S) \leq_{st} \recall(S')$.
Because $\precis(S) \leq_{st} \precis(S')$ and $\recall(S) \leq_{st} \recall(S')$, we have $\pos(S, M, \gamma) \leq \pos(S', M, \gamma)$ for any given $\gamma$. 

Next, we show $\mathbb{E}[\measurecomp(S)] \leq \mathbb{E}[\measurecomp(S')]$.
Since $\precis(S) \leq_{st} \precis(S')$ and $\recall(S) \leq_{st} \recall(S')$, by Proposition \ref{prop:st_expectation}, we have
$\mathbb{E}[\psi(\precis(S))] \leq \mathbb{E}[\psi(\precis(S'))]$ and $\mathbb{E}[\psi(\recall(S))] \leq \mathbb{E}[\psi(\recall(S'))]$
where $\psi(x)$ is an increasing function.
Let $\psi(x) := x$, we can conclude $\mathbb{E}[\measurecomp(S)] \leq \mathbb{E}[\measurecomp(S')]$ for both PT and RT queries.
\end{proof}

\subsection{Proof of Theorem \ref{theorem:topk}}

\begin{proof}

%
When $k=0$, the case is trivial.
When $1\leq k \leq |D|$,
consider $S \subseteq D$ of $|S|=k$. For $1\leq i \leq k$, let $x_i$ and $x'_i$ denote the $i$-th object of the smallest proxy distance from $\topkproxy{k}$ and $S$, separately. Since $\topkproxy{k}$ is the collection of $k$ nearest proxy neighbors, we have $dist^P(x_i) \leq dist^P(x'_i)$ and, therefore, $\phi(x_i) \geq \phi(x'_i)$. By replacing each $x'_i$ by $x_i$ for $1 \leq i \leq k$, we construct $\topkproxy{k}$ from $S$. After each replacement operation, the success probability and expected CR monotonically increase according to Lemma \ref{lemma:monoto_replace}. As a result, we have $\pos(S, M, \gamma) \leq \pos(\topkproxy{k}, M, \gamma)$ and $\mathbb{E}[\measurecomp(S)] \leq \mathbb{E}[\measurecomp(\topkproxy{k})]$ for any $S \subseteq D$ of $|S|=k$.

\end{proof}

\subsection{Proof of Lemma \ref{lemma:monoto_append}}

\begin{proof}





Given $\gamma$, we first prove $\pos(\topphi{k}, M_r, \gamma) \leq \pos(\topphi{k+1}, M_r, \gamma)$ by showing $\recall(\topkproxy{k}) \leq_{st} \recall(\topkproxy{k+1})$, then
$\mathbb{E}[\recall(\topphi{k})] \leq \mathbb{E}[\recall(\topphi{k+1})]$. The proof is similar to the proof of Lemma \ref{lemma:monoto_replace}, and we only present critical steps for brevity. 

We first show $\recall(\topkproxy{k}) \leq_{st} \recall(\topkproxy{k+1})$.
When $\gamma = 0$, we have $Pr[\recall(\topkproxy{k})\geq 0] = 1 \leq Pr[\recall(\topkproxy{k+1})\geq 0] = 1$.
When $\gamma \in \mathbb{R} \setminus \{0\}$, denote $X = N_{\topkproxy{k}}$, $X_C = N_{D\setminus \topkproxy{k}}$, $Y = N_{\topkproxy{k+1}}$, $Y_C = N_{D\setminus \topkproxy{k+1}}$.
We have,
\begin{small}
\begin{equation}
\begin{split}
    Pr[\recall(\topkproxy{k}) \geq \gamma] &= 
    \sum_{j=0}^{k} Pr[X=j]\cdot Pr[X_C \leq \lfloor \frac{j(1-\gamma)}{\gamma} \rfloor],\\
    Pr[\recall(\topkproxy{k+1}) \geq \gamma] &=
    \sum_{j=0}^{k+1} Pr[Y=j]\cdot Pr[Y_C \leq \lfloor \frac{j(1-\gamma)}{\gamma} \rfloor].\\
\end{split}
\end{equation}
\end{small}
For $x' \in \topkproxy{k+1} \setminus \topkproxy{k}$, similar to Eq. \ref{eq:prob_expand_trick}, we have, 
\begin{small}
\begin{equation}
\begin{split}
Pr[Y_C \leq \lfloor \frac{j(1-\gamma)}{\gamma} \rfloor]
    &= \phi(x') \cdot Pr[X_C = \lfloor \frac{j(1-\gamma)}{\gamma} \rfloor+1] \\
    & \hspace{3mm} +Pr[X_C \leq \lfloor \frac{j(1-\gamma)}{\gamma} \rfloor] \\
    &\geq Pr[X_C \leq \lfloor \frac{j(1-\gamma)}{\gamma} \rfloor]\\
\end{split}
\end{equation}
\end{small}
Denote $\psi(x) := Pr[Y_C \leq \lfloor \frac{x(1-\gamma)}{\gamma} \rfloor]$, which is an increasing function, we have,
\begin{small}
\begin{equation}
    Pr[\recall(\topkproxy{k}) \geq \gamma] \leq 
    \sum_{j=0}^{k} Pr[X=j]\cdot \psi(j) =\mathbb{E}[\psi(X)] .
\end{equation}
\end{small}
It is easy to examine that $X \leq_{st} Y$.
By Proposition \ref{prop:st_expectation}, we have
$
    Pr[\recall(\topkproxy{k}) \geq \gamma] \leq \mathbb{E}[\psi(X)] \leq \mathbb{E}[\psi(Y)] = Pr[\recall(\topkproxy{k+1}) \geq \gamma]
$
for $\gamma \in \mathbb{R} \setminus \{0\}$. 
By definition, we conclude $Pr[\recall(\topkproxy{k}) \geq \gamma] \leq Pr[\recall(\topkproxy{k+1}) \geq \gamma]$ for $\gamma \in \mathbb{R}$ and therefore
$\recall(\topkproxy{k}) \leq_{st} \recall(\topkproxy{k+1})$. 

Denote $\psi(x):=x$. By Proposition \ref{prop:st_expectation} and the result $\recall(\topkproxy{k}) \leq_{st} \recall(\topkproxy{k+1})$, we have $\mathbb{E}[\recall(\topkproxy{k})] \leq \mathbb{E}[\recall(\topkproxy{k+1})]$, same as the proof of Lemma \ref{lemma:monoto_replace}.

\end{proof}

\subsection{Proof of Eq. \ref{eq:csa_s1} \& \ref{eq:csa_m1}}

\begin{proof}
We first give a lower bound for $\eou(s^*, m^*)$, upon which we develop Eq. \ref{eq:csa_s1} and \ref{eq:csa_m1} accordingly.

Recall $\mlow(s) = \lceil \frac{log(\delta)}{log(\prod_{i=0}^{c-1} \frac{|D|-s-i}{|D|-i})} \rceil$ and $\eou(s^*, m^*) = \eou(s^*, \mlow(s^*))$, for any given $c$ and $\delta$.  When $s$ is a constant, $\eou(s,m)$ monotonically increases as $m$ increases. 
Denote $\mlowlow(s) := \frac{log(\delta)}{log(\prod_{i=0}^{c-1} \frac{|D|-s-i}{|D|-i})} \leq \mlow(s)$ for $1\leq s \leq |D|-c$.
Clearly, $\eou(s^*, \mlow(s^*)) \geq \eou(s^*, \mlowlow(s^*))$. $\eou(s, \mlowlow(s))$ is a monotonically decreasing function of $s$ \footnote{This can be seen by showing gradients of $\eou(s, \mlowlow(s))$ w.r.t. $s$ are constantly less or equal to zero for $1 \leq s \leq |D|-c$. }, whose minimal value is taken on $s = |D|-c$. 
We conclude $\eou(s^*, m^*) \geq \eou(|D|-c, \mlowlow(|D|-c))$ and $\xi(s,m) \geq \frac{|D|-\eou(s,m)}{|D|-\eou(|D|-c, \mlowlow(|D|-c))}$ for any $s$, $m$ settings.

Next, we prove Eq. \ref{eq:csa_s1}. When $s=1$ and $m=\mlow(1)$, we have $\xi(1,\mlow(1)) \geq \frac{|D|-\eou(1,\mlow(1))}{|D|-\eou(|D|-c, \mlowlow(|D|-c))}\geq \frac{|D|-\eou(1,\mlowlow(1)+1)}{|D|-\eou(|D|-c, \mlowlow(|D|-c))}$ due to $\mlow(1) \leq \mlowlow(1)+1$. By taking logarithm on both sides and cancelling redundant terms, we have,

\eat{
Now, we can give the approximation rate if we use $s=1$,
\begin{equation}
\begin{split}
\gamma = \frac{|D|-c(1,m(1))}{|D|-OPT} &\geq \frac{|D|-c(1,m(1))}{|D|-c(D-K,\underline{m}(D-K))} \\
&= \frac{(1-\frac{1}{|D|})^{\lceil \frac{log(1-prob)}{log(\prod_{i=0}^{K-1} \frac{|D|-1-i}{|D|-i})} \rceil}}{ (\frac{K}{|D|})^{ \frac{log(1-prob)}{log(\prod_{i=0}^{K-1} \frac{K-i}{|D|-i})} }} \\
&\geq \frac{(1-\frac{1}{|D|})^{ \frac{log(1-prob)}{log(\frac{|D|-K}{|D|})} + 1}}{(\frac{K}{|D|})^{ \frac{log(1-prob)}{\sum_{i=0}^{K-1}log(\frac{K-i}{|D|-i})} }} \\
\end{split}
\end{equation}

The expression is quite complicated, and we would like to simplify it. We first take logarithm on both sides,}
\begin{small}
\begin{equation}
\label{eq:lb_csa_s1}
\begin{split}
log(\xi(1,\mlow(1))) &\geq log(1-\frac{1}{|D|}) -log(\delta) \Big( \frac{log(1-\frac{1}{|D|})}{log(\frac{|D|}{|D|-c})} + \frac{log(\frac{|D|}{c})}{\sum_{i=0}^{c-1}log(\frac{|D|-i}{c-i})} \Big)\\
\end{split}
\end{equation}
\end{small}
Denote $g(c) := \frac{log(1-\frac{1}{|D|})}{log(\frac{|D|}{|D|-c})}$ and $h(c) := \frac{log(\frac{|D|}{c})}{\sum_{i=0}^{c-1}log(\frac{|D|-i}{c-i})}$. 
Because $1-\frac{1}{x} \leq log(x) \leq x-1$ for $x>0$, we have
\begin{small}
\begin{equation}
\begin{split}
    g(c) &\geq \frac{log(1-\frac{1}{|D|})}{1 - \frac{|D|-c}{|D|}} = \frac{log(1-\frac{1}{|D|})|D|}{c} 
    \geq -\frac{|D|}{c(|D|-1)}\\
 h(c) &\geq \frac{(|D|-c)/|D|}{\sum_{i=0}^{c-1}\frac{|D|-i}{c-i}-1}
 \geq \frac{(|D|-c)/|D|}{(|D|-c)c}
 = \frac{1}{c|D|}
\end{split}
\end{equation}
\end{small}
Therefore,
\begin{small}
\begin{equation}
    log(\xi(1,\mlow(1))) \geq log(1-\frac{1}{|D|}) -log(\delta)\frac{1}{c} ( \frac{1}{|D|} - \frac{|D|}{|D|-1} )
\end{equation}
\end{small}
which equals to
$\xi(1, \mlow(1)) \geq \delta^{\frac{-1}{c}( \frac{1}{|D|} - \frac{|D|}{|D|-1})} \cdot (1-\frac{1}{|D|})$
, as known as Eq. \ref{eq:csa_s1}.

Next, we prove Eq. \ref{eq:csa_m1}.
For $m=1$, we first show $s=s_1:=\lceil \frac{-log(\delta)}{\sum_{i=0}^{c-1} \frac{1}{|D|-i}} \rceil$ ensures high success probability. When $m=1$, the failure rate is $1-\poi(|D|,s,1,c) = \prod_{i=0}^{c-1} \frac{|D|-s-i}{|D|-i}$. By taking logarithm, we have $\sum_{i=0}^{c-1} log(\frac{|D|-s-i}{|D|-i}) \leq s\cdot \sum_{i=0}^{c-1} \frac{-1}{|D|-i}$. Given $\delta$, we require $s\cdot \sum_{i=0}^{c-1} \frac{-1}{|D|-i} \leq log(\delta)$ to ensure the success probability being no less than $1-\delta$, which equals to requiring $s \geq s_1$.

When $m=1$ and $s=s_1$, we have $\xi(s_1,1) \geq \frac{|D|-\eou(s_1,1)}{|D|-\eou(|D|-c, \mlowlow(|D|-c))}$. By plugging the expression of $\eou(s_1,1)$ and relaxing $\sum_{i=0}^{c-1} \frac{1}{|D|-i} \geq \frac{c}{|D|}$, we have $|D|-\eou(s_1,1) \geq |D|-1+|D|\frac{log(\delta)}{c}$. Similarly, by taking logarithm on both sides, we have,
\begin{small}
\begin{equation}
\begin{split}
    log(\xi(s_1,1)) &\geq log(1-\frac{1}{|D|}+\frac{log(\delta)}{c}) - log(\delta)h(c)\\
    &\geq log(1-\frac{1}{|D|}+\frac{log(\delta)}{c}) -\log(\delta) \frac{1}{c|D|}\\
\end{split}
\end{equation}
\end{small}
which equals to 
$\xi(s_1, 1)  \geq  \delta ^{\frac{-1}{|D|c}} \cdot (1 - \frac{1}{|D|} + \frac{log(\delta)}{c})
$, as known as Eq. \ref{eq:csa_m1}.

\end{proof}

\end{document}